\newcommand{\C}{$\mathrm{C}$}
\newcommand{\Cstar}{$\mathrm{C}^\star$}
\newcommand{\CQED}{QED$_{\mathrm{C}}$}
\newcommand{\CQCDQED}{QCD$+$QED$_{\mathrm{C}}$}
\newcommand{\LQED}{QED$_{\mathrm{L}}$}
\newcommand{\U}{\mathrm{U}}
\newcommand{\SU}{\mathrm{SU}}
\renewcommand{\O}{\mathrm{O}}
\newcommand{\Z}{\mathbb{Z}}
\newcommand{\tr}{\mathrm{tr}\,}
\renewcommand{\vec}[1]{\mathbf{#1}}
\renewcommand{\overline}[1]{\bar{#1}}
\renewcommand{\Re}[0]{\operatorname{Re}}
\renewcommand{\Im}[0]{\operatorname{Im}}
\newcommand{\lrD}[1]{%
   \makebox[0pt][l]{%
      \raisebox{1.8ex}{%
         \scriptsize\hspace{.1ex}\text{%
            \scalebox{1.2}[.73]{$\leftrightarrow$}
         }
      }
   }D{}
}
\newcommand{\lrp}[1]{%
   \makebox[0pt][l]{%
      \raisebox{1.8ex}{%
         \scriptsize\hspace{-.3ex}\text{%
            \scalebox{1.1}[.73]{$\leftrightarrow$}
         }
      }
   }\partial{}
}
\newcommand{\lp}[1]{%
   \makebox[0pt][l]{%
      \raisebox{1.8ex}{%
         \scriptsize\hspace{-.3ex}\text{%
            \scalebox{1.1}[.73]{$\leftarrow$}
         }
      }
   }\partial{}
}
\newcommand{\rp}[1]{%
   \makebox[0pt][l]{%
      \raisebox{1.8ex}{%
         \scriptsize\hspace{-.3ex}\text{%
            \scalebox{1.1}[.73]{$\rightarrow$}
         }
      }
   }\partial{}
}
\theoremstyle{plain}
\newtheorem{theorem}{Theorem}[section]
\newtheorem{lemma}[theorem]{Lemma}
\newtheorem{proposition}[theorem]{Proposition}
\newtheorem{observation}[theorem]{Observation}
\title{Charged hadrons in local finite-volume QED+QCD with C$^\star$ boundary conditions}
\author[a]{B. Lucini,}
\author[b,c]{A. Patella,}
\author[b]{A. Ramos,}
\author[d,b]{N. Tantalo}
\affiliation[a]{Physics Department, College of Science,Swansea University,\\
Singleton Park, Swansea SA2 8PP, UK}
\affiliation[b]{PH-TH, CERN, CH-1211 Geneva 23, Switzerland}
\affiliation[c]{School of Computing and Mathematics \& Centre for Mathematical Science,\\
Plymouth University, Plymouth PL4 8AA, UK}
\affiliation[d]{Dipartimento di Fisica and INFN, Universit\`a di Roma ``Tor Vergata'',\\ 
Via della Ricerca Scientifica 1, I-00133 Roma, Italy\\}
\emailAdd{b.lucini@swansea.ac.uk}
\emailAdd{agostino.patella@cern.ch}
\emailAdd{alberto.ramos@cern.ch}
\emailAdd{nazario.tantalo@roma2.infn.it}
\preprint{CERN-PH-TH-2015-166}
\abstract{
In order to calculate QED corrections to hadronic physical quantities by means of lattice simulations, a coherent description of electrically-charged states in finite volume is needed. In the usual periodic setup, Gauss's law and large gauge transformations forbid the propagation of electrically-charged states. A possible solution to this problem, which does not violate the axioms of local quantum field theory, has been proposed by Wiese and Polley, and is based on the use of \Cstar{} boundary conditions. We present a thorough analysis of the properties and symmetries of QED in isolation and QED coupled to QCD, with \Cstar{} boundary conditions. In particular we learn that a certain class of electrically-charged states can be constructed in this setup in a fully consistent fashion, without relying on gauge fixing. We argue that this class of states covers most of the interesting phenomenological applications in the framework of numerical simulations. We also calculate finite-volume corrections to the mass of stable charged particles and show that these are much smaller than in non-local formulations of QED.
}
\begin{document}
\maketitle
\flushbottom

\parindent 0pt
\parskip 15pt

\abovedisplayskip 11pt
\belowdisplayskip 11pt

\section{Introduction}
\label{sec:intro}
Electromagnetic interactions contribute at the order of a few
percentage points to masses, decay rates and scattering cross-sections of
hadrons. Nevertheless these small contributions cannot be ignored if
one is interested in quantifying isospin breaking effects like the
charged-neutral mass splittings of baryons and 
mesons, or when one aims at percent accuracy in the calculation of
hadronic matrix elements. In these cases first-principle theoretical
predictions can be obtained only by means of lattice techniques, which
require a consistent formulation of QCD$+$QED in finite volume.  

The problem addressed in this paper arises every time one needs to
produce an electrically-charged state in a finite periodic box, as 
for instance in the calculation of the proton mass,
and is intrinsically related to the dynamics of the zero-modes of the
gauge field. In a torus with periodic boundary conditions for the
gauge fields, Gauss's law implies that only neutral states belong to
the physical Hilbert space of the theory. One might think to overcome
this limitation by gauge-fixing. For instance in Coulomb gauge the
Gauss's law is locally solved and the Hilbert space splits in
sectors labeled by the total electric charge. However states
generated by electrically-charged local operators in Coulomb gauge are
also charged under large gauge transformations which survive a local
gauge-fixing procedure. Because of this, even after gauge-fixing, the
two-point function $\langle \psi(x) \bar{\psi}(y) \rangle$ vanishes if
$x$ and $y$ are separated in a periodic box. In practice large gauge transformations
act on the gauge field by shifting the global zero-modes
$\int_{T L^3}d^4x\,A_\mu(x)$. Therefore the obstructions 
to the propagation of charged particles on a periodic torus can be
traced back to the functional 
integration over the global zero-modes. 

A possible solution to this problem can be found in
ref.~\cite{Duncan:1996xy} where the first lattice calculation of the
electromagnetic mass splitting of nucleons and light pseudoscalar
mesons has been attempted. The proposed solution consists in
\emph{quenching} a particular set of Fourier modes of the gauge field,
in such a way that the global zero-modes decouple from the dynamics. A
lot of theoretical and algorithmic progress has been made after the
pioneering work of ref.~\cite{Duncan:1996xy}, particularly in the past
few years, leading to recent determinations of the electromagnetic
mass splitting of light pseudoscalar mesons and light baryons, see
refs.~\cite{Borsanyi:2014jba,deDivitiis:2013xla,Basak:2014vca,Ishikawa:2012ix,Aoki:2012st,Blum:2010ym,Tantalo:2013maa,Portelli:2015wna}
for recent works on the subject (see also ref.~\cite{Carrasco:2015xwa}
for the discussion of a method to calculate QED radiative corrections
to the leptonic decays of pseudoscalar mesons). All these works rely
on finite-volume formulations of QED obtained by quenching some
Fourier modes of the gauge field.\footnote{
Recently other approaches have been proposed. In ref.~\cite{Endres:2015gda} the zero modes of the gauge field are lifted by adding a mass term for the photon. The proposal of refs.~\cite{Lehner:2015bga,lehnerlattice2015} consists in combining QCD matrix elements extracted from finite volume simulations with infinite volume QED kernels. 
} 

The particular formulation called \LQED{} is obtained by quenching the
spatial zero-modes of the gauge field at any time, i.e. by enforcing
the constraint $\tilde A_\mu(t,\vec 0)= \int_{L^3}d^3x\,
A_\mu(t,\vec{x}) =0$. As opposed to other formulations, \LQED{}
has a well defined transfer matrix. However
the constraint $\tilde A_\mu(t,\vec 0) = 0$ is \emph{non-local}. Even
though one can argue that the modification generated by the constraint
is a finite-volume effect, many properties of local quantum field
theories are not automatically guaranteed for \LQED{}.
Among these we mention renormalizability, volume-independence
of renormalization constants, the validity of the operator product
expansion and of the Symanzik improvement program. Mild
violations of locality may preserve some of these properties but this
needs to be shown explicitly case by case.

% The quenching of the zero-modes does not generate
% additional ultraviolet divergences at one loop with respect to the
% ones already present in infinite volume QED. 
% The numerical results of lattice simulations of \LQED{} performed in refs.~\cite{Borsanyi:2014jba,Ishikawa:2012ix} and the
% one-loop calculations performed in
% refs.~\cite{Hayakawa:2008an,Borsanyi:2014jba,Basak:2014vca} can be
% viewed as a reassuring evidence that the non-localities of \LQED{}
% disappear once the infinite volume limit is taken in the proper
% way.

\LQED{} has been studied at one-loop in perturbation theory in
refs.~\cite{Hayakawa:2008an,Borsanyi:2014jba,Basak:2014vca}.
The quenching of the zero-modes does not generate
ultraviolet divergences at one loop, other than the
infinite-volume ones. However it does generate 
unusual phenomena, for instance particles and antiparticles do not decouple
in the non-relativistic limit~\cite{Davoudi:2014qua,Fodor:2015pna}.
This can be seen as a failure of the effective-theory description
which is not surprising if the underlying microscopic theory is
non-local. On the other hand the numerical results of lattice
simulations of \LQED{} performed in
refs.~\cite{Borsanyi:2014jba,Ishikawa:2012ix} might be
viewed as reassuring evidence that the non-localities of \LQED{}
have only mild effects on the hadronic spectrum.
Nevertheless we believe that \LQED{} is not sufficiently well understood
at all orders in perturbation theory.
Our approach is to eliminate any potential problems at the root,
by seeking a consistent formulation of the finite volume theory
that does not require quenching dynamical degrees of freedom.  

In this paper we consider a \emph{local} solution to the problem of
charged particles in finite volume. This solution is not new, it has
been proposed
in~\cite{Polley:1993bn,Wiese:1991ku,Kronfeld:1990qu,Kronfeld:1992ae}
and consists in enforcing \Cstar{} boundary conditions for all fields
along the spatial directions, i.e. in requiring that the fields are
periodic up to charge conjugation. In this theory, which we refer to
as \CQED{}, the zero-modes of the gauge field are absent by
construction because $A_\mu(x)$ is anti-periodic in space, and the
classical problems of the periodic setup are avoided from the very
beginning. We show that a complete description of a certain class of
electrically-charged states can be obtained without relying either on
perturbation theory or on gauge-fixing. As we shall discuss in detail,
this class of states covers most of the relevant spectroscopic
applications and includes the proton, the neutron, the charged pions, the charged
kaons, the charged $D$ and $B$ mesons and the $\Sigma^\pm$ baryons. The proposed
construction is based on the fact
that \Cstar{} boundary conditions break the global gauge symmetry
group $\U(1)$ down to its discrete subgroup $\mathbb{Z}_2$. In other
words charge conservation is partially violated by the boundary
conditions. The full group of gauge transformations splits in two
disconnected components: the subgroup of local gauge transformations
which are connected to the identity, and the set resulting by the
composition of local gauge transformations with the nontrivial global
gauge transformation. In this setup one can construct states that are
invariant under local gauge transformations but not under global
gauge transformations, and these can be identified as
electrically-charged states. 

Along with charge conservation, \Cstar{} boundary conditions partially
violate flavour conservation. This happens because flavour-charged
particles traveling once around the torus turn into their antiparticles,
and therefore change their flavour content. Being associated with the propagation 
of massive colorless particles, these effects are exponentially suppressed with the volume. 
We study in detail the pattern of flavour violation in \CQED{},
particularly in the case when electromagnetic interactions are coupled
to QCD, and quantify these effects in the framework of a generic effective theory of hadrons. 
In particular we show that, although the $\Omega^-$ and $\Xi^-$ baryons
can mix with lighter states because of the boundary conditions,
the exponential suppression is so strong that these mixings can hardly represent
a problem in numerical simulations.

Finite-volume effects on the masses of charged particles are
considerably smaller in \CQED{} than in \LQED{}. When these corrections are expanded in a power series in $1/L$, at order $\alpha_{em}$ in both theories the $1/L$ and $1/L^2$ finite-volume corrections to the mass of a charged particle are universal, i.e. they do not depend on the spin and on the internal structure of the particle (for \LQED{} see
refs.~\cite{Borsanyi:2014jba,Basak:2014vca,Davoudi:2014qua,Fodor:2015pna,Lee:2015rua}).
We show that these universal corrections are always appreciably larger in \LQED{} than in \CQED{}. For instance at $mL=4$ we gain a factor of about 2 with three \Cstar{}-periodic spatial directions and a factor of about 5 with a single \Cstar{}-periodic spatial direction, see figure~\ref{fig:fvecomparison}. More importantly, the spin and structure-dependent corrections are $\mathcal{O}(1/L^3)$ in \LQED{}, while they are only
$\mathcal{O}(1/L^4)$ in \CQED{}. This extra suppression can be seen as a direct effect of locality.

The paper is organized as follows. In section~\ref{sec:cstar} we
introduce \Cstar{} boundary conditions and study the symmetries of
\CQED{}. In section~\ref{sec:interpolating} we introduce the gauge
invariant interpolating operators for charged particles and study
their properties. In section~\ref{sec:qed+qcd} we couple
electromagnetic and strong interactions and study the symmetries of
\CQCDQED{}. In section~\ref{sec:perturbative} we discuss the finite
volume corrections to the masses of charged hadrons. In
section~\ref{sec:lattice} we discuss the details of the lattice
implementation of \Cstar{} boundary conditions and of the proposed
gauge invariant interpolating operators. We draw our conclusions in
section~\ref{sec:conclusions}. 

The paper contains four appendices with the explicit derivation of some of the results 
presented in the main body of the paper. 
The material discussed in the appendices is technical and some of it is, we believe, original. 
Appendix~\ref{app:mixing} presents a detailed study of some
flavour-violation processes in \CQCDQED{}, in the context of a generic effective theory of hadrons. This analysis requires an extension of the techniques developed to study finite-volume effects in~\cite{Luscher:1985dn}, and it is complicated by the need to keep track of flavour flow and violations through all possible Feynman diagrams.
In appendix~\ref{app:massformula} we give an \textit{ab-initio} derivation (i.e. without
using an effective description of hadrons) of the
power-law finite-volume corrections on the mass of charged hadrons in \CQCDQED{}. The
coefficients of the expansion in powers of $1/L$ are expressed in terms of physical
quantities, i.e. derivatives of the forward Compton amplitude for the scattering of a soft photon on the charged hadron.  The authors are
convinced that the technology developed in these appendices will find
other uses in the field.

%%% Local Variables:
%%% mode: plain-tex
%%% TeX-master: t
%%% End:

\section{\CQED{}}
\label{sec:cstar}
In this section we introduce the finite-volume theory \CQED{} and
study its symmetries. For simplicity, we consider the case of a maximally symmetric torus with linear size equal to $L$, with fields obeying \Cstar{} boundary conditions in all space directions. The Euclidean time
direction can be either infinite or compact with linear size $T$. In
the latter case the corresponding boundary
conditions for the fields will be left unspecified. Common choices are
periodic, Schr\"odinger Functional (SF), open or open-SF
boundary conditions.  

The action of \CQED{} is given by 
\begin{gather}
S[A,\psi] = \int_{L^3T} d^4x\, \left\{ 
\frac{1}{4e^2}F_{\mu\nu}F_{\mu\nu} + \sum_{f=1}^{N_f}
\bar{\psi}_f\left(\gamma_\mu \lrD{}_\mu^f + m_f\right)\psi_f\right\}\ .
\label{eq:action}
\end{gather}
The field strength and covariant derivative are defined as
\begin{gather}
F_{\mu\nu}(x) = \partial_\mu A_\nu(x) - \partial_\nu A_\mu(x) \ , \nonumber \\
\lrD{}_\mu^f = \lrp{}_\mu -\imath q_f A_\mu \ ,
\end{gather}
where the
left-right derivative $\lrp{}_\mu = \frac{1}{2} ( 
\rp{}_\mu - \lp{}_\mu )$ is defined in terms of the partial derivative
$\rp{}_\mu$  acting to the right and the partial derivative
$\lp{}_\mu$ acting to the left. In our notation $q_f$ is the electric charge of the
$f$-th flavour normalized to the electric charge of the positron (i.e.
$q_f$ does not include the coupling constant $e$). Throughout the
paper we use this normalization for the electric charge.

Fields obey \Cstar{} boundary
conditions under translations in the 
three space directions,  
\begin{eqnarray} 
&&A_\mu(x+\hat L_i) = A_\mu^{\mathcal C}(x) = -A_\mu(x)\,,
\nonumber \\
&&\psi_f(x+\hat L_i) = \psi^{\mathcal C}_f(x) = C^{-1} \bar\psi^T_f(x)\,, 
\nonumber \\
&&\bar \psi_f(x+\hat L_i) = \bar \psi^{\mathcal C}_f(x) = - \psi^T_f(x) C \ ,
\label{eq:csbc}
\end{eqnarray}
where $\hat L_i$ is $L$ times the unit vector in direction $i$. The charge conjugation matrix can be taken to be any
invertible matrix $C$ with unit determinant such that 
\begin{gather}
C^{-1}\gamma_\mu C = -\gamma_\mu^T \ ,
\end{gather}
where $\gamma_\mu$ are the Euclidean gamma matrices. In four
dimensions such a matrix exists and satisfies 
\begin{gather}
C^T =-C \ , \qquad C^\dag = C^{-1} \ ,
\end{gather}
independently of the particular representation of the gamma matrices.

Notice that the action density eq.~(\ref{eq:action}) is the same as in
infinite volume and 
it is therefore invariant under charge conjugation. Since a shift of a
period in space corresponds to charge conjugation, the action density
is periodic in space.

We are now going to study the symmetries of \CQED, in turn gauge transformations, spatial translations, parity and flavour symmetries.

\subsection{Gauge transformations}

Gauge transformations are defined in the usual way
\begin{gather}
A_\mu^{[\alpha]}(x) = A_\mu(x) + \partial_\mu \alpha(x)\ ,
\nonumber \\
\psi_f^{[\alpha]}(x) = e^{\imath q_f  \alpha(x)}\psi_f(x) \ ,
\nonumber \\
\bar \psi_f^{[\alpha]}(x) = e^{-\imath q_f \alpha(x)}\bar \psi_f(x) \ .
\end{gather}
Only gauge transformations that do not change the boundary conditions of the fields are admissible. Translating the transformed field by a period along a spatial direction yields
\begin{gather}
A_\mu^{[\alpha]}(x+\hat L_i) =
A_\mu(x+\hat L_i) + \partial_\mu \alpha (x+\hat L_i)
\nonumber \\
\qquad =
-A_\mu(x) + \partial_\mu \alpha(x+\hat L_i) =
- A_\mu^{[\alpha]}(x)  + \partial_\mu [ \alpha(x+\hat L_i) + \alpha(x)]
\ .
\end{gather}
The transformed field $A_\mu^{[\alpha]}(x)$ is anti-periodic if and only if the gauge transformation satisfies
\begin{gather}
\partial_\mu\alpha(x+\hat L_i) = -\partial_\mu\alpha(x) \ ,
\end{gather}
i.e. $\alpha(x)$ can be decomposed into an anti-periodic function plus a generic constant. The boundary conditions for fermions constrain this constant. Translating a fermion field by a period along a spatial direction yields
\begin{gather}
\psi^{[\alpha]}_f(x+\hat L_i) = 
e^{\imath q_f  \alpha(x+\hat L_i)}\psi_f(x+\hat L_i)
\nonumber \\
\qquad =
e^{\imath q_f  \alpha(x+\hat L_i)} C^{-1}\bar\psi^T_f(x) =
e^{\imath q_f  [\alpha(x+\hat L_i)+\alpha(x)]} C^{-1} [\bar\psi^{[\alpha]}]^T_f(x)
\ .
\end{gather}
The transformed field $\psi^{[\alpha]}_f(x)$ satisfies \Cstar{} boundary conditions if and only if an integer $n_f$ exists such that
\begin{gather}
\alpha(x) = \beta(x) + \frac{n_f \pi}{q_f}\ ,
\qquad
\beta(x+\hat L_i) = -\beta(x) \ .
\end{gather}
Notice that this equation has to be satisfied for all fermion fields and
for any pair of charges. In the physically relevant case\footnote{If
  two of the charges have irrational ratio, 
then one of the $n_f$ has to be zero and consequently $\alpha(x)$ has to
be anti-periodic} all charges
$q_f$ are integer multiples of an elementary charge $q_{el}$,
therefore the gauge transformation $\alpha(x)$ preserves the boundary
conditions of all fields if and only if an integer $n$ exists such
that 
\begin{gather}
\alpha(x) = \beta(x) + \frac{n\pi}{q_{el}} \ .
\label{eq:residual_gauge_symmetry}
\end{gather}
Quantization of the electric charge can be seen as a consequence of the fact that the gauge group is the compact $\U(1)$. A generic gauge transformation is assigned by choosing a phase factor $\Lambda(x) = e^{i q_{el} \alpha(x)}$ in each point of spacetime. A matter field with charge $q_f$ transforms with $\Lambda(x)^{\hat{q}_f}$ where $\hat{q}_f = q_f/q_{el}$ is an integer, i.e. accordingly to some irreducible representation of the gauge group $\U(1)$. This analysis can be restated in terms of operators: given the electric-charge operator $Q$, the generator of global gauge transformations is
\begin{gather}
\hat{Q} = \frac{Q}{q_{el}} \ , 
\end{gather}
and has only integer eigenvalues. \Cstar{} boundary conditions break
the $\U(1)$ group of global gauge transformations. In fact
eq.~\eqref{eq:residual_gauge_symmetry} implies that the only allowed
global gauge transformations are $\Lambda = \pm 1$, i.e. the global
$\U(1)$ is broken down to $\mathbb Z_2$. Breaking of the global
$\U(1)$ implies a partial violation in electric-charge conservation:
$Q$ is not conserved but the quantum number $(-1)^{\hat{Q}}$ is. The
origin and consequences of this violation will be discusses in more
details in subsection~\ref{sub:flavour} for the case of \CQED{} in
isolation, and in section~\ref{sec:qed+qcd} for the case of
\CQCDQED{}. 

Eq.~\eqref{eq:residual_gauge_symmetry} implies that the group of gauge
transformations is disconnected. Only gauge transformations with
$n=0$, i.e. with $\alpha(x)$ anti-periodic in space, are continuously
connected to the identity. We will refer to these gauge
transformations as \textit{local gauge transformations}. Note that
the large gauge transformations have a very simple structure (they are
just the composition of a global gauge transformation and a local
gauge transformation). This contrasts with the case of periodic
boundary conditions in space, where large gauge 
transformations are linear in the coordinates (i.e. $\alpha(x) =
2\pi n x_i/L$ with some integer $n$).

\subsection{Translations}

\Cstar{} boundary conditions preserve translational invariance and charge conjugation. Even though in infinite volume the momentum and the \C{} quantum number are unrelated, this is not true in \CQED{}. Eqs.~\eqref{eq:csbc} imply that the translation of a generic (elementary or composite) field $\phi(x)$ by $\hat{L}_i$ is equivalent to a charge conjugation
\begin{gather}
\phi(x+\hat{L}_i) = 
\phi^{\mathcal{C}}(x) \ .
\end{gather}
The \C{}-even and \C{}-odd components of the field $\phi(x)$ are
\begin{gather}
\phi_{\pm}(x) = \frac{\phi(x) \pm \phi^{\mathcal C}(x)}{\sqrt{2}}\ .
\label{eq:phiplusminus}
\end{gather}
$\phi_+(x)$ is periodic in space while $\phi_-(x)$ is
anti-periodic. The two components have different Fourier
representations. Since we want to leave the time boundary conditions
unspecified, we expand our fields in the time-momentum representation, 
\begin{gather}
\phi_\pm(x) = \frac{1}{L^3} \sum_{\vec{p} \in \Pi_\pm} \tilde \phi_\pm(x_0,\vec{p}) e^{\imath \vec{p} \vec{x}} \ ,
\end{gather}
where $\Pi_+$ is the set of periodic momenta and $\Pi_-$ is the set of anti-periodic momenta,
\begin{gather}
\Pi_+ = \left\{ \frac{2\pi}{L} \vec{n} \ \big| \ \vec{n} \in \mathbb{Z}^3  \right\} \ , 
\nonumber \\
\Pi_- = \left\{ \frac{\pi}{L} \left( 2\vec{n} + \bar{\vec{n}} \right) \  \big| \ \vec{n} \in \mathbb{Z}^3 , \ \bar{\vec{n}} = \left( 1,1,1\right) \right\}
 \ .
\label{eq:fourier}
\end{gather}

Notice that the $A_\mu(x)$ field is \C{}-odd and it has only the anti-periodic component, while the fields $\psi_f(x)$ contain both,
\begin{gather}
A_\mu(x) = \frac{1}{L^3}\sum_{\vec{p} \in \Pi_-} \tilde
A_\mu(x_0,\vec{p}) e^{\imath \vec{p} \vec{x}} \ , 
\nonumber \\
\psi_{f,\pm}(x) = \frac{1}{L^3}\sum_{\vec{p} \in \Pi_\pm} \tilde
\psi_{f,\pm}(x_0,\vec{p}) e^{\imath \vec{p} \vec{x}} \ . 
\end{gather}
The two $\psi_{f,\pm}$ components of the fermion fields satisfy the (anti)~Majorana condition
\begin{gather}
\psi_{f,\pm}(x) = \pm C^{-1} [\bar{\psi}_{f,\pm} ]^T(x) \ .
\end{gather}

\subsection{Parity}

Even though not in a trivial fashion, parity is conserved by \Cstar{} boundary conditions. Under parity the fields transform like
\begin{align}
&A_0(x) \ \to\ A_0(x_P) \ , & &  \psi_f(x) \ \to\ \eta_P \gamma_0 \psi_f(x_P) \ , & \hspace*{2cm}
\nonumber \\
& A_k(x) \ \to\ -A_k(x_P) \ , & & \bar{\psi}_f(x) \ \to\ \eta_P^* \bar{\psi}_f(x_P) \gamma_0 \ ,
\end{align}
where $x_P = (x_0,-\vec{x})$. In infinite volume $\eta_P$ is a generic complex phase. For each choice of $\eta_P$ one obtains a different but equally good parity operator. A customary choice amounts to $\eta_P=1$. However the parity operator defined in this way does not commute with the charge conjugation operator that we have used to define the \Cstar{} boundary conditions. A more natural choice is $\eta_P=\imath$. The corresponding parity transformation $\mathcal{P}$ commutes with the charge conjugation operator.
This can be shown explicitly by acting on the elementary fields with
charge conjugation $\mathcal{C}$ first and parity $\mathcal{P}$
after, and by comparing the result with the same operations applied in
reversed order. For example, in the case of the fermion field we have
\begin{gather}
% &A_0(x) \ \xrightarrow{\mathcal{C}}\ -A_0(x) \ \xrightarrow{\mathcal{P}}\ -A_0(x_P) \ ,
% \nonumber \\
% &A_k(x) \ \xrightarrow{\mathcal{C}}\ -A_k(x) \ \xrightarrow{\mathcal{P}}\ A_k(x_P) \ ,
% \nonumber \\
\psi_f(x)\ \xrightarrow{\mathcal{C}}\ C^{-1} \bar{\psi}^T_f(x)\ \xrightarrow{\mathcal{P}}\ 
- \imath C^{-1} \gamma_0^T \bar{\psi}^T_f(x_P) \ ,
% \nonumber \\
% &\bar{\psi}_f(x)\ \xrightarrow{\mathcal{C}}\ -\psi^T_f(x)C\  \xrightarrow{\mathcal{P}}\
% - \imath \psi^T_f(x_P) \gamma_0^T C  \ ,
\end{gather}
and
\begin{gather}
% &A_0(x) \ \xrightarrow{\mathcal{P}}\ A_0(x_P) \ \xrightarrow{\mathcal{C}}\ -A_0(x_P) \ ,
% \nonumber \\
% &A_k(x)\ \xrightarrow{\mathcal{P}}\ -A_k(x_P)\ \xrightarrow{\mathcal{C}}\ A_k(x_P) \ ,
% \nonumber \\
\psi_f(x)\ \xrightarrow{\mathcal{P}}\ \imath \gamma_0 \psi_f(x_P)\ \xrightarrow{\mathcal{C}}\ 
\imath \gamma_0 C^{-1} \bar{\psi}^T_f(x_P) \ .
% \nonumber \\
% &\bar{\psi}_f(x)\ \xrightarrow{\mathcal{P}}\ - \imath \bar{\psi}_f(x_P) \gamma_0\  \xrightarrow{\mathcal{C}}\ 
% \imath \psi^T_f(x_P) C \gamma_0 \ .
\end{gather}
The results of the two transformations are shown to be equal by using
$C^{-1} \gamma_0^T C = - \gamma_0$. The reader can check that this
conclusions applies to the other fields.

Since $\mathcal{P}$ leaves the action and the \Cstar{} boundary conditions
unchanged, it is an exact symmetry in finite volume. Even though parity
will play no special role in this paper, we notice that the parity
transformations can be easily used to construct operators that have
definite parity.

\subsection{Flavour symmetries}
\label{sub:flavour}

\Cstar{} boundary conditions violate flavour (and consequently
electric charge) conservation. The violation arises because a flavour-charged
particle flips the sign of its flavour content by turning into its antiparticle
when it travels once around the torus. We are now going to show that flavour is violated by two units
at the time in this process and that this effect is exponentially
suppressed with the volume.  In this subsection and in
section~\ref{sec:qed+qcd} we will argue that flavour violation does
not represent a limitation to the use of \Cstar{} boundary conditions
in most of the relevant applications. 

\tikzset{
  photon/.style={decorate, decoration={snake}, draw=black},
  fermionpm/.style={draw=black, postaction={decorate},decoration={markings,mark=at position .55 with {\arrow[scale=2]{>}}}},
  fermionpp/.style={draw=black, postaction={decorate},decoration={markings,mark=at position .45 with {\arrow[scale=2]{>}},mark=at position .60 with {\arrow[scale=2]{<}}}},
  fermionmm/.style={draw=black, postaction={decorate},decoration={markings,mark=at position .45 with {\arrow[scale=2]{<}},mark=at position .60 with {\arrow[scale=2]{>}}}},
  vertex/.style={draw,shape=circle,fill=black,minimum size=3pt,inner sep=0pt},
}

We start by considering the theory with a single species of charged
particles with unit charge, e.g. the electron. In this case flavour
coincides with the electric charge $Q$ and with the generator
$\hat{Q}$ of global gauge transformations. The detailed way charge
conservation is violated by finite-volume effects can be easily
understood by means of Feynman diagrams. We assume here some gauge
fixing that we do not need to specify at this level. The theory in
finite volume has the same interaction vertex as the infinite-volume
one which, in particular, conserves electric charge. The violation of
charge conservation is visible in those terms in the action that are
sensitive to the \Cstar{} boundary conditions, i.e. the ones
containing spatial derivatives. In other words, charge violation is
generated by the propagators, which we will discuss in detail. 

In order to write down the free propagators, one needs to keep into
account the fact that a free particle is able to travel around the
torus. If it travels once around a direction with \Cstar{} boundary
conditions, the particle turns into its antiparticle. The winding
numbers of the particle world-line around each spatial direction can be
organized into a vector $\vec{n} \in \Z^3$. By defining  
\begin{gather}
\langle \vec{n} \rangle = \sum_{i=1}^3 n_i \mod 2
\label{eq:langlenrangle}
\end{gather}
we can separate those winding numbers characterised by $\langle
\vec{n} \rangle = 1$ that flip the electric charge of the particle
from the winding numbers characterised by $\langle \vec{n} \rangle =
0$ that do not. We do not need the explicit expression of the gauge
field propagator as the photon carries neither electric nor flavour
charge. Concerning the matter field, in coordinate space we have 
\begin{gather}
\langle \psi(x) \bar{\psi}(y) \rangle
\ \, \quad =\quad
\tikz[node distance=15mm] {
   \coordinate[label=above:{\small $x$}] (v1);
   \coordinate[right=of v1,label={\small $y$}] (v2);
   \draw[fermionpm] (v1) -- (v2);
}
\quad =\quad\quad
\sum_{\langle \vec{n} \rangle=0} S(x - y + \hat{L}_i n_i) \ ,
\\
\langle \psi(x) \psi^T(y) \rangle
\quad =\quad
\tikz[node distance=15mm] {
   \coordinate[label=above:{\small $x$}] (v1);
   \coordinate[right=of v1,label={\small $y$}] (v2);
   \draw[fermionpp] (v1) -- (v2);
}
\quad =\quad
- \sum_{\langle \vec{n} \rangle=1} S(x - y + \hat{L}_i n_i)C^{-1} \ ,
\label{eq:pp_prop}
\\
\langle \bar{\psi}^T(x) \bar{\psi}(y) \rangle
\quad =\quad
\tikz[node distance=15mm] {
   \coordinate[label=above:{\small $x$}] (v1);
   \coordinate[right=of v1,label={\small $y$}] (v2);
   \draw[fermionmm] (v1) -- (v2);
}
\quad =\quad\quad
\sum_{\langle \vec{n} \rangle=1} C S(x - y + \hat{L}_i n_i) \ ,
\label{eq:pbpb_prop}
\end{gather}
where $S(x)$ is the infinite-volume fermion propagator. Notice that
the $\psi\psi^T$ and $\bar{\psi}^T\bar{\psi}$ propagators vanish in
infinite spatial volume as the sums in eqs.~\eqref{eq:pp_prop} and
\eqref{eq:pbpb_prop} do not include $\vec{n}=0$. They are precisely
the source of violation of charge conservation. The violation is not
arbitrary, but amounts to a $\Delta Q= \pm 2$ every time one of these
propagator is inserted. This shows explicitly that the electric charge
$Q$ is not conserved, but the quantum number $(-1)^{\hat{Q}}$ is,
which means 
\begin{gather}
\Delta \hat{Q}= 0 \mod 2 \ .
\label{eq:dQh2}
\end{gather}
Time evolution mixes all sectors with odd electric charge among each
other, and all sectors with even electric charge among each other. For
example a single-electron state can mix with a three-electron state
but not with the vacuum, see figure~\ref{fig:mixing}. This in
particular means that, chosen some suitable interpolating operator as
we will discuss in section~\ref{sec:interpolating}, single-electron
states can be selected by looking at the leading decaying exponential
in two-point functions. However two-electron states cannot be
extracted in the same way, as the leading decaying exponential in a
two-point function constructed with an operator with charge equal to 2
will select always the vacuum. As we will discuss in
section~\ref{sec:qed+qcd} this is sufficient in most of the
interesting low-energy applications in \CQCDQED{}. 

From the discussion above it is obvious that the violation arises only
from the charged particle that travels at least once around the
torus. If the fermion is massive we have 
\begin{gather}
\langle \psi(x) \psi^T(y) \rangle \sim \langle \bar{\psi}^T(x) \bar{\psi}(y) \rangle 
\sim \left( \frac{m}{L} \right)^{\frac{3}{2}} e^{-m L} \ ,
\end{gather}
for $L \to \infty$, and charge-violating diagrams are
exponentially suppressed.

\begin{figure}
\centering

\begin{subfigure}{0.4\textwidth}
  \begin{tikzpicture}[node distance=15mm]
    \coordinate[vertex] (v1);
    \coordinate[right=of v1,label=right:$\gamma$] (f1);
    \coordinate[above left =of v1,label=left :$e^+$] (e1);
    \coordinate[below left =of v1,label=left :$e^+$] (e2);
    \draw[fermionpm] (e1) -- (v1);
    \draw[fermionpp] (v1) -- (e2);
    \draw[photon] (v1) -- (f1);
  \end{tikzpicture}

  \caption{}
  \label{fig:mixing:a}
\end{subfigure}
\hspace{1cm}
\begin{subfigure}{0.4\textwidth}
  \begin{tikzpicture}[node distance=15mm]
    \coordinate[vertex] (v1);
    \coordinate[vertex,below right =of v1] (v2);
    \coordinate[above left =of v1,label=left :$e^+$] (e1);
    \coordinate[below left =of v1,label=left :$e^+$] (e2);
    \coordinate[below =of e2,label=left :$e^+$] (e3);
    \coordinate[right=of v2,label=right:$e^-$] (e4);
    \draw[fermionpm] (e1) -- (v1);
    \draw[fermionpp] (e2) -- (v1);
    \draw[fermionpm] (e3) -- (v2);
    \draw[fermionpp] (e4) -- (v2);
    \draw[photon] (v1) -- (v2);
  \end{tikzpicture}

  \caption{}
  \label{fig:mixing:b}
\end{subfigure}

\caption{\label{fig:mixing} (a) Diagram contributing to the $e^+ e^+
  \to \gamma$ process, which involves one $e^+$ traveling around the
  torus and flipping charge. (b) Diagram contributing to
  the $e^+ e^+ e^+ \to e^-$ process.} 
\end{figure}
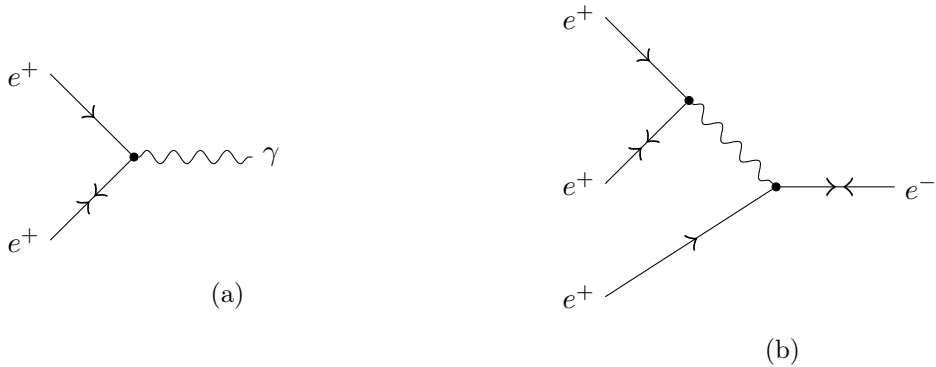

In the case of $N_f$ flavours the infinite-volume theory has a $\U(1)^{N_f}$ flavour symmetry corresponding to independent phase rotations of each flavour. We will denote the generator of the $f$-th $\U(1)$ by $F_f$. Notice that the electric charge is a linear combination of the flavour-symmetry generators,
\begin{gather}
Q = \sum_{f=1}^{N_f} q_f F_f \ ,
\label{eq:QF}
\end{gather}
where $q_f$ is the electric charge of the $f$-th flavour. In infinite volume each $F_f$ is conserved independently. \Cstar{} boundary conditions break the flavour symmetry group down to a $\Z_2^{N_f}$,\footnote{
If $n_f$ out of the $N_f$ flavours are degenerate (i.e. same mass and same electric charge), the $\U(1)^{n_f}$ flavour subgroup is lifted to a $\U(n_f)$ flavour symmetry. \Cstar{} boundary conditions break this down to its natural $\O(n_f)$ subgroup. We mention this special case for completeness, but it is not relevant for the purpose of this paper.
}
and this implies that only each $(-1)^{F_f}$ is conserved,
i.e. violations can occur only in multiples of two,
\begin{gather}
\Delta F_f= 0 \mod 2\ .
\label{eq:dF2}
\end{gather}
Notice that the $\Delta Q$ has to be a multiple of 2 only if all flavours
have equal electric charge. In the general case it is replaced by
eqs.~\eqref{eq:dQh2}, \eqref{eq:QF} and~\eqref{eq:dF2}. This
observation will play an important role in section~\ref{sec:qed+qcd}
where we will discuss QCD coupled to QED.

\section{Gauge-invariant interpolating operators}
\label{sec:interpolating}
We are concerned with physical observables, i.e. observables that are
invariant under local gauge transformations. Often these observables
are extracted from intermediate quantities defined in a particular
gauge. For instance masses of charged particles are usually extracted
from the long-distance behaviour of two-point functions after the
photon field has been gauge-fixed. Although this is a necessary step
in perturbation theory it can be completely avoided non-perturbatively  
without adding any particular complication. Keeping in mind that no
issue arises with gauge fixing for QED and that a particular gauge can
be chosen at any time, we think that it is more natural to rely on a
completely gauge-invariant formulation. In this section we show how to
construct states that are invariant under local gauge transformations
and electrically charged at the same time, i.e. they have
$(-1)^{\hat{Q}}=-1$. This will be achieved by acting with
suitably-constructed interpolating operators on the vacuum. Even
though we discuss primarily how to apply this construction to the
calculation of charged-particle masses from two-point functions, the
same interpolating operators can be used to extract other physical
quantities, e.g. decay rates, in a completely gauge-invariant
fashion. 

To simplify the notation in this section we consider a single matter
field with charge $q$. The generalization of the
following discussion to the case of several flavours with different
charges is completely straightforward. Consider the operator~\cite{Dirac:1955uv} 
\begin{gather}
\Psi_J(x) = e^{\imath q \int d^4y\, A_\mu(y) J_\mu(y-x)}\ \psi(x) \ ,
\label{eq:interpolating}
\end{gather}
where $\psi(x)$ is the matter field and $J_\mu(x)$ is a generic function or distribution that satisfies
\begin{gather}
\partial_\mu J_\mu(x) = \delta^4(x) \ , 
\qquad
J_\mu(x+\hat{L}_i) = -J_\mu(x) \ .
\label{eq:inter-J-eq}
\end{gather}
In case of periodic boundary conditions in time $J_\mu(x)$ is chosen to be periodic as well. Under a global transformation $\psi(x) \to e^{\imath q \alpha} \psi(x)$, the above operator transforms like $\Psi_J(x) \to e^{\imath q \alpha}\Psi_J(x)$, which implies that in infinite volume $\Psi_J(x)$ would have electric charge equal to $q$. In finite volume we have already noticed that $\alpha$ can be only $0$ or $\pi/q$, which implies that the operator $\Psi_J(x)$ has quantum number $(-1)^{\hat{Q}}=-1$.  The non-local factor
\begin{gather}
  \Theta(x) = e^{\imath q \int d^4y \ A_\mu(y) J_\mu(y-x)}
\end{gather}
transforms under a local gauge transformation that is anti-periodic in space as
\begin{align}
\Theta(x)
\to&
\Theta(x)\ e^{\imath q \int d^4y \ \partial_\mu \alpha(y)\, J_\mu(y-x)}
\ =\
\Theta(x)\ e^{-\imath q \int d^4y \ \alpha(y)\, \partial_\mu J_\mu(y-x)}
\nonumber \\
&=
\Theta(x)\ e^{-\imath q \alpha(x)} \ .
\end{align}
Notice that the product $\alpha(y) J_\mu(y-x)$ is periodic with respect to $\vec{y}$. Given also the boundary conditions in time, no boundary terms arise from the integration by parts. The extra factor $e^{-\imath q \alpha(x)}$ obtained by gauge-transforming $\Theta(x)$ cancels the analogous factor coming from the transformation of $\psi(x)$, making $\Psi_J(x)$ invariant. 

Summarising, the non-local operator $\Psi_J(x)$ has $(-1)^{\hat{Q}}=-1$ and is invariant under local gauge transformations. It also satisfies the same boundary conditions as the field $\psi(x)$, and therefore operators with definite momentum can be easily constructed by considering the \C-even and \C-odd components of $\Psi_J(x)$ as done in eqs.~\eqref{eq:phiplusminus} for a generic operator $\phi(x)$.

If the function $J_\mu(x)$ is chosen to be proportional to $\delta(x_0)$, then the operator $\Psi_J(x)$ is local in time, i.e. it is a function of the elementary fields at the time $x_0$ only. In this case $\Psi_J(x)$ maps naturally to an operator acting on the Hilbert space. The state $\Psi_J(x) | 0 \rangle$ obtained acting with the interpolating operator on the vacuum is invariant under local gauge transformations and has electric charge $(-1)^{\hat{Q}}=-1$. By decomposing the Euclidean two-point function $\langle\Psi_J(x)
\bar\Psi_J(0)\rangle$ in decaying exponentials in $x_0$, one can
extract the spectrum of the 
gauge-invariant Hamiltonian.
The energy levels are gauge-invariant by construction and they do not
depend on the particular choice of $J_\mu(x)$, as they are a property
of the Hamiltonian rather than of the interpolating operator (as long
as this is local in time). We will refer to the energy of the lightest
state propagating in the Euclidean two-point function as the finite-volume
mass of the charged particle. We assume that this quantity has an infinite-volume limit which can be interpreted as the mass of the charged particle.\footnote{This issue is not trivial in QED because of the absence of a mass gap. See for instance the discussion in chapter 6 of \cite{HaagBook} or chapter 6 of~\cite{StrocchiBook}, and references therein.}

The whole construction presented above is based on the assumption that solutions of eq.~\eqref{eq:inter-J-eq} exist. If periodic boundary conditions were employed in all spatial directions eq.~\eqref{eq:inter-J-eq} would have no solutions. In the case of \Cstar{} boundary conditions we will construct explicitly some possible choices for the function $J_\mu(x)$. The first one is defined by the equations 
\begin{gather}
J_0(x) = 0 \ , 
\qquad
J_k(x) = \delta(x_0) \partial_k \Phi(\vec{x}) \ , 
\qquad
\partial_k\partial_k \Phi(\vec{x}) = \delta^{3}(\vec{x}) \ ,
\end{gather}
where $x=(x_0,\vec{x})$ and $\Phi(\mathbf x)$ is anti-periodic. An explicit (convergent) representation for $\Phi(\mathbf x)$ is given in terms of the heat-kernel
\begin{gather}
\Phi(\mathbf x) = - \frac{1}{L^3} \int_0^\infty du \ \sum_{\vec{p} \in \Pi_-} e^{-u \vec{p}^2 + \imath \vec{p} \vec{x}} \ .
\end{gather}
With this choice the operator $\Psi_J(x)$ can be written like 
\begin{gather}
\Psi_{\mathbf c} (x) = 
e^{-\imath q \int d^{3}y\, \partial_k A_k(x_0,\vec{y})\, \Phi(\vec{y}-\vec{x})}\ \psi(x) \ . 
\label{eq:cont_inter_coulomb}
\end{gather}
Notice that in Coulomb gauge $\Psi_{\mathbf c}(x)=\psi(x)$, and therefore the gauge invariant correlator $\langle \Psi_{\mathbf c}(x)\bar \Psi_{\mathbf c}(y)\rangle$ is identical to usual correlator $\langle\psi(x)\bar \psi(y)\rangle$ in Coulomb gauge. In other words, $\Psi_{\mathbf c}(x)$ is the unique gauge-invariant extension of the operator $\psi(x)$ defined in Coulomb gauge. This in particular shows explicitly the gauge-invariance of the mass extracted in Coulomb gauge.

Another possible choice is given by
\begin{gather}
J_\mu(x) =
\frac{1}{2} \delta_{\mu,k} \ \text{sgn}(x_k) \ \prod_{\nu \neq k}
\delta(x_\nu) \ .
\end{gather}
Once this equation is inserted in eq.~\eqref{eq:interpolating}, it yields the following interpolating operator
\begin{gather}
\Psi_{\mathbf s}(x) = e^{-\frac{\imath q}{2} \int_{-x_k}^0 ds \ A_k(x + s \hat{k})} \psi(x) e^{\frac{\imath q}{2} \int_{0}^{L-x_k} ds \ A_k(x + s \hat{k})} \ .
\label{eq:cont_inter_string}
\end{gather}
This choice generates a string wrapping around the torus along the direction $k$, chosen among the ones with \Cstar{} boundary conditions (see figure~\ref{fig:string}). The operator $\Psi_{\mathbf s}(x)$ is less symmetric with respect to $\Psi_{\mathbf c}(x)$ but, as discussed in section~\ref{sec:lattice}, it might be more practical to use in numerical simulations, especially in the framework of compact \CQED.

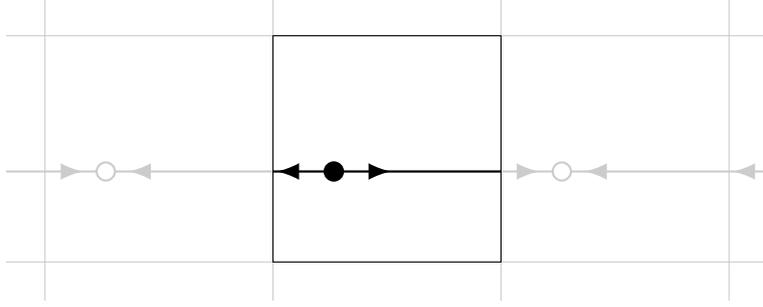
\begin{figure}
\centering

\begin{tikzpicture}

\begin{scope}
\clip (-3.5,-0.5) rectangle (6.5,3.5);

\draw[step=3] (-6,-3) grid (9,6);

\draw[thick,postaction={decorate},decoration={markings,mark=at position .25 with {\arrow[scale=1.6]{latex}},mark=at position .9 with {\arrow[scale=1.6]{latex}}}] (0.8,1.2) -- ++(3,0);
\draw[thick,postaction={decorate},decoration={markings,mark=at position .25 with {\arrow[scale=1.6]{latex}},mark=at position .9 with {\arrow[scale=1.6]{latex}}}] (0.8,1.2) -- ++(-3,0);
\draw[thick,postaction={decorate},decoration={markings,mark=at position .25 with {\arrow[scale=1.6]{latex}},mark=at position .9 with {\arrow[scale=1.6]{latex}}}] (0.8,1.2)++(6,0) -- ++(-3,0);
\draw[thick,postaction={decorate},decoration={markings,mark=at position .25 with {\arrow[scale=1.6]{latex}},mark=at position .9 with {\arrow[scale=1.6]{latex}}}] (0.8,1.2)++(-6,0) -- ++(3,0);

\filldraw[thick,fill=black,draw=black] (0.8,1.2) circle (.12);
\filldraw[thick,fill=white,draw=black] (0.8,1.2) ++(-3,0) circle (.12);
\filldraw[thick,fill=white,draw=black] (0.8,1.2) ++(3,0) circle (.12);

\fill[white,fill opacity=.8] (-6,-3) rectangle (9,6);

\end{scope}

\draw (0,0) rectangle (3,3);

\clip (0,0) rectangle (3,3);

\draw[thick,postaction={decorate},decoration={markings,mark=at position .25 with {\arrow[scale=1.6]{latex}},mark=at position .9 with {\arrow[scale=1.6]{latex}}}] (0.8,1.2) -- ++(3,0);
\draw[thick,postaction={decorate},decoration={markings,mark=at position .25 with {\arrow[scale=1.6]{latex}},mark=at position .9 with {\arrow[scale=1.6]{latex}}}] (0.8,1.2) -- ++(-3,0);
\draw[thick,postaction={decorate},decoration={markings,mark=at position .25 with {\arrow[scale=1.6]{latex}},mark=at position .9 with {\arrow[scale=1.6]{latex}}}] (0.8,1.2)++(6,0) -- ++(-3,0);
\draw[thick,postaction={decorate},decoration={markings,mark=at position .25 with {\arrow[scale=1.6]{latex}},mark=at position .9 with {\arrow[scale=1.6]{latex}}}] (0.8,1.2)++(-6,0) -- ++(3,0);

\filldraw[thick,fill=black,draw=black] (0.8,1.2) circle (.12);
\filldraw[thick,fill=white,draw=black] (0.8,1.2) ++(-3,0) circle (.12);
\filldraw[thick,fill=white,draw=black] (0.8,1.2) ++(3,0) circle (.12);

\end{tikzpicture}

\caption{\label{fig:string} Graphical representation of the interpolating operator $\Psi_{\mathbf{s}}$ defined in eq.~\eqref{eq:cont_inter_string}. The black circle represents the electric charge, and the white circles are the image anti-charges. The lines with arrows represent the electric flux (i.e. the Wilson lines), which has to escape the box in a symmetric way through the two opposite planes because of the boundary conditions.
}
\end{figure}

Another choice that might look more convenient because of its explicit $\text{O}(4)$ covariance is given by
\begin{gather}
J_\mu(x) = \partial_\mu \Phi(x) \ , 
\qquad
\partial_\mu\partial_\mu \Phi(x) = \delta^4(x) \ ,
\end{gather}
where $\Phi(x)$ is anti-periodic in space and has appropriate boundary conditions in time. With this choice the operator $\Psi_J(x)$ can be written as
\begin{gather}
\Psi_{\bm \ell}(x) = e^{-\imath q \int d^4y\, \partial_\rho A_\rho(y)\, \Phi(y-x)}\ \psi(x) \ .
\end{gather}
In Landau gauge we get $\Psi_{\bm \ell}(x)=\psi(x)$, and the operator
$\Psi_{\bm \ell}(x)$ is the unique gauge-invariant extension of the
operator $\psi(x)$ defined in Landau gauge. Even though the Landau and
other covariant gauges are often used in perturbative calculations,
notice that the operator $\Psi_{\bm \ell}(x)$ is non-local in time and
interferes with the dynamics by effectively generating a
time-dependent contribution to the Hamiltonian. One can show that this
contribution vanishes at large time separations, and therefore the same masses
will be obtained, but in practical situations the asymptotic behavior
could be reached very slowly. These complications can be avoided
in the first place by sticking to a gauge-invariant formalism with the
local-in-time interpolating operators introduced before.

\section{Flavour symmetry in \CQCDQED{}}
\label{sec:qed+qcd}
QCD is coupled to QED in the standard way
\begin{gather}
  S[A,\psi] = \int_{L^3T}d^4x\, \left\{ 
    \frac{1}{4e^2}F_{\mu\nu}F_{\mu\nu} +
    \frac{1}{2g^2} \tr G_{\mu\nu}G_{\mu\nu} +
    \sum_{i=f}^{N_f}\overline
    \psi_f(\gamma_\mu \lrD{}^f_\mu + m_f)\psi_f 
  \right\} \;,
\end{gather}
where the chromo-magnetic field strength and the covariant derivative are
\begin{gather}
G_{\mu\nu}(x) = \partial_\mu B_\nu(x) - \partial_\nu B_\mu(x) - \imath
[B_\mu(x), B_\nu(x)] \ , 
\nonumber \\
\lrD{}_\mu^f =  \lrp{}_\mu-\imath q_f A_\mu - \imath B_\mu \ ,
\end{gather}
and $B_\mu(x)$ denotes the colour gauge field. $B_\mu(x)$ is defined to be a traceless hermitian $3\times 3$ matrix. Up-type  and down-type quarks have electric charge $q_f=2/3$ and $q_f=-1/3$ respectively. Since quark fields obey \Cstar{} boundary conditions, the colour gauge field must obey \Cstar{} boundary conditions as well in order to ensure periodicity of the action density,
\begin{gather}
B_\rho(x+\hat L_i) = - B_\rho(x)^* \;.
\end{gather}
Let us now focus on the violation of flavour and electric-charge
conservation, since they are 
substantially different from the case of \CQED{} alone. 

Since the elementary charge is $1/3$, from the discussion in
section~\ref{sec:cstar} it might seem that processes with a $\Delta
Q=\pm 2/3$ violation are allowed by the boundary conditions. However
fractional charges are confined in hadrons which have integer electric
charge. \textit{If the box size is large enough} only colourless
particles can travel around the torus, implying that charge violation
can be produced only in multiples of $\Delta Q= \pm 2$. Consequently a
proton state can mix with an antiproton state, or with a $p \pi^+
\pi^+$ state. 

One might wonder whether \Cstar{} boundary conditions can induce a
spurious mixing of the proton with some lighter state. This issue is
surely relevant if one wants to extract the proton properties from the
long-distance behaviour of two-point functions from lattice
simulations. It is also not entirely trivial, considering that
\Cstar{} boundary conditions produce a violation of the baryon-number
conservation. When a hadron travels around the torus its baryon number
changes sign, which in turn implies that baryon-number violation can
be produced only in multiples of $2$. A proton state
cannot mix with states with zero baryon number, i.e. with lighter
states. 

Both charge and baryon number are linear combinations of the
individual species numbers, which we refer to as flavour numbers, 
\begin{gather}
Q = \sum_f q_f F_f \ , \qquad 
B = \frac{1}{3} \sum_f F_f \;.
\end{gather}
Since each flavour-number conservation law is violated by the \Cstar{} boundary conditions, one might wonder for instance whether a pion state can mix with a kaon state. This is not the case, as individual flavour conservation can be violated again only in multiples of two,
\begin{gather}
\Delta F_f = 0\mod 2\;. 
\end{gather}
Also notice that $\Delta B$ being a multiple of 2 implies that
total-flavour $F = \sum_f F_f$ violation is produced only in multiples
of six,
\begin{gather}
\Delta F = 0 \mod 6\;. 
\end{gather}
For instance, if only strangeness conservation is violated in a given
process, this violation must be produced in multiples of $6$. If
strangeness violation amounts to a multiple of 2 which is 
not a multiple of 6, then it must be accompanied by violation in the
conservation of some other flavour. For example the $\Omega^-=sss$
will mix, via a $K^-=s\bar{u}$ traveling around the torus, 
with ${\Sigma}^++2\gamma$ where ${\Sigma}^+=suu$ and with other two particle states
like $\Lambda^0 \pi^+$.  
This process has $\Delta F_s=-2$ and $\Delta F_u=+2$. In particular
this implies that the 
$\Omega^-$ mass cannot be extracted from the long-distance behaviour
of a two-point function at finite volume. In order to extract the
$\Omega^-$ mass one has to take the infinite-volume limit of the
two-point function (or effective mass) first, and then extract the
long-distance behaviour. Similarly the $\Xi^-=ssd$ mixes with
the $p=uud$ via a $K^-=s\bar{u}$ traveling around the torus  (see
figure~\ref{fig:omega}).  
This process has again $\Delta F_s=-2$ and $\Delta F_u=+2$.

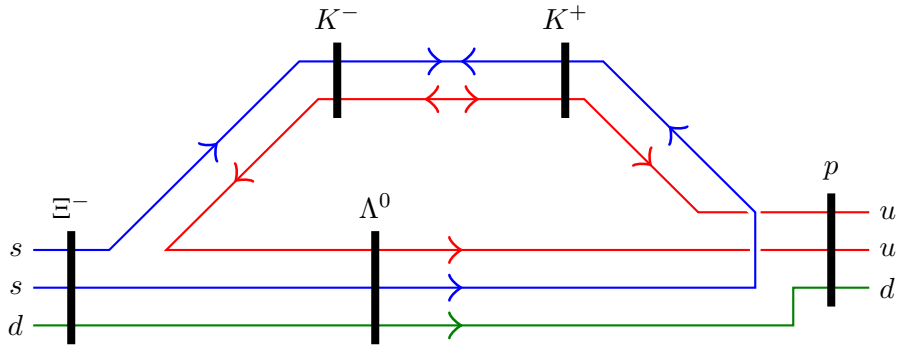
\begin{figure}
\centering

\begin{tikzpicture}[scale=.5]

%\draw[gray,dotted] (-2,-3) grid (25,5);

\draw[red,thick] (22,1) coordinate (H) -- ++(-4.5,0) coordinate (I) -- ++(-3,3) coordinate (L) -- ++(-7,0) coordinate (M) -- ++(-4,-4) coordinate (N) -- (22,0) coordinate (O);

\draw[preaction={draw,line width=4,white},blue,thick] (0,0) coordinate (A) -- ++(2,0) coordinate (B) -- ++(5,5) coordinate (C) -- ++(8,0) coordinate (D) -- ++(4,-4) coordinate (E) -- ++(0,-2) coordinate (F) -- (0,-1) coordinate (G);

\draw[green!50!black,thick] (0,-2) coordinate (P) -- ++(20,0) coordinate (Q) -- ++(0,1) coordinate (R) -- (22,-1) coordinate (S);

\draw[decorate,decoration={markings,mark=at position .57 with {\arrow[blue,scale=3]{>}}}] (B) -- (C);
\draw[decorate,decoration={markings,mark=at position .47 with {\arrow[blue,scale=3]{>}},mark=at position .58 with {\arrow[blue,scale=3]{<}}}] (C) -- (D);
\draw[decorate,decoration={markings,mark=at position .5 with {\arrow[blue,scale=3]{<}}}] (D) -- (E);

\draw[decorate,decoration={markings,mark=at position .5 with {\arrow[red,scale=3]{<}}}] (I) -- (L);
\draw[decorate,decoration={markings,mark=at position .45 with {\arrow[red,scale=3]{<}},mark=at position .6 with {\arrow[red,scale=3]{>}}}] (L) -- (M);
\draw[decorate,decoration={markings,mark=at position .55 with {\arrow[red,scale=3]{>}}}] (M) -- (N);

\draw[decorate,decoration={markings,mark=at position .65 with {\arrow[red,scale=3]{>}}}] (10,0) -- ++(2,0);
\draw[decorate,decoration={markings,mark=at position .65 with {\arrow[blue,scale=3]{>}}}] (10,-1) -- ++(2,0);
\draw[decorate,decoration={markings,mark=at position .65 with {\arrow[green!50!black,scale=3]{>}}}] (10,-2) -- ++(2,0);

\draw[line width=3] (1,-2.5) -- (1,.5) node[above] {$\Xi^-$};
\draw[line width=3] (9,-2.5) -- (9,.5) node[above] {$\Lambda^0$};
\draw[line width=3] (8,3.5) -- (8,5.5) node[above] {$K^-$};
\draw[line width=3] (14,3.5) -- (14,5.5) node[above] {$K^+$};
\draw[line width=3] (21,-1.5) -- (21,1.5) node[above] {$p$};

\path (A) node[left] {$s$};
\path (G) node[left] {$s$};
\path (P) node[left] {$d$};

\path (H) node[right] {$u$};
\path (O) node[right] {$u$};
\path (S) node[right] {$d$};

\end{tikzpicture}

\caption{\label{fig:omega}
Schematic representation of a possible process responsible for the
$\Xi^-$/$p$ mixing. The process goes through a $u \bar{u}$ pair
creation. The colourless $K^-=s \bar{u}$ travels around the torus and
turns into a $K^+=\bar{s}u$. Finally an $s \bar{s}$ pair annihilates. 
}
\end{figure}

In QCD$_{\mathrm{C}}$ alone, flavour violation is an
exponentially-suppressed effect in the size of the box, like any other
finite volume correction. Adding electromagnetic interactions make
finite volume corrections generically inverse powers of $L$, due
to the massless photon. The detailed analysis of flavour violating
process in \CQCDQED~requires to keep track of the flavour numbers in
the process. This analysis, in the framework of an effective field
theory of hadrons, is carried out in detail in the
appendix~\ref{app:mixing}, but the main results that we prove in this
appendix can be easily explained. Flavour violating process in
\CQCDQED~cannot be mediated by the photon. A particle with the same
flavour numbers that are violated must travel around the torus, and
since only massive particles carry flavour in \CQCDQED, these effects
are exponentially suppressed. 

For example, in the case of the already-mentioned mixing between the
$\Xi^-$ and the proton, the one-loop diagram of figure~\ref{fig:omega}   
is of order $\exp(-m_K L)$. But the general case is much more
complicated, since the $\Xi^-$ can also mix with the proton and an
arbitrary number of photons, or with a neutron-$\pi^+$ state. As it is
proved in 
appendix~\ref{app:mixing}, flavour violating process in this case are
suppressed by a factor $\exp(-\mu L)$ with 
\begin{equation}
  \mu = \left[ M_{K^\pm}^2 - \left( \frac{M_{\Xi^-}^2 - M_{\Lambda^0}^2 +
        M_{K^\pm}^2}{2M_{\Xi^-}} \right)^2 \right]^{1/2}\,.
\end{equation}

Note that this effects are generically very suppressed, since 
the corresponding coefficient in the $\Xi^-$ two-point function is
proportional to the square of the transition 
amplitude, i.e. to $\exp(-2\mu L) \sim \mathcal O(10^{-10})$. A similar
analysis for the case of the mixing of the $\Omega^-$ results in an
amplitude suppressed by a factor $\mathcal O(10^{-8})$.

We close this section by remarking that the renormalization of \CQCDQED{} is not affected by electric charge and flavour breaking effects discussed in this section. Indeed these are induced by the boundary conditions and locality guarantees that the ultraviolet structure of the theory is independent of them. This applies both to the couplings of the Lagrangian and to the renormalization constants and mixing coefficients of any composite operator.

%%% Local Variables:
%%% mode: latex
%%% TeX-master: "00-main"
%%% End:

\section{Finite-volume effects on the masses of charged hadrons}
\label{sec:perturbative}

The finite-volume corrections to the mass of a stable hadron of
non-vanishing charge $q$, which is valid only at first order in $e^2$
and up to corrections in the size of the box that fall off faster than any power, can be
written as 
\begin{gather}
\frac{\Delta m(L)}{m}
= \frac{e^2}{4\pi} \left\{ \frac{q^2 \xi(1)}{2 m L}
+ \frac{q^2 \xi(2)}{\pi (m L)^2}
- \frac{1}{4 \pi m L^4}
\sum_{\ell =1}^\infty
\frac{(-1)^{\ell} (2\ell)!}{\ell! L^{2(\ell-1)}}\,
\mathcal{T}_{\ell}\, \xi(2+2\ell)
\right\}
\ +\ \dots \ ,
\label{eq:finalallspin}
\end{gather}
where $m$ is the particle mass in infinite volume, and
$m(L) = m + \Delta m(L)$ is the particle mass in finite
volume. Typical examples of stable hadrons to which this formula
applies are the proton, the neutron, the charged pions, the charged kaons, $D$ and $B$ mesons.

The derivation of eq.~\eqref{eq:finalallspin} is given in
appendix~\ref{app:massformula}. Here we discuss the structure of
eq.~\eqref{eq:finalallspin} that is in fact very simple.
$\mathcal{T}_{\ell}$ is the $\ell$-th derivative with respect to
$\vec{k}^2$ of the (infinite-volume) forward Compton amplitude
for the scattering of a photon with energy $|\vec{k}|$ on
the charged hadron at rest, in the limit $\vec{k} \to \vec{0}$.
The boundary conditions enter only in the definition of
the generalised zeta function
\begin{gather}
\xi(s) = \sum_{\vec{n} \neq \vec{0}} \frac{(-1)^{\langle \vec{n} \rangle}}{|\vec{n}|^s} \ .
\end{gather}
This formula is valid for real $s > 3$, while the values $s=1$ and $2$ are
obtained by analytic continuation. An explicit
representation of the $\xi(s)$ coefficients, which is valid for
all values of $s$ we are interested in, is given in
eq.~\eqref{eq:zeta-function-final}. The values of the first three coefficients
$\xi(s)$ are given in table~\ref{tab:xicoefficients} in the case
of \Cstar{} boundary conditions in 1, 2 or 3 spatial directions.

\begin{table}[tbp]
\centering
\begin{tabular}{lr@{$.$}lr@{$.$}lr@{$.$}l}
& \multicolumn{2}{c}{1\Cstar{}} & \multicolumn{2}{c}{2\Cstar{}} & \multicolumn{2}{c}{3\Cstar{}} \\[0.5ex]
\hline
\\[-1.5ex]
$\xi(1)$ & $-0$&$77438614142$ & $-1$&$4803898065$ & $-1$&$7475645946$ \\[0.5ex]
$\xi(2)$   & $-0$&$30138022444$ & $-1$&$8300453641$ & $-2$&$5193561521$ \\[0.5ex]
$\xi(4)$   &  $0$&$68922257439$ & $-2$&$1568872986$ & $-3$&$8631638072$ \\[1.0ex]
\hline
\end{tabular}
\caption{\label{tab:xicoefficients} Values of the first three coefficients $\xi(s)$ in the case of \Cstar{} boundary conditions in 1, 2 or 3 spatial directions and periodic boundary conditions in the others (columns 2,3 and 4 respectively).}
\end{table}

The  $1/L$ and $1/L^2$ terms are universal, i.e. they depend only on
the mass and charge of the hadron, and not on its spin and internal
structure. The dependence upon spin and internal structure is encoded in
the coefficients $\mathcal{T}_{\ell}$ and is suppressed with respect to the universal
part, as it contributes at $\mathcal{O} (1/L^4)$. No inverse odd
power of $L$ appears in the expansion, other than the leading $1/L$
point-like contribution.

A formula very similar to eq.~\eqref{eq:finalallspin} has been derived
in refs.~\cite{Borsanyi:2014jba,Davoudi:2014qua,Fodor:2015pna} in the case of
\LQED{}, i.e. the theory with the quenched spatial zero-modes of the
electromagnetic field. According to
refs.~\cite{Borsanyi:2014jba,Davoudi:2014qua,Fodor:2015pna} the $1/L$ and $1/L^2$
terms are universal also in \LQED{}, while spin and
structure-dependent terms contribute at $\mathcal{O}
(1/L^3)$.

\begin{figure}
\centering

\includegraphics[width=.9\textwidth]{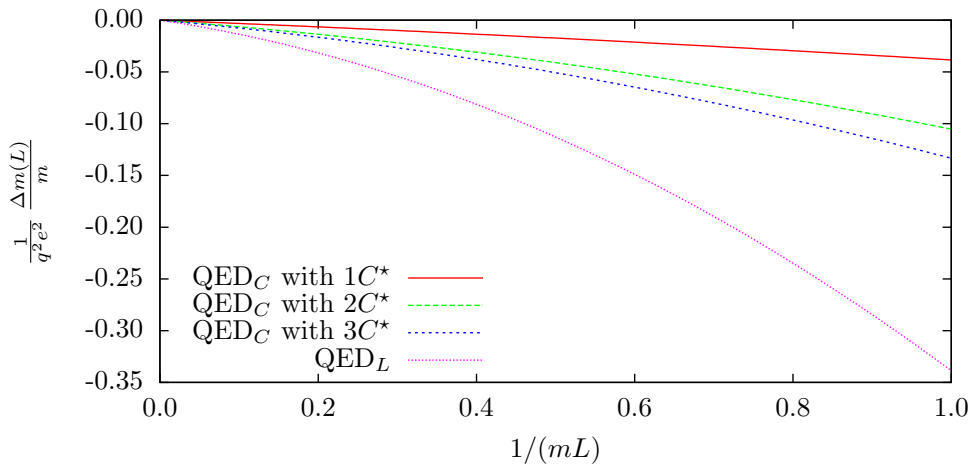}

\caption{Leading finite-volume corrections to the mass of a stable particle of charge $qe$ in \CQED{} and \LQED{}. The plot shows the universal $1/L$ and $1/L^2$ contributions. Spin and structure-dependent contributions are $O\left(1/L^3\right)$ in \LQED{}, and $O\left(1/L^4\right)$ in \CQED{}.}
\label{fig:fvecomparison}
\end{figure}

It is remarkable that, because of the locality of \CQED{}, spin and
structure-dependent contributions are much more suppressed with
respect to \LQED{}. Moreover, also the universal $1/L$ and $1/L^2$
contributions are considerably smaller in \CQED{} with respect to
\LQED{}. This is shown in figure~\ref{fig:fvecomparison} where the
results of refs.~\cite{Borsanyi:2014jba,Davoudi:2014qua} are compared
with the $1/L$ and $1/L^2$ terms of eq.~\eqref{eq:finalallspin}.

\section{Lattice formulation}
\label{sec:lattice}
In the context of lattice non-compact QED, the implementation of
\Cstar{} boundary conditions and of the proposed interpolating
operators is straightforward. One can extract the leading order
$\mathcal O(e^2)$ electromagnetic contributions either with the
techniques described in~\cite{deDivitiis:2013xla}, or by doing a 
QED dynamical simulation as suggested in
refs.~\cite{Borsanyi:2014jba,deDivitiis:2013xla,Basak:2014vca,Ishikawa:2012ix,Aoki:2012st,Blum:2010ym}. 
In the non-compact formulation a way to damp the longitudinal 
modes of the gauge field is needed, gauge-fixing being the most
common choice.  Here we will focus on the compact formulation of the
theory in a manifest gauge-invariant way.

In the compact formulation on the lattice, the gauge field is replaced
by the link variable $U(x,\mu)$ which lives in the gauge group $\U(1)$
and satisfies the boundary conditions 
\begin{gather}
U(x+\hat{L}_k,\rho) = U(x,\rho)^* \ ,
\end{gather}
along the spatial directions, and generic boundary conditions
(i.e. periodic, SF, open, open-SF) along the temporal direction. For
the moment we focus on the simpler case of \CQED{} coupled to a single
fermion field with unitary charge. The generalization to \CQCDQED{}
will be discussed at the end of this section. 

We want to argue now that, in order to be able to discretize the
interpolating operators proposed in section~\ref{sec:interpolating} in
a completely gauge-invariant fashion, we need a rather unconventional
action for compact \CQED{}. Notice that in the standard formulation of
compact QED, the perturbative series is generated by identifying
$U(x,\mu) = e^{\imath A_\mu(x)}$ and by expanding in powers of the
gauge field. The discretization of the interpolating
operator~\eqref{eq:cont_inter_string} would need to take the square
root of the link variable. This operation is not gauge covariant, and
should be avoided while aiming at a completely gauge-invariant
formulation. The root of this complication lies in the fact that,
because of the boundary conditions, the electric flux generated by a
single charge must escape the box in a symmetric way through the
$x_k=0$ and $x_k=L$ planes. The dynamical unit charge generated by the
interpolating operator of eq.~\eqref{eq:cont_inter_string} is located
in $x$ and sees effectively two image half charges located in
$x+\hat{L}_k$ and $x-\hat{L}_k$. 

As it will be clear by the end of this section, this issue is completely
removed by choosing the following action for compact \CQED{} with a
single matter field, 
\begin{gather}
S = S_\gamma + S_m \ , 
\nonumber \\
S_\gamma = \frac{2}{e^2} \sum_x \sum_{\mu\nu} [1 - P(x,\mu,\nu)] \ ,
\nonumber \\
S_m = \sum_x \bar{\psi}(x) D[U^2] \psi(x) \ .
\label{eq:compact:action}
\end{gather}
The plaquette is defined as usual,
\begin{gather}
P(x,\mu,\nu) = U(x,\mu)U(x+\hat{\mu},\nu)U(x+\hat{\nu},\mu)^{-1}U(x,\nu)^{-1} \ ,
\end{gather}
while the Wilson-Dirac operator has an unconventional coupling to the gauge field,
\begin{gather}
D[U^2] = m +  \frac{1}{2}\sum_{\mu=0}^3 \left\{ 
\gamma_\mu (\nabla^*_\mu[U^2] + \nabla_\mu[U^2]) - \nabla^*_\mu[U^2] \nabla_\mu[U^2] 
\right\} \ ,
\nonumber \\
\nabla_\mu[U^2] \psi(x) = U(x,\mu)^2 \psi(x+\hat{\mu}) - \psi(x) \ , 
\nonumber \\
\nabla^*_\mu[U^2] \psi(x) = \psi(x) - U(x-\hat{\mu},\mu)^{-2} \psi(x-\hat{\mu}) \ .
\label{eq:compact:dirac}
\end{gather}
Any other discretization of the Dirac operator (preserving charge conjugation) can be employed as well.

The proposed action is invariant under local gauge transformations of the form
\begin{gather}
U(x,\mu) \to \Lambda(x) U(x,\mu) \Lambda(x+\hat{\mu})^{-1} \ , 
\nonumber \\
\psi(x) \to \Lambda(x)^2 \psi(x) \ , 
\nonumber \\
\bar{\psi}(x) \to \bar{\psi}(x) \Lambda(x)^{-2} \ ,
\label{eq:compact:gauge}
\end{gather}
where $\Lambda(x) \in \U(1)$ satisfies boundary conditions
\begin{gather}
\Lambda(x+\hat{L}_k) = \Lambda(x)^* \ .
\end{gather}
The action possesses also a $\mathbb{Z}_2^4$ center symmetry. For each direction $\mu$ one can flip the sign of all link variables in the direction $\mu$ on the three-dimensional slice defined by $x_\mu=0$ without changing the value of the action. Before discussing the interpolating operators, we want to show that the action~\eqref{eq:compact:action} is perturbatively equivalent to the usual QED action in the continuum limit.

In order to set up a perturbative expansion, we need to identify the minima of the action at $\mathcal{O}(e^0)$. These are given by all configurations with $P(x,\mu,\nu)=1$. In appendix~\ref{app:compact} we show that there is a discrete set of gauge-inequivalent minima labeled by the elements of the set
\begin{gather}
\Omega = \{ (z_0,z_1,z_2,1) \ |\ z_0^2=z_1^2=z_2^2=1 \} \ .
\end{gather}
Given a minimum of the action at $\mathcal{O}(e^0)$, it is always possible to find a vector $z \in \Omega$ such that the chosen minimum is gauge-equivalent to the following gauge field
\begin{gather}
\bar{U}_z(x,\mu) = \begin{cases}
z_\mu \qquad & \text{if } x_\mu=L_\mu-1 \ , \\
1 & \text{otherwise} \ .
\end{cases}
\end{gather}
Because of center symmetry one might expect also minima with $z_3=-1$. However each minimum with $z_3=-1$ is gauge-equivalent to some minimum with $z_3=1$ (this is a byproduct of the construction given in appendix~\ref{app:compact}).

The perturbative expansion around the minimum $\bar{U}_z(x,\mu)$ is set up by defining
\begin{gather}
U(x,\mu) = \bar{U}_z(x,\mu) e^{\frac{\imath}{2} A_\mu(x)} \ ,
\label{eq:link_variable}
\end{gather}
and by adding a gauge-fixing term $S_{\text{gf}}$ to the action, which we will not do explicitly. We only observe that $S_{\text{gf}}$ is a function of the fluctuation $A_\mu(x)$ only, and not of the classical vacuum $\bar{U}_z(x,\mu)$. Given a generic functional $\mathcal{F}[U,\psi,\bar{\psi}]$ of the fields, the perturbative expansion to some order $\mathcal{O}(e^n)$ of its expectation value is given by
\begin{gather}
\langle \mathcal{F}[U,\psi,\bar{\psi}] \rangle = \frac{1}{Z} \sum_{z \in \Omega} \int \mathcal{D}A \mathcal{D}\bar{\psi} \mathcal{D}\psi \ 
\mathcal{F} [\bar{U}_z e^{\frac{\imath}{2} A}, \psi,\bar{\psi} ]\ e^{-S[e^{\frac{\imath}{2} A}, \psi,\bar{\psi}] - S_{\text{gf}}[A]} + \mathcal{O}(e^n) \ ,
\end{gather}
where we have used that the action is center-invariant and therefore it does not depend on $\bar{U}_z(x,\mu)$ once the substitution~\eqref{eq:link_variable} is used. The normalization $Z$ is given by
\begin{gather}
Z = 8 \int \mathcal{D}A \mathcal{D}\bar{\psi} \mathcal{D}\psi \ 
e^{-S[e^{\frac{\imath}{2} A}, \psi,\bar{\psi}] - S_{\text{gf}}[A]} \  \left\{ 1 + \mathcal{O}(e^n) \right\} \ .
\end{gather}
If the observable $\mathcal{F}$ is charged under center symmetry, its expectation value vanishes. On the other hand, center-invariant observables get mapped naturally into corresponding observables in the non-compact setup. In fact, if $\mathcal{F}$ is invariant under center symmetry, then the $z$ dependence drops out of the path integral and the standard perturbation expansion about $\bar{U}=1$ is recovered,
\begin{gather}
\langle \mathcal{F}[U,\psi,\bar{\psi}] \rangle = \frac{8}{Z} \int \mathcal{D}A \mathcal{D}\bar{\psi} \mathcal{D}\psi \ 
\mathcal{F}[e^{\frac{\imath}{2} A}, \psi,\bar{\psi}]\ e^{-S[e^{\frac{\imath}{2} A}, \psi,\bar{\psi}] - S_{\text{gf}}[A]} + \mathcal{O}(e^n) \ ,
\end{gather}
provided that $S[e^{\frac{\imath}{2} A}, \psi,\bar{\psi}]$ is the standard QED action up to irrelevant operators.

This can be verified by replacing the definition~\eqref{eq:link_variable} into the action and by expanding in powers of the fields. The $1/2$ factor in the exponent of~\eqref{eq:link_variable} combines with the unconventional normalization of the gauge action in eqs.~\eqref{eq:compact:action} in such a way that the canonical normalization of the gauge field is restored,
\begin{gather}
P(x,\mu,\nu) = 1 + \frac{\imath}{2} F_{\mu\nu}(x) - \frac{1}{8} F^2_{\mu\nu}(x) + \dots \ , 
\nonumber \\
S_\gamma = \frac{2}{e^2} \sum_x \sum_{\mu\nu} [1 - P(x,\mu,\nu)] = \frac{1}{4 e^2} \sum_x \sum_{\mu\nu} F^2_{\mu\nu}(x) + \text{irrelevant operators} \ .
\end{gather}
Also the same $1/2$ factor in the exponent of~\eqref{eq:link_variable} combines with the second power of the link variable in the Dirac operator~\eqref{eq:compact:dirac} in such a way that the correct coupling of the electron to the gauge field is restored,
\begin{gather}
U(x,\mu)^2 = 1 + \imath A_\mu(x) + \dots \ , 
\nonumber \\
S_m = \sum_x \bar{\psi}(x) \left\{ \gamma_\mu \left[ \frac{ \partial_\mu + \partial_\mu^* }{2} + \imath A_\mu(x) \right] + m \right\} \psi(x) + 
\text{irrelevant operators} \ .
\end{gather}
The elementary charge (charge is quantized in compact QED) interacts with the gauge field with strength $1/2$. However the dynamical fermion has an electric charge that is twice the elementary charge, which generates a coupling of strength $1$ to the gauge field. This structure is also reflected by the gauge transformations~\eqref{eq:compact:gauge}.

In the proposed setup, thanks to the identification~\eqref{eq:link_variable}, the interpolating operator~\eqref{eq:cont_inter_string} can be discretized in a straightforward fashion,
\begin{gather}
\Psi_{\mathbf s}(x) =
\prod_{s=-x_k}^{-1} U(x+s \hat{k},k)^{-1}\
\psi(x)\
\prod_{s=0}^{L-x_k-1} U(x+s \hat{k},k) \ .
\label{eq:lat_inter_string}
\end{gather}
Notice that the the above operator is charged under center symmetry. However in practice only the product $\Psi_{\mathbf s}(x) \bar{\Psi}_{\mathbf s}(y)$ is relevant, which is center invariant.

For completeness we present also a possible discretization of the operator~\eqref{eq:cont_inter_coulomb}. We introduce the field
\begin{gather}
A^{\mathbf c}_\mu(x) = \Delta^{-1} \partial_k^* \hat{F}_{k \mu}(x) \ ,
\end{gather}
where $\Delta = \partial_k \partial_k^*$ is the three-dimensional discrete Laplace operator defined with anti-periodic boundary conditions, and $\hat{F}_{\rho\sigma}(x)$ is some discretization of the field tensor (e.g. the clover plaquette). It is straightforward to verify that $A^{\mathbf c}_\mu(x)$ satisfies the discrete Coulomb constraint $\partial_k^* \, A^{\mathbf c}_k(x) = 0$. In the continuum limit $A^{\mathbf c}_\mu(x)$ is nothing but the gauge field in Coulomb gauge.\footnote{
In the continuum limit:
\begin{gather}
\partial_k A^{\mathbf c}_k(x) = \Delta^{-1} \partial_k \partial_j \hat{F}_{k j}(x) = 0 \ , \nonumber \\
A^{\mathbf c}_\mu(x) = \Delta^{-1} \partial_k \{ \partial_k A_\mu(x) - \partial_\mu A_k(x) \} = A_\mu(x) - \partial_\mu \left\{ \Delta^{-1} \partial_k A_k(x) \right\} \ , \nonumber
\end{gather}
i.e. $A^{\mathbf c}_\mu(x)$ is gauge-equivalent to $A_\mu(x)$ and satisfies the Coulomb-gauge contraint.
} The operator~\eqref{eq:cont_inter_coulomb} can be discretized by using the relation
\begin{gather}
\Psi_{\mathbf c}(x) = \Psi_{\mathbf s}(x)\ e^{-\frac{\imath}{2} \sum_{s=0}^{L} A^{\mathbf c}_k(x + s \hat{k})} \ ,
\label{eq:inter_relation}
\end{gather}
which is exact in the continuum limit, and easily verified in Coulomb gauge. In fact in Coulomb gauge and in the continuum eq.~\eqref{eq:inter_relation} is completely equivalent to eq.~\eqref{eq:cont_inter_string}, given the relations $A_\mu(x)=A^{\mathbf c}_\mu(x)$ and $\Psi_{\mathbf c}(x)=\psi(x)$.

The generalization of the proposed strategy to the case of compact \CQCDQED{} is straightforward. We need to introduce the link variables $V(x,\mu) \in \SU(3)$ for the colour field with the boundary conditions
\begin{gather}
V(x+\hat{L}_k,\rho) = V(x,\rho)^* \ ,
\end{gather}
and the corresponding plaquette:
\begin{gather}
Q(x,\mu,\nu) = V(x,\mu)V(x+\hat{\mu},\nu)V(x+\hat{\nu},\mu)^{-1}V(x,\nu)^{-1} \ .
\end{gather}
For sake of simplicity we choose the standard Wilson action for the colour field. The photon action requires a further rescaling, since quarks have fractional charge,
\begin{gather}
S = S_g + S_\gamma + S_m \ , 
\nonumber \\
S_g = \frac{1}{g^2} \sum_x \sum_{\mu\nu} \tr [1 - Q(x,\mu,\nu)] \ ,
\nonumber \\
S_\gamma = \frac{18}{e^2} \sum_x \sum_{\mu\nu} [1 - P(x,\mu,\nu)] \ ,
\nonumber \\
S_m = \sum_f \sum_x \bar{\psi}_f(x) D_f[U,V] \psi_f(x) \ .
\label{eq:compact:action:qcd+qed}
\end{gather}
Moreover the Dirac operator has to implement the correct coupling of up-type ($q_f=2/3$) and down-type ($q_f=-1/3$) quarks to the electromagnetic field,
\begin{gather}
D_f[U,V] = m_f +  \frac{1}{2}\sum_{\mu=0}^3 \left\{ 
\gamma_\mu (\nabla^*_\mu[U^{6q_f} V] + \nabla_\mu[U^{6q_f} V]) - \nabla^*_\mu[U^{6q_f} V] \nabla_\mu[U^{6q_f} V] 
\right\} \ .
\label{eq:compact:dirac:qcd+qed}
\end{gather}

We remind that QCD with non-degenerate Wilson-Dirac quarks (with or without QED) has a mild sign problem, i.e. the fermionic determinant is positive in the continuum limit but can be negative because of lattice artefacts. In appendix~\ref{app:sign} we show that \Cstar{} boundary conditions do not make this sign problem worse.

\section{Conclusions}
\label{sec:conclusions}
A local solution to the problem of electrically charged particles in a
finite volume was proposed
in~\cite{Polley:1990tf,Kronfeld:1990qu,Wiese:1991ku,Kronfeld:1992ae},
and it is based on \Cstar{} boundary conditions for all fields along
one or more spatial directions. Because of the boundary conditions
Gauss's law does not prevent the propagation of charged particles on a
finite volume (as opposed to the case of periodic boundary
conditions). We have analyzed in detail the properties of QED in isolation
and of QED coupled to QCD with \Cstar{} boundary conditions (\CQED{}
and \CQCDQED{} respectively), and we have discussed how this setup can be
used in spectroscopy calculations. 

We have devoted part of the paper to construct interpolating operators
that have the quantum numbers of charged particles and that are also
invariant under local gauge transformations. These can be used to
probe the physical sector of the Hilbert space of the theory with
non-perturbative accuracy without having to rely on gauge-fixing at
intermediate stages of calculation. To this end we have discussed the
details of the implementation of the proposed interpolating operators
in the compact lattice formulation of the theory. 

We have discussed the symmetries of \CQED{} and \CQCDQED{} in
depth. In particular we signal that \Cstar{} boundary conditions
violate flavour and electric-charge conservation partially, in such a
way that this does not represent a limitation to the use of \Cstar{}
boundary conditions in most of the relevant applications. Even though
finite-volume effects vanish generally like some inverse power of the
box size because of the photon, we have shown that flavour and
electric-charge violations are exponentially suppressed in the box
size. 

We have calculated the finite-volume corrections to the masses of
charged particles with \Cstar{} boundary conditions at
$\mathcal{O}(\alpha_{em})$. We have shown that the leading $1/L$ and
$1/L^2$ finite-volume corrections are universal, i.e. they depend on
neither spin nor internal structure. 
Similar results have been previously obtained in the non-local
formulation \LQED{}. When compared with these previous result, the
finite-volume corrections of \CQED{} are found to be significantly
smaller. In particular, the non-universal spin and structure-dependent
corrections are $\mathcal{O}(1/L^3)$ in \LQED{} and
$\mathcal{O}(1/L^4)$ in \CQED{}. We have also shown 
that these non-universal terms are related with
physical quantities, namely the derivatives of the forward Compton
scattering amplitudes.

\acknowledgments

We warmly thank Martin L\"uscher for illuminating discussions and constant encouragement at all stages of this work. We profited from discussions with Luigi Del Debbio, Liam Keegan, Laurent Lellouch, Marina Marinkovi\'c, Antonin Portelli, Kalman Szab\'o and Peter Weisz. N.T. thanks his colleagues of the RM123 collaboration for discussions related to the subjects covered in this work. We thank R.C. for the many convivial moments.

\appendix

\section{Exponential suppression of flavour mixing}
\label{app:mixing}
Let $\Xi(x)$ be some interpolating operator for some fixed spin-component of the negatively charged $\Xi^-=ssd$ and $\Xi_+(x)$ its \C{}-even component. We consider the finite-volume Minkowskian retarded two-point function at zero momentum, its spectral decomposition and the dispersion relation:
\begin{flalign}
C(E;L) = & \, \imath \int_{\mathbb{R} \times L^3} d^4 x \ \theta(x_0) e^{\imath E x_0} \langle \Xi_+(x)^\dag \Xi_+(0) \rangle =
\int_0^\infty d\mu \ \frac{\rho(\mu;L)}{\mu - E - \imath \epsilon} \ ,
\label{appW:C0} \\
\rho(E;L) = & \frac{1}{\pi} \Im C(E;L) \ .
\label{appW:DR0}
\end{flalign}
In infinite volume the spectral density vanishes for $E<M_{\Xi^-}$. In a finite box, because of \Cstar{} boundary conditions, the lowest state contributing to the two-point function is a proton state (via a strangeness-violating process) and the spectral density vanishes only for $E<M_p(L)$. We want to show that the spectral density vanishes exponentially with the volume for energies lower than $M_{\Xi^-}$.

More precisely we choose some energy $E<M_{\Xi^-}$ and a smooth test function $\phi_{E}(\mu)$ which vanishes for $\mu>E$. Then we want to show that, in the $L \to \infty$ limit
\begin{gather}
\ln \int_0^\infty d\mu \ \rho(\mu;L) \phi_{E}(\mu) \le - 2 L \mathcal{M}(E) + \mathcal{O}(\ln L) \ ,
\label{appW:finalbound}
\end{gather}
where the mass that controls the exponential decay is a decreasing function of $E<M_{\Xi^-}$ and
\begin{gather}
\mathcal{M}(E) \ge \mathcal{M}(M_{\Xi^-}) = \left[ M_{K^\pm}^2 - \left( \frac{M_{\Xi^-}^2 - M_{\Lambda^0}^2 + M_{K^\pm}^2}{2M_{\Xi^-}} \right)^2 \right]^{1/2} \ .
\end{gather}
Notice that the spectral density at finite volume is a sum of delta functions localised on the eigenvalues of the Hamiltonian in the given channel,\footnote{At fixed order in perturbation theory, the spectral density is a sum of delta functions and their derivatives, localized on the eigenvalues of the free Hamiltonian} which is the reason why we need to consider a test function in order to write a precise statement. Before proceeding we comment on the fact that the analysis presented in this appendix can be easily extended to other channels (e.g. to the mixing of the $\Omega^-$ with lightest states).

We assume that the leading finite-volume corrections in the two-point function $C(E;L)$ are described by some arbitrarily-complicated Lagrangian field theory with small couplings, which effectively describe the dynamics of hadrons and photons at large distance in the framework of a perturbative expansion (after gauge-fixing for the photon). In order to avoid IR divergences at any stage of our calculation, we assume that the photon is massive. We will find that the mass $\mathcal{M}(E)$ does not depend on the mass of the photon. Each infinite-volume stable particle is described by an elementary field in the effective Lagrangian. Because of locality the finite-volume theory is described by the infinite-volume Lagrangian density. In particular vertices conserve flavour. Fields are assumed to have definite flavour numbers, and fields with all flavour numbers equal to zero are assumed to have definite \C{}-parity.

We think of the two-point function $C(E;L)$ order by order in perturbation theory as a sum of Feynman diagrams. According to Cutkosky's rules, a Feynman diagram contributes to the spectral density $\rho(E;L)$ via the dispersion relation \eqref{appW:DR0} only if a cut between its external vertices exists such that the sum of the masses of the cut propagators is smaller than $E$. Notice that all states propagating between the two external vertices must have $(-1)^B = -1$ and $(-1)^S = 1$ where $B$ is the baryon number and $S$ is the strangeness number. It is easy to check that (at the physical masses) states with such quantum numbers and with energy lower that $M_{\Xi^-}$ can contain no strange particle. Therefore a Feynman diagram for the two-point function contributes to the spectral density for $E<M_{\Xi^-}$ only if a cut exists between the external vertices such that no strange particle propagates through the cut. The set of these diagrams is denoted by $\mathcal{D}_s$ (see fig.~\ref{fig:appW:diagram} for an example). The spectral density for $E<M_{\Xi^-}$ can be represented as
\begin{gather}
\rho(E;L) = \frac{1}{\pi} \Im \sum_{\mathcal{G} \in \mathcal{D}_s} 
\imath \left\{ \prod_{a \in V(\mathcal{G})-\xi_0} \int_{\mathbb{R} \times L^3} d^4 x(a) \right\} \ \theta(x_0(\xi_1)) e^{\imath E x_0(\xi_1)} F^M_{\mathcal{G}}(x)|_{x(\xi_0)=0} \ ,
\label{appW:DR1}
\end{gather}
where $\xi_1$ and $\xi_0$ are the two external vertices of the diagram $\mathcal{G}$, and $V(\mathcal{G})$ is the set of all vertices. $F^M_{\mathcal{G}}(x)$ is a function of the coordinates of all vertices of the graph, and it is given by a product of propagators in coordinate space, various derivatives of propagators and numerical coefficients. We will refer to the function $F^M_{\mathcal{G}}(x)$ as \textit{Feynman integrand}.

\begin{figure}
\centering

\includegraphics[width=.7\textwidth]{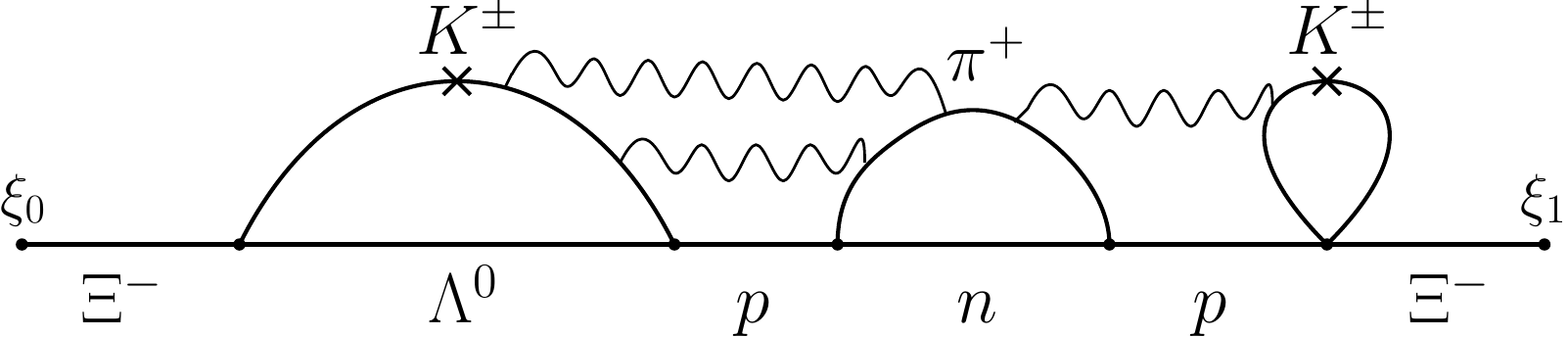}

\caption{\label{fig:appW:diagram}
Example of a Feynman diagram in $\mathcal{D}_s$. This is a contribution to the $\langle \Xi_+(x)^\dag \Xi_+(0) \rangle$ two-point function. Propagators with a cross are of the flavour-violating type, and they exist only because of the \Cstar{}-boundary conditions. Either the proton or the pair $n + \pi^+$ (in both cases with an arbitrary number of photons) can go on-shell with an energy lower than $M_{\Xi^-}$.
}
\end{figure}

\paragraph{Structure of the Feynman integrand.} We denote by $L(\mathcal{G})$ the set of lines of the diagram $\mathcal{G}$. Each line $\ell$ originates from the vertex $i(\ell)$ and terminates in the vertex $f(\ell)$ (line orientation is chosen arbitrarily). Each line $\ell$ corresponds to the Wick contraction of two fields located in $i(\ell)$ and $f(\ell)$. Let $F(\ell,i(\ell))$ and $F(\ell,f(\ell))$ be the vectors that contain all flavour numbers of these two fields, $F=(U,D,S,\dots)$. It is convenient to define $F(\ell,a)=0$ if the line $\ell$ is not attached to the vertex $a$. Since a field can be Wick-contracted either with itself or with its charge conjugate, each line is uniquely associated to a mass $m_\ell$. The line $\ell$ corresponds to a propagator
\begin{gather}
\Delta^M_{C_\ell}(\delta_\ell x;m_\ell,L) = \sum_{\vec{n}_\ell \in \mathbb{Z}^3} C^{\mathcal{G}}_\ell(\vec{n}_\ell) \ \Delta_M(\delta_\ell x_0, \delta_\ell \vec{x} + L n_\ell;m_\ell) \ ,
\label{appW:propagators}
\end{gather}
where $\Delta_M(x,m)$ is the infinite-volume Minkowkian propagator, $\delta_\ell x = x(f(\ell))-x(i(\ell))$, $n_\ell = (0,\vec{n}_\ell)$, and $C^{\mathcal{G}}_\ell(\vec{n}_\ell)$ is a function defined by one of the following possibilities:
\begin{description}[noitemsep,topsep=-.7em]
\item[Type 1.] Line $\ell$ arises from the Wick contraction of two flavourless \C{}-even fields,
\begin{gather}
F(\ell,i(\ell))=F(\ell,f(\ell))=0 \ , \qquad C^{\mathcal{G}}_\ell(\vec{n}_\ell) = 1 \ .
\end{gather}
\item[Type 2.] Line $\ell$ arises from the Wick contraction of two flavourless \C{}-odd fields,
\begin{gather}
F(\ell,i(\ell))=F(\ell,f(\ell))=0 \ , \qquad C^{\mathcal{G}}_\ell(\vec{n}_\ell) = (-1)^{\langle \vec{n}_\ell \rangle} \ .
\end{gather}
\item[Type 3.] Line $\ell$ arises from the Wick contraction of two flavourful fields with opposite flavour,
\begin{gather}
F(\ell,i(\ell))=-F(\ell,f(\ell)) \neq 0 \ , \qquad C^{\mathcal{G}}_\ell(\vec{n}_\ell) = \delta_{\langle \vec{n}_\ell \rangle,0} \ .
\end{gather}
\item[Type 4.] Line $\ell$ arises from the Wick contraction of two flavourful fields with same flavour,
\begin{gather}
F(\ell,i(\ell))=F(\ell,f(\ell)) \neq 0 \ , \qquad C^{\mathcal{G}}_\ell(\vec{n}_\ell) = \delta_{\langle \vec{n}_\ell \rangle,1} \ .
\end{gather}
\end{description}
Notice that conservation of flavour at the internal vertices is expressed by the equation
\begin{gather}
\sum_\ell F(\ell,a) = 0 \ , \quad \text{for } a \in V(\mathcal{G})-\{\xi_0,\xi_1\} \ .
\label{appW:flavourconservation}
\end{gather}

The function $F^M_{\mathcal{G}}(x)$ has the following structure
\begin{gather}
F^M_{\mathcal{G}}(x) = \sum_{\vec{n}} F^M_{\mathcal{G}}(x,\vec{n}) \ , \\
F^M_{\mathcal{G}}(x,\vec{n}) =
\mathbb{V}^M_{\mathcal{G}}
\prod_{\ell \in L(\mathcal{G})} C^{\mathcal{G}}_\ell(\vec{n}_\ell) \ \Delta_M(\delta_\ell x +  L n_\ell;m_\ell) \ ,
\end{gather}
where $\mathbb{V}^M_{\mathcal{G}}$ is a differential operator acting an all the coordinates $x(a)$ (in fact derivatives might be inserted in between propagators), and includes also couplings and combinatorial factors.
The detailed structure of $\mathbb{V}^M_{\mathcal{G}}$ is of no interest for the current discussion.

\paragraph{Boundary conditions for the Feynman integrand.} \Cstar{}-boundary conditions for the fields imply some peculiar boundary conditions for the Feynman integrand. Shifting the coordinate of a single vertex by
\begin{gather}
x(a) \to x(a) + \hat{L}_i
\end{gather}
is equivalent to replacing the operator inserted at the vertex $a$ with its charge-conjugate (the propagators have to be modified accordingly). The diagram obtained by this procedure is denoted by $c_a(\mathcal{G})$. Since the interaction Lagrangian and the considered interpolating operators are invariant under charge-conjugation, $\mathcal{G}$ is a diagram contributing to the two-point function if and only if $c_a(\mathcal{G})$ is a diagram contributing to the two-point function. It is also easy to check that the class $\mathcal{D}_s$ is closed under the action of $c_a$. By iterating the action of $c_a$, and by noticing that $c_a^2$ is the identity we get
\begin{gather}
F^M_{\mathcal{G}}(x_0,\vec{x} + L \bm{\lambda}) = F^M_{c^{\bm{\lambda}}(\mathcal{G})}(x_0,\vec{x}) \ ,  \quad \bm{\lambda}(a) \in \mathbb{Z}^3 \ ,
\label{appW:BC1}
\\
c^{\bm{\lambda}}(\mathcal{G}) = \prod_{a \in V(\mathcal{G})} c_a^{\langle \bm{\lambda}(a) \rangle}(\mathcal{G}) \ .
\end{gather}
Under the action of $c_a$, the vertex operator does not change. Propagators of type 1 attached to the vertex $a$ (at only one of the endpoints) are invariant, propagators of type 2 flip sign, propagators of type 3 are replaced with propagators of type 4 and \textit{vice versa}. In formulae this is equivalent to:
\begin{gather}
\mathbb{V}^M_{c^{\bm{\lambda}}(\mathcal{G})} = \mathbb{V}^M_{\mathcal{G}} \ , \\
C^{c^{\bm{\lambda}}(\mathcal{G})}_\ell(\vec{n}_\ell) = C^{\mathcal{G}}_\ell(\vec{n}_\ell - \delta_\ell \bm{\lambda}) \ .
\end{gather}
The boundary conditions for the function $F^M_{\mathcal{G}}(x,\vec{n})$ follow
\begin{gather}
F^M_{\mathcal{G}}(x_0,\vec{x} + L \bm{\lambda},\vec{n}) = F^M_{c^{\bm{\lambda}}(\mathcal{G})}(x_0,\vec{x},\vec{n} + \delta \bm{\lambda}) \ .
\label{appW:BC2}
\end{gather}

We can use the above boundary conditions to restrict the sum over the pairs $(\mathcal{G},\vec{n})$ and simultaneously extend the coordinate integration range to the whole $\mathbb{R}^4$, by means of a construction that has been discussed already in~\cite{Luscher:1985dn}. We will say that the two pairs $(\mathcal{G},\vec{n})$ and $(\mathcal{G}',\vec{n}')$ are \textit{gauge-equivalent} if and only if $\bm{\lambda}(a) \in \mathbb{Z}^3$ with $a \in V(\mathcal{G})$ exists such that:
\begin{gather}
\mathcal{G}' = c^{\bm{\lambda}}(\mathcal{G}) \ , \\
\vec{n}'_\ell = \vec{n}_\ell + \delta_\ell \bm{\lambda} = \vec{n}_\ell + \bm{\lambda}(f(\ell)) - \bm{\lambda}(i(\ell)) \ .
\end{gather}
The field $\vec{n}$ defined on lines is referred to as \textit{gauge field}, and the field $\bm{\lambda}$ defined on vertices is referred to as \textit{gauge transformation}. The set of all possible pairs $(\mathcal{G},\vec{n})$ splits into equivalence classes denoted by $[(\mathcal{G},\vec{n})]$. As shown in~\cite{Luscher:1985dn}, given two equivalent gauge fields the gauge transformation that relates the two of them is unique up to a global gauge transformation. The sum over the pairs $(\mathcal{G},\vec{n})$ can be written as a sum over the equivalence classes and a sum over the gauge transformations with $\bm{\lambda}(\xi_0)=0$. The spectral density for $E<M_{\Xi^-}$ becomes
\begin{flalign}
\rho(E;L) = \frac{1}{\pi} \Im \sum_{\substack{[(\mathcal{G},\vec{n})]\\ \text{s.t. } \mathcal{G} \in \mathcal{D}_s}} 
\hspace{3mm}
\sum_{\substack{\bm{\lambda} \text{ s.t. } \\ \bm{\lambda}(\xi_0)=0}} &
 \imath \left\{ \prod_{a \in V(\mathcal{G})-\xi_0} \int_{\mathbb{R} \times L^3} d^4 x(a) \right\} \times \nonumber \\
& \times \theta(x_0(\xi_1)) e^{\imath E x_0(\xi_1)} F^M_{c^{\bm{\lambda}}(\mathcal{G})}(x,\vec{n} + \delta \bm{\lambda})|_{x(\xi_0)=0} \ .
\label{appW:DR2}
\end{flalign}
By using the boundary conditions~\eqref{appW:BC2}, one can use the sum over the gauge transformations to reconstruct the integrals over $\mathbb{R}^4$:
\begin{flalign}
\rho(E;L) = \frac{1}{\pi} \Im \sum_{\substack{[(\mathcal{G},\vec{n})]\\ \text{s.t. } \mathcal{G} \in \mathcal{D}_s}} &
\imath \left\{ \prod_{a \in V(\mathcal{G})-\xi_0} \int_{\mathbb{R}^4} d^4 x(a) \right\} \times \nonumber \\
& \times 
\theta(x_0(\xi_1)) e^{\imath E x_0(\xi_1)} F^M_{\mathcal{G}}(x,\vec{n})|_{x(\xi_0)=0} \ .
\label{appW:DR3}
\end{flalign}

\paragraph{Strangeness flow in Feynman diagrams.} We introduce some general definitions. 

\textit{Paths.} A path $P$ connecting the vertices $a \neq b$ is a set of lines, with the property that a sequence $a = v_1, v_2, \dots , v_N = b$ of pairwise different vertices exists, together with a labelling $\ell_1,\ell_2,\dots,\ell_{N-1}$ of all lines in $P$, such that $v_k$, $v_{k+1}$ are the endpoints of $\ell_k$.

\textit{Loops and Wilson loops.} A loop $C$ is a set of lines, with the property that a sequence $v_1, v_2, \dots , v_N$ of pairwise different vertices exists, together with a labelling $\ell_1,\ell_2,\dots,\ell_N$ of all lines in $C$, such that $v_k$, $v_{k+1}$ are the endpoints of $\ell_k$ for $k=1,\cdots,N-1$, and $v_N$, $v_1$ are endpoints of $\ell_N$. The Wilson loop associated to $C$ is defined by
\begin{gather}
W(C,\vec{n}) = \sum_{\ell \in C} \vec{n}_\ell \ .
\end{gather}
Notice that a Wilson loop is gauge invariant.

\textit{Trees and axial gauge.} A tree $\mathcal{T}$ in a connected graph is a maximal set of lines that contains no loops. If $\mathcal{T}$  is a tree of $\mathcal{G}$, for any pair of vertices $a \neq b$ of $\mathcal{G}$, there is a unique path $P \subseteq \mathcal{T}$ connecting $a$ and $b$. Any connected graph with more than one vertex has at least one non-empty tree. Given a tree $\mathcal{T}$, a gauge field $\vec{n}$ is said to be in axial gauge with respect to $\mathcal{T}$ if $\vec{n}_\ell = \vec{0}$ for any $\ell \in \mathcal{T}$. It is easy to show that any gauge field is gauge-equivalent to some gauge field in axial gauge with respect to $\mathcal{T}$.

We give a closer look at the flavour structure of diagrams in $\mathcal{D}_s$, focusing in particular on strangeness. A line $\ell$ is said to be \textit{strange} if $S(\ell,i(\ell)) \neq 0$, and \textit{strangeless} otherwise. For any diagram $\mathcal{G} \in \mathcal{D}_s$, we define the subdiagram $\mathcal{G}_s$ by taking only the strange lines in $\mathcal{G}$ and vertices that are attached to these lines. Clearly the vertices $\xi_0$ and $\xi_1$ corresponding to the interpolating operators belong to $\mathcal{G}_s$. Because of the defining property of $\mathcal{D}_s$, there is no path in $\mathcal{G}_s$ connecting $\xi_0$ and $\xi_1$. We define $\mathcal{G}_{s,0}$ as the connected component of $\mathcal{G}_s$ containing $\xi_0$.  By specializing eq.~\eqref{appW:flavourconservation} to strangeness, and by observing that strangeness cannot flow outside of $\mathcal{G}_{s,0}$, one obtains the following:

\begin{proposition}
\label{appW:lemma:Gs0}
Strangeness is conserved within $\mathcal{G}_{s,0}$ at all vertices of $\mathcal{G}_{s,0}$ except $\xi_0$, i.e.
\begin{gather}
\sum_{\ell \in L(\mathcal{G}_{s,0})} S(\ell,a) = 0 \ , \quad \text{for } a \in V(\mathcal{G}_{s,0})-\xi_0 \ .
\label{appW:strangenessconservation}
\end{gather}

\end{proposition}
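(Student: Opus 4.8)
The plan is to reduce the statement to the vertex flavour-conservation identity \eqref{appW:flavourconservation}, specialised to the strangeness number, by showing that the strangeness balance at a vertex of $\mathcal{G}_{s,0}$ is carried entirely by lines that lie inside $\mathcal{G}_{s,0}$. Two elementary observations do the work.

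First I would note that a strangeless line carries no strangeness at either of its endpoints. This is immediate from the Type~1--4 classification: every line $\ell$ satisfies $F(\ell,i(\ell)) = \pm F(\ell,f(\ell))$, hence $S(\ell,i(\ell)) = \pm S(\ell,f(\ell))$, so $S(\ell,i(\ell))=0$ precisely when $S(\ell,f(\ell))=0$. It follows that, for any vertex $a$, the sum $\sum_{\ell\in L(\mathcal{G})} S(\ell,a)$ receives contributions only from the strange lines incident on $a$, i.e. from the lines of $\mathcal{G}_s$ incident on $a$.

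Next I would invoke the defining property of a connected component. If $a \in V(\mathcal{G}_{s,0})$ and $\ell$ is a strange line incident on $a$, then $\ell$ belongs to the connected component of $\mathcal{G}_s$ that contains $a$, which is $\mathcal{G}_{s,0}$; and conversely every line of $\mathcal{G}_{s,0}$ is strange. Hence, for $a\in V(\mathcal{G}_{s,0})$, the strange lines incident on $a$ are exactly the lines of $\mathcal{G}_{s,0}$ incident on $a$. Together with the first observation this gives $\sum_{\ell\in L(\mathcal{G}_{s,0})} S(\ell,a) = \sum_{\ell\in L(\mathcal{G})} S(\ell,a)$ for every $a\in V(\mathcal{G}_{s,0})$.

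Finally I would apply \eqref{appW:flavourconservation}, which makes the right-hand side vanish for every $a \in V(\mathcal{G})-\{\xi_0,\xi_1\}$. To see that this covers all of $V(\mathcal{G}_{s,0})-\{\xi_0\}$, I would use $\mathcal{G}\in\mathcal{D}_s$: if $\xi_1$ lay in $V(\mathcal{G}_{s,0})$ it would be in the same connected component of $\mathcal{G}_s$ as $\xi_0$ and hence joined to it by a path in $\mathcal{G}_s$, contradicting the definition of $\mathcal{D}_s$; therefore $\xi_1 \notin V(\mathcal{G}_{s,0})$ and $V(\mathcal{G}_{s,0})-\{\xi_0\} \subseteq V(\mathcal{G})-\{\xi_0,\xi_1\}$. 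This establishes \eqref{appW:strangenessconservation}. There is no genuine obstacle in this argument beyond careful bookkeeping; the only step that really needs to be spelled out is the ``no leakage'' claim, namely that a strange line touching a vertex of $\mathcal{G}_{s,0}$ cannot escape $\mathcal{G}_{s,0}$ — which is precisely the maximality of a connected component and is the content of the phrase ``strangeness cannot flow outside of $\mathcal{G}_{s,0}$''.
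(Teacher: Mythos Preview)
Your proof is correct and is precisely a fleshed-out version of the paper's one-sentence justification: the paper simply states that the proposition follows ``by specializing eq.~\eqref{appW:flavourconservation} to strangeness, and by observing that strangeness cannot flow outside of $\mathcal{G}_{s,0}$'', and you have unpacked exactly those two ingredients, including the check that $\xi_1\notin V(\mathcal{G}_{s,0})$.
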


\bigskip

\begin{lemma}
\label{appW:lemma:loop}
For any pair $(\mathcal{G},\vec{n})$ such that $\mathcal{G} \in \mathcal{D}_s$ and $F^M_{\mathcal{G}}(x,\vec{n})$ is not identically zero, then a loop $C$ exists in $\mathcal{G}_{s,0}$ such that $|W(C,\vec{n})| \ge 1$.
\end{lemma}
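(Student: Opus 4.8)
\textit{Proof plan.} The plan is to exhibit a loop $C\subseteq\mathcal{G}_{s,0}$ through an \emph{odd} number of type-4 lines. This suffices: since $F^M_{\mathcal{G}}(x,\vec{n})$ is not identically zero, we must have $C^{\mathcal{G}}_\ell(\vec{n}_\ell)\neq0$ for every line $\ell$, which forces $\langle\vec{n}_\ell\rangle=1$ on type-4 lines and $\langle\vec{n}_\ell\rangle=0$ on type-3 lines (type-1 and type-2 lines are flavourless and hence absent from $\mathcal{G}_{s,0}$, which consists of strange, thus flavourful, lines only). Therefore $\langle W(C,\vec{n})\rangle=\sum_{\ell\in C}\langle\vec{n}_\ell\rangle\equiv\#\{\ell\in C:\ \ell\ \text{type-4}\}\pmod2$, so if that number is odd then $W(C,\vec{n})\neq\vec{0}$, i.e.\ $|W(C,\vec{n})|\geq1$.

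Next I would reformulate the combinatorial claim in the language of signed graphs: regard the connected graph $\mathcal{G}_{s,0}$ with type-3 lines as positive edges and type-4 lines as negative edges. By Harary's balance theorem, either $\mathcal{G}_{s,0}$ contains a loop with an odd number of negative (type-4) lines, and we are done by the previous paragraph; or $\mathcal{G}_{s,0}$ is balanced, i.e.\ there is a map $\sigma:V(\mathcal{G}_{s,0})\to\Z_2$ with $\sigma(i(\ell))=\sigma(f(\ell))$ for every type-3 line and $\sigma(i(\ell))\neq\sigma(f(\ell))$ for every type-4 line. Concretely one fixes a spanning tree of $\mathcal{G}_{s,0}$ and sets $\sigma(a)$ equal to the parity of the number of type-4 lines on the tree path from $\xi_0$ to $a$; if any non-tree line violates the required rule, its fundamental cycle is the desired odd loop, so we may assume $\sigma$ is globally consistent and derive a contradiction.

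Write $V(\mathcal{G}_{s,0})=V_0\sqcup V_1$ with $V_j=\sigma^{-1}(j)$, and say $\xi_0\in V_0$ (the case $\xi_0\in V_1$ is symmetric). I would sum the strangeness-conservation identity of Proposition~\ref{appW:lemma:Gs0} over the vertices of $V_1$ and, separately, over the vertices of $V_0$, then exchange the order of the sums over vertices and lines. A type-3 line contributes $S(\ell,i(\ell))+S(\ell,f(\ell))=0$ when its two endpoints lie in the same $V_j$ and nothing otherwise, so only type-4 lines survive; each such line has one endpoint $v_0(\ell)\in V_0$ and one endpoint $v_1(\ell)\in V_1$, and by definition of type-4 one has $S(\ell,v_0(\ell))=S(\ell,v_1(\ell))$. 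Hence both rearranged sums equal the \emph{same} quantity $\sum_{\ell\ \text{type-4}}S(\ell,v_0(\ell))$. On the other hand, the sum over $V_1$ equals $0$ (no vertex of $V_1$ is $\xi_0$, so Proposition~\ref{appW:lemma:Gs0} applies verbatim), while the sum over $V_0$ equals the total strangeness of the fields attached to $\xi_0$, i.e.\ the strangeness $\pm2$ of the $\Xi^-$ interpolating operator (or its \C{}-conjugate). Since $\pm2\neq0$, this is the desired contradiction, so $\mathcal{G}_{s,0}$ cannot be balanced.

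The only step that requires insight is the second paragraph: one must notice that the type-4 lines are simultaneously the lines carrying $\langle\vec{n}_\ell\rangle=1$ and the lines across which strangeness is ``transported with a flip'', so that the winding-number obstruction and the strangeness-flow obstruction coincide as a single $\Z_2$ class. Once this is recognised, the contradiction is a short double counting. The remaining ingredients — that $\mathcal{D}_s$ is stable under the operations used, that $\xi_1\notin\mathcal{G}_{s,0}$ because $\mathcal{G}\in\mathcal{D}_s$ (so $\xi_0$ is the unique strangeness source in $\mathcal{G}_{s,0}$), and the two symmetric cases $\xi_0\in V_0$ and $\xi_0\in V_1$ — are routine.
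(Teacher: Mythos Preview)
Your proof is correct. The paper's argument is close in spirit but packaged differently: it first gauge-transforms $(\mathcal{G},\vec{n})$ to axial gauge along a spanning tree $\mathcal{T}$ of $\mathcal{G}_{s,0}$, so that every tree line has $\vec{n}_\ell=\vec{0}$ and is therefore forced to be type-3; then a \emph{single} sum of the strangeness identity over all vertices of $\mathcal{G}_{s,0}-\{\xi_0\}$ shows that not every line can be type-3, producing a type-4 non-tree line $\bar\ell$ with $\vec{n}_{\bar\ell}\neq\vec{0}$, and the fundamental cycle of $\bar\ell$ is the desired loop with $|W|=|\vec{n}_{\bar\ell}|$. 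Your route avoids the gauge fixing by working with the $\Z_2$-reduction $\langle\,\cdot\,\rangle$ directly, recasting the problem as (un)balance of a signed graph; the spanning-tree construction of $\sigma$ you sketch is precisely the $\Z_2$-shadow of the paper's axial gauge, and ``balanced'' corresponds to ``all lines type-3 after gauging''. Where the approaches genuinely differ is the contradiction step: the paper can use a one-line global sum because gauging has already killed all type-4 contributions, whereas you keep the type-4 lines and exploit the bipartition to make two sums that must be equal yet evaluate to $0$ and $\pm2$. Your argument is a bit more intrinsic (no appeal to gauge equivalence of the hypotheses); the paper's is shorter once the gauge freedom is used.
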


\begin{proof}
We consider a tree $\mathcal{T}$ in $\mathcal{G}_{s,0}$, and we assume without loss of generality that $(\mathcal{G},\vec{n})$ is in axial gauge with respect to the tree $\mathcal{T}$. Notice that all lines in $\mathcal{G}_{s,0}$ are flavourful, therefore they can be either of type 3 (\textit{flavour-preserving}) or of type 4 (\textit{flavour-violating}). If $\ell \in \mathcal{T}$ then $\vec{n}_\ell=\vec{0}$ by definition of axial gauge. In this case $\ell$ must be flavour-preserving, otherwise the propagator would contribute with a $\delta_{\langle \vec{n}_\ell \rangle,1}=0$ factor to $F^M_{\mathcal{G}}(x,\vec{n})$.

If all lines in $\mathcal{G}_{s,0}$ were flavour-preserving, by conservation of strangeness at the internal vertices, eq.~\eqref{appW:strangenessconservation}, we would have
\begin{flalign}
0
=
& \sum_{a \in V(\mathcal{G}_{s,0}) - \xi_0} \sum_{\ell \in L(\mathcal{G}_{s,0})} S(\ell,a)
=
\sum_{a \in V(\mathcal{G}_{s,0})} \sum_{\ell \in L(\mathcal{G}_{s,0})} S(\ell,a) - \sum_{\ell \in L(\mathcal{G}_{s,0})} S(\ell,\xi_0)
= \nonumber \\
= & \sum_{\ell \in L(\mathcal{G}_{s,0})} [ S(\ell,i(\ell)) + S(\ell,f(\ell))] - \sum_{\ell \in L(\mathcal{G}_{s,0})} S(\ell,\xi_0) = - \sum_{\ell \in L(\mathcal{G}_{s,0})} S(\ell,\xi_0) \ ,
\end{flalign}
which is in contradiction with the fact that the interpolating operator has strangeness equal to $\pm 2$. Therefore at least a flavour-violating line $\bar{\ell} \in L(\mathcal{G}_{s,0}) - \mathcal{T}$ exists. Since the propagator of a flavour-violating line comes with a factor $\delta_{\langle \vec{n}_{\bar{\ell}} \rangle,1}$ then necessarily $\vec{n}_{\bar{\ell}} \neq \vec{0}$.

Consider the path $P$ in $\mathcal{T}$ that connects $i(\bar{\ell})$ to $f(\bar{\ell})$, then it is straightforward to prove that $C=P \cup \{ \bar{\ell} \}$ is a loop in $\mathcal{G}_{s,0}$ and
\begin{gather}
\vert W(C,\vec{n})\vert = |\vec{n}_{\bar{\ell}}| \neq 0 \ .
\end{gather}
The thesis follows from the fact that the components of $\vec{n}_{\bar{\ell}}$ are integers.

\end{proof}

%\bigskip

\paragraph{Time-ordering and Euclidean kernel.} It is convenient to separate different time-orderings in integral~\eqref{appW:DR3}. Formally the set $T(\mathcal{G})$ of all time-orderings of $\mathcal{G}$ is defined as the set of permutations of $V(\mathcal{G})$, i.e. the set of all bijective functions between $V(\mathcal{G})$ and $\{1,2,\dots, |V(\mathcal{G})| \}$ where $|V(\mathcal{G})|$ is the number of vertices of $\mathcal{G}$. Given a time-ordering $\tau \in T(\mathcal{G})$, and two vertices $a,b \in V(\mathcal{G})$ such that $\tau(a) < \tau(b)$, we will say that $a$ is \textit{before} $b$ and $b$ is \textit{after} $a$ (with respect to $\tau$). We define the time-ordering function associated to $\tau$ as
\begin{gather}
\theta_\tau(x) = \prod_{\substack{a,b \in V(\mathcal{G})\\ \text{s.t. } \tau(a) < \tau(b)}} \theta(x_0(b)-x_0(a)) \ ,
\end{gather}
and we insert the identity in the integral~\eqref{appW:DR3} in the form of
\begin{gather}
1 = \sum_{\tau \in T(\mathcal{G})} \theta_\tau(x) \ .
\end{gather}
Not all time-orderings contribute to the spectral density for $E<M_{\Xi^-}$. States with no strange particles need to be able to propagate at some intermediate time between the two external vertices. Therefore only time-orderings such that $\xi_1$ is after any vertex in $\mathcal{G}_{s,0}$ contribute. We define $T_s(\mathcal{G})$ the set of such time-orderings. Eq.~\eqref{appW:DR3} becomes
\begin{flalign}
\rho(E;L) = \frac{1}{\pi} \Im \sum_{\substack{[(\mathcal{G},\vec{n})]\\ \text{s.t. } \mathcal{G} \in \mathcal{D}_s}} \sum_{\tau \in T_s(\mathcal{G})} &
\imath \left\{ \prod_{a \in V(\mathcal{G})-\xi_0} \int_{\mathbb{R}^4} d^4 x(a) \right\} \times \nonumber \\
& \times 
\theta_\tau(x) e^{\imath E x_0(\xi_1)} F^M_{\mathcal{G}}(x,\vec{n})|_{x(\xi_0)=0} \ .
\label{appW:DR4}
\end{flalign}

We choose a diagram and a time-ordering contributing to the previous formula. Without loss of generality we can assume that each line is oriented in such a way that its final point is not before its initial point.
Let $v^\tau_-$ be the latest vertex in $\mathcal{G}_{s,0}$, i.e. the vertex in $\mathcal{G}_{s,0}$ with the property that any other vertex in $\mathcal{G}_{s,0}$ is before it. Let $v^\tau_+$ be the vertex right after $v^\tau_-$, i.e. the vertex defined by $\tau(v^\tau_+) = \tau(v^\tau_-)+1$.

We construct the subdiagram $\mathcal{G}^\tau_+$ (resp. $\mathcal{G}^\tau_-$) by taking all the vertices in $\mathcal{G}$ not before $v^\tau_+$ (resp. not after $v^\tau_-$) and all lines in $\mathcal{G}$ connecting any pair of these vertices. Let $L^\tau_0$ be the set of lines with one endpoint in $\mathcal{G}^\tau_-$ and one in $\mathcal{G}^\tau_+$. This decomposition induces a factorization of the integrand in eq.~\eqref{appW:DR4}. The time-ordering function factorizes as
\begin{gather}
\theta_\tau(x) = \theta(x_0(v^\tau_+)-x_0(v^\tau_-)) \theta_{\tau_+}(x) \theta_{\tau_-}(x) \ , \\
\theta_{\tau_\pm}(x) = \prod_{\substack{a,b \in V(\mathcal{G}^\tau_\pm) \\ \text{s.t. } \tau(a) < \tau(b)}}
\theta(x_0(b)-x_0(a)) \ ,
\label{appW:decomposition1}
\end{gather}
where $\tau_+$ (resp. $\tau_-$) is the time-ordering restricted to $\mathcal{G}^\tau_+$ (resp. $\mathcal{G}^\tau_-$). The Feynman integrand factorizes as
\begin{gather}
F^M_{\mathcal{G}}(x,\vec{n})
=
\sum_\mu F^M_{\mu,\mathcal{G}^\tau_+}(x,\vec{n}) F^M_{\mu,\mathcal{G}^\tau_-}(x,\vec{n})
\prod_{\ell \in L^\tau_0} \  C^{\mathcal{G}}_\ell(\vec{n}_\ell) \ \Delta_M^+(\delta_\ell x +  L n_\ell;m_\ell)
\ ,
\label{appW:decomposition2}
\end{gather}
where $\mu$ is some collective Lorentz index (in the following we will omit the sum over $\mu$). Notice that $F^M_{\mu,\mathcal{G}^\tau_+}(x,\vec{n})$ (resp. $F^M_{\mu,\mathcal{G}^\tau_-}(x,\vec{n})$) contain differential operators acting on coordinates of the final (resp. initial) points of the lines in $L^\tau_0$. $\Delta_M^+(x,m)$ is the retarded Minkowskian propagator which we can conveniently write in time-momentum representation
\begin{gather}
\Delta_M^+(x;m) = \theta(x_0) \int \frac{d^3 p}{(2\pi)^3 2E} e^{-\imath (E x_0 - \vec{p} \vec{x})} \ , \quad \text{with } E=\sqrt{m^2 + \vec{p}^2} \ .
\end{gather}
We plug the factorizations~\eqref{appW:decomposition1} and~\eqref{appW:decomposition2} and the explicit representation of the retarded propagator into eq.~\eqref{appW:DR4}, we substitute $x_0(a) \to x_0(a) + x_0(v_-)$ for any $a \in V(\mathcal{G}_+)$, and we use invariance under translations of $F^M_{\mu,\mathcal{G}_+}(x,\vec{n})$. The integrals over the coordinates factorizes over the two subgraphs:
\begin{flalign}
\rho(E;L) = \frac{1}{\pi} \Im \sum_{\substack{[(\mathcal{G},\vec{n})]\\ \text{s.t. } \mathcal{G} \in \mathcal{D}_s}} \sum_{\tau \in T_s(\mathcal{G})} &
\left\{ \prod_{\ell \in L^\tau_0} \ \int \frac{d^3 p_\ell}{(2\pi)^3 2E_\ell} \right\} \times \nonumber \\
& \times 
(2\pi)^3 \delta^3 ({\textstyle \sum_{\ell \in L^\tau_0}} \vec{p}_\ell ) K^-_{\mu,\mathcal{G},\tau}(E,\vec{p},\vec{n}) R^+_{\mu,\mathcal{G},\tau}(E,\vec{p},\vec{n})
\ .
\label{appW:DR5}
\end{flalign}
The explicit expression for $R^+_{\mu,\mathcal{G},\tau}(E,\vec{p},\vec{n})$ is given for sake of completeness, but we will not need it in the current discussion
\begin{flalign}
R^+_{\mu,\mathcal{G},\tau}(E,\vec{p},\vec{n}) = &
\frac{\prod_{\ell \in L^\tau_0} C^{\mathcal{G}}_\ell(\vec{n}_\ell)}{\sum_{\ell \in L^\tau_0} E_\ell - E -\imath \epsilon } \left\{ \prod_{a \in V(\mathcal{G}^\tau_+)} \int_{\mathbb{R}^4} d^4 x(a) \right\} \theta_{\tau_+}(x)
\times  \nonumber
\\
& \times
F^M_{\mu,\mathcal{G}^\tau_+}(x,\vec{n}) \delta^4(x(v^\tau_+))
e^{\imath E x_0(\xi_1)} \prod_{\ell \in L^\tau_0} e^{-\imath E_\ell x_0(f(\ell))}  e^{\imath \vec{p}_\ell \vec{x}(f(\ell))} \ .
\end{flalign}
The explicit expression for $K^-_{\mu,\mathcal{G},\tau}(E,\vec{p},\vec{n})$ is given after Wick rotation of the coordinate integrals $x_0(a) \to - \imath x_0(a)$:
\begin{flalign}
K^-_{\mathcal{G},\tau}(E,\vec{p},\vec{n})
= &
w_{\mu,\mathcal{G},\tau}
\left\{ \prod_{a \in V(\mathcal{G}^\tau_-)} \int_{\mathbb{R}^4} d^4 x(a) \right\}
\delta^4(x(\xi_0)) \, \theta_{\tau_-}(x) F^E_{\mu,\mathcal{G}^\tau_-}(x,\vec{n}) \times  \nonumber
\\
& \times
e^{E x_0(v^\tau_-)} 
e^{- \sum_{\ell \in L^\tau_0} E_\ell [x_0(v^\tau_-)-x_0(i(\ell))]} e^{- \imath  \sum_{\ell \in L^\tau_0} \vec{p}_\ell [\vec{x}(i(\ell)) + L \vec{n}_\ell]} \ ,
\label{appW:EK0}
\end{flalign}
where $E_\ell = \sqrt{m_\ell^2 + \vec{p}_\ell^2}$, $F^E_{\mathcal{G}^\tau_-}(x,\vec{n})$ is constructed as $F^M_{\mathcal{G}^\tau_-}(x,\vec{n})$ except that the Minkowskian propagators and vertices are replaced by the Euclidean ones, and $w_{\mu,\mathcal{G},\tau}$ is some constant phase factor. The Wick rotation is allowed for $K^-_{\mathcal{G},\tau}(E,\vec{p},\vec{n})$ because all states propagating in between the interpolating operator $\xi_0$ and the latest vertex $v^\tau_-$ have energy that is higher than $E$ by construction. We will refer to the function $K^-_{\mathcal{G},\tau}(E,\vec{p},\vec{n})$ as \textit{Euclidean kernel} associated to the diagram $\mathcal{G} \in \mathcal{D}_s$ and the time-ordering $\tau \in T_s(\mathcal{G})$.

We highlight the following observation, which follows from the construction of the Euclidean kernel.
\begin{observation}
The latest vertex $v^\tau_-$ of $\mathcal{G}_{s,0}$ is also the latest vertex of $\mathcal{G}^\tau_-$. Since $\mathcal{G}^\tau_-$ contains all vertices of $\mathcal{G}$ not after $v^\tau_-$, and all lines of $\mathcal{G}$ with both endpoints not after $v^\tau_-$, then $\mathcal{G}_{s,0}$ is a subdiagram of $\mathcal{G}^\tau_-$.
\end{observation}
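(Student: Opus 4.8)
The statement is a bookkeeping consequence of the definitions of $v^\tau_-$, $\mathcal{G}^\tau_-$ and $\mathcal{G}_{s,0}$, so the plan is to unwind those definitions and verify two inclusions. First I would record the only genuine input that is used, namely that $\tau \in T_s(\mathcal{G})$: by construction this means $\xi_1$ is after every vertex of $\mathcal{G}_{s,0}$, hence after $v^\tau_-$, so $\tau(v^\tau_-) < |V(\mathcal{G})|$ and the vertex $v^\tau_+$ with $\tau(v^\tau_+) = \tau(v^\tau_-)+1$ genuinely exists. This makes the decomposition $\mathcal{G} = \mathcal{G}^\tau_- \cup \mathcal{G}^\tau_+$ well defined, with $\xi_0 \in V(\mathcal{G}_{s,0}) \subseteq V(\mathcal{G}^\tau_-)$ and $\xi_1 \in V(\mathcal{G}^\tau_+)$.

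For the first half of the claim I would spell out that $V(\mathcal{G}^\tau_-) = \{ a \in V(\mathcal{G}) : \tau(a) \le \tau(v^\tau_-) \}$. Since $v^\tau_-$ itself lies in this set and is precisely its element of maximal $\tau$-value, it is by definition the latest vertex of $\mathcal{G}^\tau_-$; no other vertex of $\mathcal{G}^\tau_-$ can come after it.

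For the second half I would check the subdiagram inclusion $\mathcal{G}_{s,0} \subseteq \mathcal{G}^\tau_-$ first on vertices, then on lines. Every vertex $a$ of $\mathcal{G}_{s,0}$ satisfies $\tau(a) \le \tau(v^\tau_-)$ by the defining property of $v^\tau_-$ as the latest vertex of $\mathcal{G}_{s,0}$, hence $a \in V(\mathcal{G}^\tau_-)$. Consequently every line of $\mathcal{G}_{s,0}$ has both endpoints in $V(\mathcal{G}_{s,0}) \subseteq V(\mathcal{G}^\tau_-)$, and since $\mathcal{G}^\tau_-$ was defined to contain every line of $\mathcal{G}$ whose endpoints both lie in $V(\mathcal{G}^\tau_-)$, that line belongs to $\mathcal{G}^\tau_-$. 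Inclusion of both vertex sets and line sets is exactly the statement that $\mathcal{G}_{s,0}$ is a subdiagram of $\mathcal{G}^\tau_-$.

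I do not expect a real obstacle here; the observation is essentially definitional. The two points one must not skip are: (i) verifying that $v^\tau_+$, and hence the entire $\mathcal{G}^\tau_\pm$ decomposition, actually exists — this is precisely where the restriction to time-orderings in $T_s(\mathcal{G})$ is needed — and (ii) remembering that $\mathcal{G}^\tau_-$ is by construction an \emph{induced} (full) subdiagram, i.e.\ it carries all lines of $\mathcal{G}$ between its vertices, not merely some of them, which is what makes the line inclusion automatic. With these in place the observation follows by inspection, and it is what will license isolating the entire strangeness-carrying subdiagram $\mathcal{G}_{s,0}$ inside the Euclidean kernel $K^-_{\mathcal{G},\tau}$, so that the exponential bound on $\rho(E;L)$ can ultimately be read off from the energies $E_\ell$ of the propagators crossing the cut $L^\tau_0$.
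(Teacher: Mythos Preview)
Your proposal is correct and matches the paper's approach, which in fact does not supply a separate proof at all: the paper simply states that the observation ``follows from the construction of the Euclidean kernel'' and leaves the verification to the reader. You have spelled out precisely that verification, and the two points you flag --- the existence of $v^\tau_+$ (guaranteed by $\tau \in T_s(\mathcal{G})$) and the fact that $\mathcal{G}^\tau_-$ is an induced subdiagram --- are exactly the definitional ingredients the paper relies on implicitly.
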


%\bigskip

\paragraph{Large-volume behaviour of the Euclidean kernel.} We will see that the Euclidean kernel provides the exponential suppression in the infinite-volume limit. We will establish this result in a few steps.

\begin{theorem}
\label{appW:exp1}
For large $L$, we have
\begin{gather}
\ln |K^-_{\mu,\mathcal{G},\tau}(E,\vec{p},\vec{n})| = - L \mathcal{E}_{\mathcal{G},\tau}(E,\vec{p},\vec{n}) + \mathcal{O}(\ln L) \ ,
\label{appW:asymptK}
\end{gather}
where the function $\mathcal{E}_{\mathcal{G},\tau}(E,\vec{p},\vec{n})$ is given by
\begin{gather}
\mathcal{E}_{\mathcal{G},\tau}(E,\vec{p},\vec{n}) = \nonumber \\
= \min_{x \in D_{\mathcal{G},\tau}} \left\{ - E x_0(v^\tau_-) + \sum_{\ell \in L^\tau_0} E_\ell [x_0(v^\tau_-)-x_0(i(\ell))] + \sum_{\ell \in L(\mathcal{G}^\tau_-)} m_\ell | \delta_\ell x + n_\ell| \right\} \ ,
\label{appW:eps0}
\\
D_{\mathcal{G},\tau} = \{ ( \, x(a) \, )_{a \in V(\mathcal{G}^\tau_-)} \ | \ x(a)  \in \mathcal{R}^4 \ , \ \theta_{\tau_-}(x) = 1 \ , \ x(\xi_0)=0 \} \ .
\end{gather}
\end{theorem}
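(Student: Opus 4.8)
The plan is to reduce the estimate \eqref{appW:asymptK} to a Laplace-type (saddle-point) argument, obtained after rescaling all vertex coordinates and all Schwinger parameters by $L$. Starting from the explicit formula \eqref{appW:EK0}, I would write each scalar Euclidean propagator entering $F^E_{\mu,\mathcal{G}^\tau_-}$ in the heat-kernel representation
\begin{gather}
\Delta_E(x;m) = \int_0^\infty \frac{ds}{(4\pi s)^2}\, e^{-m^2 s - \frac{|x|^2}{4s}} \ ,
\end{gather}
the spin/Lorentz structure, couplings and combinatorial factors being carried by the operator $\mathbb{V}^E_{\mathcal{G}^\tau_-}$, which only inserts derivatives with bounded coefficients between these propagators. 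I would then substitute $x(a) \to L\,x(a)$ for $a \in V(\mathcal{G}^\tau_-) \setminus \{\xi_0\}$ (with $x(\xi_0)=0$ frozen by the delta function) and $s_\ell \to L\,s_\ell$ for every line $\ell$. The Jacobian, the weights $(4\pi L s_\ell)^{-2}$, the rescaled derivatives $L^{-1}\partial_{x(a)}$ in $\mathbb{V}^E_{\mathcal{G}^\tau_-}$ (each of which, hitting the exponential, returns a bounded multiple of $-\partial_{x(a)}G_{\mathcal{G},\tau}$), the constant phase $w_{\mu,\mathcal{G},\tau}$ and the integer coefficients $C^{\mathcal{G}}_\ell$ together form a factor that is at most polynomial in $L$, which accounts for the $\mathcal{O}(\ln L)$ in \eqref{appW:asymptK}. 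What survives in the exponent is exactly $-L\,G_{\mathcal{G},\tau}(x,s,\vec p,\vec n)$ with
\begin{gather}
G_{\mathcal{G},\tau} = -E\,x_0(v^\tau_-) + \sum_{\ell \in L^\tau_0} E_\ell\,(x_0(v^\tau_-)-x_0(i(\ell))) + \sum_{\ell \in L(\mathcal{G}^\tau_-)} \left( m_\ell^2 s_\ell + \frac{|\delta_\ell x + n_\ell|^2}{4 s_\ell} \right) \ ,
\end{gather}
times the unimodular oscillatory factor $e^{-\imath L \sum_{\ell \in L^\tau_0} \vec p_\ell \cdot (\vec x(i(\ell)) + \vec n_\ell)}$, and the integration runs over $D_{\mathcal{G},\tau}$ and over $s_\ell>0$, the constraints $\theta_{\tau_-}(x)=1$ and $x(\xi_0)=0$ being scale invariant. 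The elementary minimisation $\min_{s>0}(m^2 s + |v|^2/4s) = m|v|$, performed line by line, identifies $\min_s G_{\mathcal{G},\tau}$ with the bracketed expression inside the $\min$ in \eqref{appW:eps0}, so the global minimum of $G_{\mathcal{G},\tau}$ over the enlarged domain equals $\mathcal{E}_{\mathcal{G},\tau}(E,\vec p,\vec n)$.

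Before invoking the Laplace principle I would check two analytic prerequisites. First, $G_{\mathcal{G},\tau}$ is bounded below on $D_{\mathcal{G},\tau}$ and attains its infimum: along any path in $\mathcal{G}^\tau_-$ joining $\xi_0$ to $v^\tau_-$ the Euclidean propagator lengths add up to at least $x_0(v^\tau_-)$, so the (possibly negative) term $-E\,x_0(v^\tau_-)$ is dominated by the propagator terms precisely because $E$ lies strictly below the total rest mass of every intermediate state in $\mathcal{G}^\tau_-$ — the very condition under which the Wick rotation producing \eqref{appW:EK0} was legitimate, and which also prevents the configuration from escaping to infinity. Second, up to the unimodular phase the integrand is non-negative, since Euclidean propagators and heat-kernel weights are positive. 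Granting these, Laplace's method (equivalently Varadhan's lemma for $\frac{1}{L}\ln$ of the integral) gives the upper bound $\ln|K^-_{\mu,\mathcal{G},\tau}| \le -L\,\mathcal{E}_{\mathcal{G},\tau} + \mathcal{O}(\ln L)$ at once, from $|K^-| \le \int |\text{integrand}|$ together with the four-dimensional integrability of the short-distance singularities of the non-winding propagators. Since the final estimate \eqref{appW:finalbound} is itself an inequality, this upper bound is in fact the part of the theorem that matters for the applications.

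The matching lower bound — upgrading ``$\le$'' to ``$=$'' — is the step I expect to be the main obstacle, because one must rule out destructive interference from the factor $e^{-\imath L \sum \vec p_\ell \cdot \vec x(i(\ell))}$. The strategy is to restrict the integral to a small neighbourhood of a minimiser $x_*$ of $G_{\mathcal{G},\tau}$ and to control the phase there using the local geometry of $G_{\mathcal{G},\tau}$: after the $s$-minimisation $G_{\mathcal{G},\tau}$ is a sum of conical terms $m_\ell|\delta_\ell x + n_\ell|$ and linear terms, each conical term having a Hessian that degenerates along its own gradient direction, so the region dominating the integral is much narrower than a generic Gaussian width — of linear size $\mathcal{O}(1/L)$ in the conical and flat directions — and over it the phase varies by only $\mathcal{O}(|\vec p|)$, hence cannot cancel the full leading exponential. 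I would then restrict to a ball $|x-x_*| \le \delta/L$ with $\delta$ small enough that the phase stays within, say, $\pi/3$ of its value at $x_*$, bound the real part of the integrand from below on that ball, and conclude that this ball alone contributes at least a fixed inverse power of $L$ times $e^{-L\mathcal{E}_{\mathcal{G},\tau}}$. The genuinely delicate points are (i) showing, via the Steiner-tree-like structure of the spatial minimisation, that there remain no non-degenerate Gaussian directions in which the phase could win, and (ii) the possibility that the polynomial prefactor from $\mathbb{V}^E_{\mathcal{G}^\tau_-}$ vanishes at $x_*$, which can however occur only on a lower-dimensional locus and only to finite order, so that \eqref{appW:asymptK} survives with a different but still polynomial power of $L$.
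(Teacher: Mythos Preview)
Your approach is essentially the same as the paper's: heat-kernel representation of the propagators, simultaneous rescaling of the vertex coordinates and Schwinger parameters by $L$, and a saddle-point/Laplace argument identifying the leading exponential with the minimum of the resulting action, followed by the explicit $s$-minimisation $\min_{s>0}(m^2 s + |v|^2/4s) = m|v|$. The paper's proof is terser than yours and does not separately discuss the oscillatory phase or the matching lower bound, so your additional remarks on those points (and your observation that only the upper bound is actually used downstream) are apposite rather than off-track.
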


%\bigskip

\begin{proof}

We use the heat-kernel representation for the Euclidean propagator
\begin{gather}
\Delta_E(x;m) = \int_0^\infty ds \ \frac{e^{-\frac{x^2}{4s} -m^2 s}}{(4\pi s)^2} \ .
\end{gather}
When plugging this in eq.~\eqref{appW:EK0}, we need to introduce a Schwinger parameter $s_\ell$ for each line $\ell \in L(\mathcal{G}^\tau_-)$. By substituting $s_\ell \to s_\ell L / 2 m_\ell$ and $x(a) \to L x(a)$ we get the general form for the Euclidean kernel
\begin{flalign}
K^-_{\mu,\mathcal{G},\tau}(E,\vec{p},\vec{n})
= &
\prod_{\ell \in L(\mathcal{G}^\tau_-)} 
\int_0^\infty ds_\ell
\prod_{a \in V(\mathcal{G}^\tau_-)-\xi_0} \int_{\mathbb{R}^4} d^4 x(a) \delta(x_0(\xi_0)) \, \theta_{\tau_-}(x)
\times \nonumber \\
& \times
L^\alpha \frac{P_\mu(s,x,\vec{n})}{Q(s)} e^{-L \left\{ X(s,x,E,\vec{p},\vec{n}) - \imath \sum_{\ell \in L^\tau_0} \vec{p}_\ell [\vec{x}(i(\ell)) + \vec{n}_\ell] \right\} }
\ ,
\label{appW:EK1}
\end{flalign}
where $P(s,x,\vec{n})$ and $Q(s)$ are polynomials that come from the derivatives in the vertex operator and explicit powers of $s$ in the heat-kernel representation, and
\begin{gather}
X(s,x,E,\vec{p},\vec{n}) = \nonumber \\
=
- E x_0(v^\tau_-) + \sum_{\ell \in L^\tau_0} E_\ell [x_0(v^\tau_-)-x_0(i(\ell))] + \sum_{\ell \in L(\mathcal{G}^\tau_-)} \frac{m_\ell}{2} \left\{ s_\ell + \frac{1}{s_\ell} | \delta_\ell x + n_\ell|^2 \right\} \ .
\end{gather}
The large-$L$ expansion of the integral in eq.~\eqref{appW:EK1} is given by a saddle-point approximation, i.e. by expanding about the minima of $X$. This yields the asymptotic behaviour~\eqref{appW:asymptK} with
\begin{gather}
\mathcal{E}_{\mathcal{G},\tau}(E,\vec{p},\vec{n}) = \min_{x \in D_{\mathcal{G},\tau}} \ 
\min_{s_\ell \in [0,\infty)} \ X(s,x,E,\vec{p},\vec{n}) \ .
\end{gather}
Eq.~\eqref{appW:eps0} is obtained by performing the trivial minimization over each $s_\ell$.

\end{proof}

The statement of theorem~\ref{appW:exp1} makes sense because $\mathcal{E}_{\mathcal{G},\tau}(E,\vec{p},\vec{n})$ is strictly positive. We will show this fact in two steps, and we will provide also a lower bound for this function.

\begin{theorem}
\label{appW:exp2}
The function $\mathcal{E}_{\mathcal{G},\tau}(E,\vec{p},\vec{n})$ defined in eq.~\eqref{appW:eps0} satisfies
\begin{gather}
\mathcal{E}_{\mathcal{G},\tau}(E,\vec{p},\vec{n}) \ge 0 \ , \quad \text{for } E<M_{\Xi^-} \ .
\end{gather}
\end{theorem}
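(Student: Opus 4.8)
The plan is to prove the stronger pointwise statement that the quantity in braces in eq.~\eqref{appW:eps0} is non-negative for every $x\in D_{\mathcal{G},\tau}$ whenever $E<M_{\Xi^-}$. Two of its three terms are controlled immediately. Since $\xi_0\in\mathcal{G}_{s,0}$ sits at the origin and $v^\tau_-$ is the latest vertex of $\mathcal{G}_{s,0}$ (with $v^\tau_-\neq\xi_0$, because $\xi_0$ carries strangeness and must therefore be attached to at least one strange line), the constraint $\theta_{\tau_-}(x)=1$ forces $x_0(v^\tau_-)\ge 0$. Moreover, every initial vertex $i(\ell)$ of a line $\ell\in L^\tau_0$ lies in $\mathcal{G}^\tau_-$ and is thus not after $v^\tau_-$, so $x_0(v^\tau_-)-x_0(i(\ell))\ge 0$ and the term $\sum_{\ell\in L^\tau_0}E_\ell[x_0(v^\tau_-)-x_0(i(\ell))]$ is non-negative. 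Hence it suffices to prove
\begin{gather}
\sum_{\ell\in L(\mathcal{G}^\tau_-)} m_\ell\,|\delta_\ell x + n_\ell| \ \ge\ M_{\Xi^-}\,x_0(v^\tau_-)\ \ge\ E\,x_0(v^\tau_-)\,,
\end{gather}
after which the three contributions combine to give $\Phi(x)\ge 0$.

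For the mass term I would use a Euclidean time-slicing estimate. Since the winding shift $n_\ell=(0,\vec n_\ell)$ is purely spatial, $|\delta_\ell x+n_\ell|\ge |(\delta_\ell x)_0|$, and $|(\delta_\ell x)_0|=\int_{\mathbb R}dt\,\chi_\ell(t)$ where $\chi_\ell(t)$ is the indicator that line $\ell$ crosses the slice $x_0=t$. Therefore
\begin{gather}
\sum_{\ell\in L(\mathcal{G}^\tau_-)} m_\ell\,|\delta_\ell x + n_\ell| \ \ge\ \int_{\mathbb R} dt\ \mu(t)\,,\qquad \mu(t)=\sum_{\ell\in L(\mathcal{G}^\tau_-)}m_\ell\,\chi_\ell(t)\,,
\end{gather}
and the inequality reduces to the claim $\mu(t)\ge M_{\Xi^-}$ for all $t\in(0,x_0(v^\tau_-))$. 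This is where the diagrammatics enters. For such $t$, connectedness of $\mathcal{G}_{s,0}$ together with the facts that $\xi_0$ sits at $t=0$ and $v^\tau_-$ at $x_0(v^\tau_-)>t$ forces at least one (strange) line of $\mathcal{G}_{s,0}$ to cross the slice; furthermore, by Proposition~\ref{appW:lemma:Gs0} the baryon number and strangeness injected at $\xi_0$ are conserved at all other vertices of $\mathcal{G}_{s,0}$, and they can only be altered before $v^\tau_-$ through type-4 (winding) lines, which change each flavour number by multiples of two. One then checks, using the physical hadron spectrum (the lightest particle with $S\neq0$ is the kaon, the lightest state carrying the $\Xi^-$ quantum numbers $|S|=2$, $|B|=1$ is the $\Xi^-$ itself, and every multi-hadron state with those quantum numbers has total rest mass above $M_{\Xi^-}$), that the set of lines crossing any slice $t\in(0,x_0(v^\tau_-))$ always has total rest mass at least $M_{\Xi^-}$. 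Integrating over $t$ gives the displayed inequality, hence $\mathcal{E}_{\mathcal{G},\tau}\ge0$; if $F^M_{\mathcal{G}}(x,\vec n)\not\equiv0$ the winding line responsible for the flavour leakage is the one produced by Lemma~\ref{appW:lemma:loop}, and if $F^M_{\mathcal{G}}(x,\vec n)\equiv0$ there is nothing to prove.

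The main obstacle I anticipate is precisely the last step: bounding $\mu(t)$ uniformly on $(0,x_0(v^\tau_-))$ in the presence of winding lines. A type-4 strange line injects $\Delta S=\pm2$ at each of its endpoints, so the net strangeness crossing a slice can already drop to zero strictly before $v^\tau_-$; one then has to argue that whenever this happens the slice is nonetheless crossed by a strangeness-neutral but massive strange configuration (an $s\bar s$-like or $K^+K^-$-like pair) accompanying the baryon line, and quantify that the resulting total rest mass still exceeds $M_{\Xi^-}$. Making this case analysis airtight — tracking flavour flow, baryon flow and the even-ness constraints through all admissible diagram topologies and matching them against the relevant hadronic thresholds — is the part that requires real care; everything else is the elementary inequalities above together with the saddle-point asymptotics already established in Theorem~\ref{appW:exp1}.
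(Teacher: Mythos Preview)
Your overall strategy---time-slicing the Euclidean geometry and bounding the mass crossing each slice by $M_{\Xi^-}$---is the right idea and is essentially what the paper does (there via the auxiliary graph $\mathcal{H}$ and the redistribution variables $\Delta_\ell$). However, you make a genuine error when you drop the $L^\tau_0$ term and claim it suffices to prove
\[
\sum_{\ell\in L(\mathcal{G}^\tau_-)} m_\ell\,|\delta_\ell x + n_\ell|\ \ge\ M_{\Xi^-}\,x_0(v^\tau_-)\,.
\]
This inequality is false in general. The point is that the baryon number injected at $\xi_0$ can leave $\mathcal{G}_{s,0}$ on a strangeless line, and that line may belong to $L^\tau_0$ rather than $L(\mathcal{G}^\tau_-)$. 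Concretely, take a vertex $v_1$ at which $\Xi^-\to n+K^-+\bar K^0$, with the neutron going directly into $L^\tau_0$ and the two kaons meeting at a later vertex $v_2=v^\tau_-$. For $t\in(t_1,t_2)$ your $\mu(t)$ counts only the two kaon lines, giving $\mu(t)=2M_K\simeq 988$ MeV, which is below $M_{\Xi^-}\simeq 1322$ MeV. So the displayed inequality fails; what saves the full quantity $\Phi(x)$ is precisely the $E_n[x_0(v^\tau_-)-x_0(v_1)]$ contribution you discarded.

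The fix is not to drop the $L^\tau_0$ term but to fold it into the same time-slicing: writing $x_0(v^\tau_-)-x_0(i(\ell))=\int dt\,\mathbb{1}[x_0(i(\ell))<t<x_0(v^\tau_-)]$ and combining with your $\int dt\,\mu(t)$, one gets an integrand which at each $t\in(0,x_0(v^\tau_-))$ equals the total rest mass of \emph{all} lines of $\mathcal{G}$ crossing the slice (those of $\mathcal{G}^\tau_-$ with weight $m_\ell$ and those of $L^\tau_0$ with weight $E_\ell\ge m_\ell$), minus $E$. That total is $\ge M_{\Xi^-}$ by exactly the observation the paper uses---any intermediate state between $\xi_0$ and $v^\tau_-$ contains a strange particle (connectedness of $\mathcal{G}_{s,0}$) and has $(-1)^B=-1$, $(-1)^S=1$, hence has rest mass at least $M_{\Xi^-}$. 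This is how the paper's proof proceeds: rule 4 in the construction of the auxiliary graph $\mathcal{H}$ explicitly adjoins the $L^\tau_0$ lines with mass $E_\ell$, so that the threshold condition $\sum_{\ell\in L_k(\mathcal{H})}M_\ell\ge M_{\Xi^-}$ holds at every intermediate time-slice. Your anticipated difficulty about winding lines and vanishing net strangeness is in fact not the obstruction; connectedness of $\mathcal{G}_{s,0}$ guarantees a strange line at every slice regardless of net flavour. The real gap is the premature discarding of the $L^\tau_0$ energies.
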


\begin{proof}

Let $N$ be the number of vertices of the diagram $\mathcal{G}^\tau_-$. We assume that the orientation of lines in $\mathcal{G}^\tau_-$ is chosen in such a way that
\begin{gather}
\tau(f(\ell)) \ge \tau(i(\ell)) \ .
\end{gather}

We construct an auxiliary graph $\mathcal{H}$ by starting from an empty graph, and by adding elements to it accordingly to a set of rules.
\begin{enumerate}[noitemsep,topsep=-.7em]

\item We grow $\mathcal{H}$ by adding all the vertices of $\mathcal{G}^\tau_-$. At this stage:
\begin{gather}
V(\mathcal{H}) = V(\mathcal{G}^\tau_-) \ , \qquad L(\mathcal{H}) = \varnothing \ .
\end{gather}
We equip the graph $\mathcal{H}$ with the time-ordering function $\tau$ inherited from $\mathcal{G}^\tau_-$.

\item For each \textit{mother} line $\ell$ of $\mathcal{G}^\tau_-$ such that $\tau(f(\ell)) - \tau(i(\ell)) \le 1$, we add the \textit{daughter} line $\ell$ to $\mathcal{H}$ with the same incidence relations (notice that the endpoints of $\ell$ are already in $\mathcal{H}$).

\item For each \textit{mother} line $\ell$ of $\mathcal{G}^\tau_-$ such that $\tau(f(\ell)) - \tau(i(\ell)) > 1$, we imagine to cut the line $\ell$ at each intermediate timeslice. More precisely we add the $n=\tau(f(\ell)) - \tau(i(\ell))-1$ intermediate vertices $v_1, \dots, v_n$ and the $n+1$ \textit{daughter} lines $\ell_0, \dots \ell_n$ to $\mathcal{H}$ with the following incidence relations:
\begin{gather}
i(\ell_0) = i(\ell) \ , \quad f(\ell_0) = v_1 \ , \\
i(\ell_k) = v_k \ , \quad f(\ell_k) = v_{k+1} \ , \quad \text{for } k=1,\dots,n-1 \ , \\
i(\ell_n) = v_n \ , \quad f(\ell_n) = f(\ell) \ .
\end{gather}
We also time-order the extra vertices and we declare that
\begin{gather}
\tau(v_k) = \tau(i(\ell))+k \ .
\end{gather}
We associate a mass and a $\mathbb{Z}^3$ gauge field to each line of $\mathcal{H}$. The masses of the daughter lines are all equal to the mass of the mother line, while the gauge field of the original line is transferred completely only to the first daughter line, i.e.
\begin{gather}
M_{\ell_k} = m_\ell \ , \\
\vec{n}_{\ell_k} = \vec{n}_\ell \delta_{k,0} \ .
\end{gather}
All quantum numbers are also naturally transferred from the mother line to the daughter lines.

\item For each \textit{mother} line $\ell$ of $L^\tau_0$ such that $\tau(v^\tau_-) - \tau(i(\ell)) > 0$, we add a final point at the latest timeslice and we imagine to cut the line $\ell$ at each intermediate timeslice. More precisely we add the $n=\tau(v^\tau_-) - \tau(i(\ell))$ vertices $v_1, \dots, v_n$ and the $n$ \textit{daughter} lines $\ell_0, \dots \ell_{n-1}$ to $\mathcal{H}$ with the following incidence relations:
\begin{gather}
i(\ell_0) = i(\ell) \ , \quad f(\ell_0) = v_1 \ , \\
i(\ell_k) = v_k \ , \quad f(\ell_k) = v_{k+1} \ , \quad \text{for } k=1,\dots,n-1 \ .
\end{gather}
Analogously to the previous case, we declare:
\begin{gather}
\tau(v_k) = \tau(i(\ell))+k \ , \\
M_{\ell_k} = E_\ell \ , \\
\vec{n}_{\ell_k} = \vec{n}_\ell \delta_{k,0} \ ,
\end{gather}
and we transfer the quantum numbers of the mother line to the daughter lines.

\end{enumerate}

\begin{figure}
\centering

\includegraphics[width=.6\textwidth]{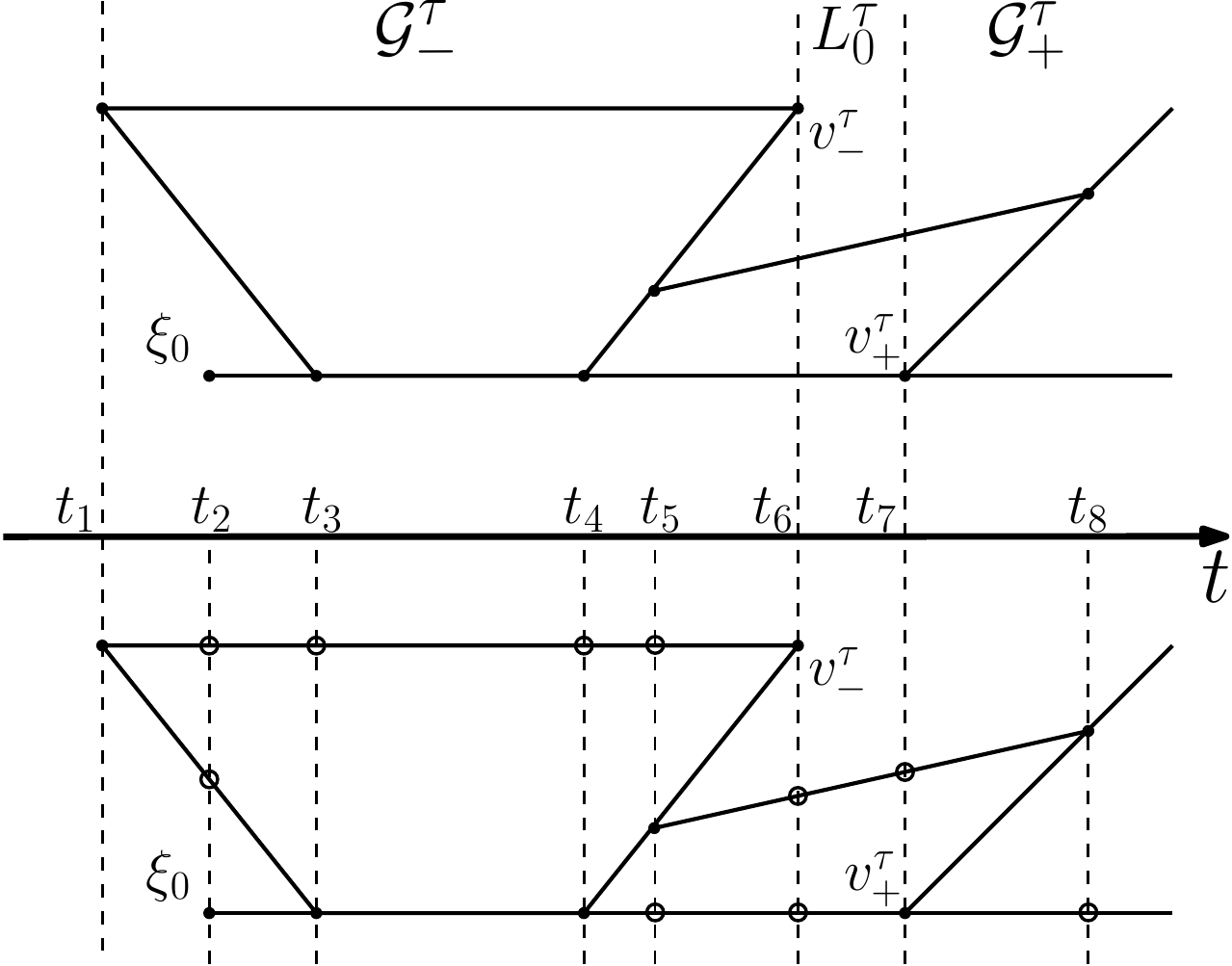}

\caption{\label{fig:appW:t-ordering}
Exemplification of the construction given in the proof of theorem~\ref{appW:exp2}. We start with a diagram whose vertices have been time-ordered (top diagram). Each vertex defines a timeslice. The diagram in the bottom is obtained from the top one by inserting a new vertex every time a line intersects a timeslice. Vertices in the same timeslice are forced to have the same time coordinate.
}
\end{figure}

The above construction is schematically represented in fig.~\ref{fig:appW:t-ordering}. The auxiliary diagram $\mathcal{H}$ is useful to rewrite the function $\mathcal{E}_{\mathcal{G},\tau}(E,\vec{p},\vec{n})$ defined in eq.~\eqref{appW:eps0} in a way that its timeslice structure be manifest.

We define $V_k(\mathcal{H})$ for $k=1,\dots,N$ as the set that contains all vertices $a$ with $\tau(a)=k$, i.e.
\begin{gather}
V_k(\mathcal{H}) = \tau^{-1}(k) \ .
\end{gather}
The sets $V_k(\mathcal{H})$ for $k=1,\dots,N$ constitute a partition of $V(\mathcal{H})$,
\begin{gather}
V(\mathcal{H}) = \bigsqcup_{k=1}^N V_k(\mathcal{H}) \ .
\end{gather}
We will refer to $V_k(\mathcal{H})$ as the $k$-th \textit{timeslice} of $\mathcal{H}$. We define $L_k(\mathcal{H})$ for $k=1,\dots,N-1$ as the set that contains all lines $\ell$ such that $i(\ell)$ is in the $k$-th timeslice and $f(\ell)$ is in the $(k+1)$-th timeslice. The only lines of $\mathcal{H}$ that are left out are the looplines $\ell$ with $i(\ell)=f(\ell)$ which we collect in the set $L_{loop}(\mathcal{H})$,
\begin{gather}
L(\mathcal{H}) = \bigsqcup_{k=1}^{N-1} L_k(\mathcal{H}) \ \sqcup \ L_{loop}(\mathcal{H}) \ .
\end{gather}

We change the way in which we associate coordinates to vertices. We associate the time-coordinate $t_k$ to the timeslice $V_k(\mathcal{H})$, and independent spatial coordinates $\vec{x}(a)$ to each vertex $a \in V(\mathcal{H})$. We force all vertices in the same timeslice to have the same time coordinate. Therefore the four-dimensional coordinate vector is mapped into
\begin{gather}
x(a) = (t_{\tau(a)},\vec{x}(a)) \ .
\end{gather}
Notice that in each timeslice $V_k(\mathcal{H})$ there is only one vertex of the original diagram $\mathcal{G}^\tau_-$, therefore $t_k$ does coincide with the temporal coordinate of this vertex.

By using iteratively the property that the shortest path between two points is the straight line, in the form of
\begin{gather}
\min_{\vec{x}_1} \left\{ \sqrt{(t_{k+2}-t_{k+1})^2 + |\vec{x}_2 - \vec{x}_1|^2} + \sqrt{(t_{k+1}-t_{k})^2 + |\vec{x}_1 - \vec{x}_0|^2} \right\} = \nonumber \\
= \sqrt{(t_{k+2}-t_{k})^2 + |\vec{x}_2 - \vec{x}_0|^2} 
\ , \quad \text{for } t_{k+2}>t_{k+1}>t_k
\ ,
\end{gather}
and the property that the shortest path between a point and a plane is the the straight line that is orthogonal to the plane, in the form of
\begin{gather}
\min_{\vec{x}_1} \sqrt{(t_{k+1}-t_{k})^2 + |\vec{x}_1 - \vec{x}_0|^2}
= t_{k+1}-t_{k}  \ ,
\quad \text{for } t_{k+1}>t_k
\ ,
\end{gather}
and by recalling that $\tau(v^\tau_-)=N$ as $v^\tau_-$ is the latest vertex in $\mathcal{G}^\tau_-$, one can easily prove that
\begin{gather}
\mathcal{E}_{\mathcal{G},\tau}(E,\vec{p},\vec{n}) = \nonumber \\
= \min_{\vec{x}} \min_{t_1 < t_2 < \dots < t_N } \left\{ - E [t_N - t_{\tau(\xi_0)}] + \sum_{\ell \in L(\mathcal{H})} M_\ell | x(f(\ell)) - x(i(\ell)) + n_\ell| \right\}
\ .
\label{appW:eps1}
\end{gather}

We choose quantities $\Delta_\ell$ associated to the lines of $\mathcal{H}$ satisfying the following constraints
\begin{gather}
\Delta_\ell = 0 \ , \quad \text{for looplines } \ell \in L_{loop}(\mathcal{H}) \ , \nonumber \\
\Delta_\ell = 0 \ , \quad \text{for } \ell \in L_k(\mathcal{H}) \text{ with } k<\tau(\xi_0) \ , \nonumber \\
0 \le \Delta_\ell \le M_\ell \ , \quad \text{for any } \ell \ , \nonumber \\
E = \sum_{\ell \in L_k(\mathcal{H})} \Delta_\ell \ , \quad \text{for any } k \ge \tau(\xi_0) \ .
\label{appW:deltaconstr}
\end{gather}
Notice that the last two constraints are in tension with each other. Quantities $\Delta_\ell$ satisfying such constraints exist for $E<M_{\Xi^-}$ thanks to the following observation.

\begin{observation}
By construction of $\mathcal{G}^\tau_-$, only states with energy not lower than $M_{\Xi^-}$ propagate at any time between the vertices $\xi_0$ and $v^\tau_-$, i.e.
\begin{gather}
\sum_{\ell \in L_k(\mathcal{H})} M_\ell \ge M_{\Xi^-} \ , \quad \text{for any } k \ge \tau(\xi_0) \ .
\end{gather}
\end{observation}

Besides the above constraints, the quantities $\Delta_\ell$ are largely arbitrary (and we will use this arbitrariness later on). The contribution $E [t_N - t_{\tau(\xi_0)}]$ in eq.~\eqref{appW:eps1} can be distributed over the lines of $\mathcal{H}$ yielding
\begin{gather}
\mathcal{E}_{\mathcal{G},\tau}(E,\vec{p},\vec{n}) = \nonumber \\
= \min_{\vec{x}} \min_{t_1 < t_2 < \dots < t_N } \sum_{\ell \in L(\mathcal{H})} \left\{ - \Delta_\ell [x_0(f(\ell)) - x_0(i(\ell))] +  M_\ell | x(f(\ell)) - x(i(\ell)) + n_\ell| \right\}
\ ,
\label{appW:eps2}
\end{gather}
We can provide a lower bound for each term of the above sum by using the following inequality
\begin{gather}
- \Delta | \delta t | + M \sqrt{(\delta t)^2 + z^2} \ge |z| \sqrt{M^2 - \Delta^2} \ ,
\end{gather}
valid for any $z$ and $\delta t$, as long as $M \ge \Delta$. Therefore we get an estimate for $\mathcal{E}_{\mathcal{G},\tau}(E,\vec{p},\vec{n})$ in which the time minimization has disappeared
\begin{gather}
\mathcal{E}_{\mathcal{G},\tau}(E,\vec{p},\vec{n}) \ge \min_{\vec{x}} \sum_{\ell \in L(\mathcal{H})} | \vec{x}(f(\ell)) - \vec{x}(i(\ell)) + \vec{n}_\ell| \sqrt{M_\ell^2 - \Delta_\ell^2}
\ ,
\label{appW:eps3}
\end{gather}
and which shows explicitly that
\begin{gather}
\mathcal{E}_{\mathcal{G},\tau}(E,\vec{p},\vec{n}) \ge 0 \ , \quad \text{for } E<M_{\Xi^-} \ .
\end{gather}

\end{proof}

%\bigskip

\begin{theorem}
\label{appW:exp3}
The function $\mathcal{E}_{\mathcal{G},\tau}(E,\vec{p},\vec{n})$ defined in eq.~\eqref{appW:eps0} satisfies
\begin{gather}
\mathcal{E}_{\mathcal{G},\tau}(E,\vec{p},\vec{n}) \ge \mathcal{M}(E) \ , \quad \text{for } E<M_{\Xi^-} \ ,
\end{gather}
where the function $\mathcal{M}(E)$ is given by
\begin{gather}
\mathcal{M}(E) = 
\begin{cases}
M_{K^\pm} & \text{if } 0 < E \le M_p \\
\left[ M_{K^\pm}^2 - \left( \frac{E-M_p}{2} \right)^2 \right]^{1/2} \quad & \text{if } M_p \le E \le \frac{M_{\Lambda^0}^2 - M_{K^\pm}^2}{M_p} \\
\left[ M_{K^\pm}^2 - \left( \frac{E^2 - M_{\Lambda^0}^2 + M_{K^\pm}^2}{2E} \right)^2 \right]^{1/2} \quad & \text{if } \frac{M_{\Lambda^0}^2 - M_{K^\pm}^2}{M_p} \le E < M_{\Xi^-}
\end{cases}
\ .
\label{appW:emm}
\end{gather}

\end{theorem}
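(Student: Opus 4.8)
The plan is to reduce the statement, via the inequality obtained in the proof of Theorem~\ref{appW:exp2}, to a geometric estimate, and then to exhibit a favourable choice of the auxiliary parameters $\{\Delta_\ell\}$. By~\eqref{appW:eps3}, for any $\{\Delta_\ell\}$ satisfying the constraints~\eqref{appW:deltaconstr} one has $\mathcal{E}_{\mathcal{G},\tau}(E,\vec p,\vec n) \ge \min_{\vec x}\sum_{\ell\in L(\mathcal{H})} |\vec x(f(\ell))-\vec x(i(\ell))+\vec n_\ell|\,\sqrt{M_\ell^2-\Delta_\ell^2}$, and since the $\{\Delta_\ell\}$ may be optimised afterwards it suffices to produce one admissible assignment for which the right-hand side is at least $\mathcal{M}(E)$. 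First I would invoke Lemma~\ref{appW:lemma:loop}: there is a loop $\hat C$ in $\mathcal{H}$, arising from a loop in $\mathcal{G}_{s,0}$, with $|W(\hat C,\vec n)|\ge 1$; in an axial gauge relative to a tree containing all of $\hat C$ except one line $\bar\ell$, only $\bar\ell$ carries the winding, $\bar\ell$ lies in a single timeslice gap, and every line of $\hat C$ is strange and therefore of mass at least $M_{K^\pm}$, the kaon being the lightest strange particle. Dropping the non-loop lines (which contribute non-negatively) and telescoping around $\hat C$, so that the signed sum of $\vec x(f(\ell))-\vec x(i(\ell))+\vec n_\ell$ over $\hat C$ equals $W(\hat C,\vec n)$, gives $\mathcal{E}_{\mathcal{G},\tau}(E,\vec p,\vec n) \ge |W(\hat C,\vec n)|\,\min_{\ell\in\hat C}\sqrt{M_\ell^2-\Delta_\ell^2} \ge \min_{\ell\in\hat C}\sqrt{M_\ell^2-\Delta_\ell^2}$. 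Hence the task becomes to choose the $\{\Delta_\ell\}$ so that $\sqrt{M_\ell^2-\Delta_\ell^2}\ge\mathcal{M}(E)$ for every $\ell\in\hat C$; this is consistent line by line since $M_\ell\ge M_{K^\pm}\ge\mathcal{M}(E)$, so the real content lies in the cut constraints $\sum_{\ell\in L_k(\mathcal{H})}\Delta_\ell=E$.

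The $\{\Delta_\ell\}$ are then constructed cut by cut: on each cut $L_k(\mathcal{H})$ with $k\ge\tau(\xi_0)$ one fills first the non-loop lines up to their masses, then the heavy loop lines, and only last the light (kaon) loop lines, in particular $\bar\ell$, so that the bound reduces to the amount forced onto $\bar\ell$. The feasibility of keeping every loop line on $L_k(\mathcal{H})$ below $\sqrt{M_\ell^2-\mathcal{M}(E)^2}$ while still reaching $E$ rests on three facts already at hand: the observation, used in the proof of Theorem~\ref{appW:exp2}, that $\sum_{\ell\in L_k(\mathcal{H})}M_\ell\ge M_{\Xi^-}$ on every relevant cut; baryon-number conservation at the internal vertices, so that every such cut carries a baryon line — a strange baryon, hence in $\hat C$, or a nucleon, hence non-loop — of mass at least $M_p$; and strangeness conservation within $\mathcal{G}_{s,0}$, eq.~\eqref{appW:strangenessconservation}, which bounds how many strange lines a cut of $\mathcal{G}_{s,0}$ can carry in terms of the $|S|=2$ flowing out of $\xi_0$. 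After this reduction the binding configuration is a two-body balance on the cut through $\bar\ell$, and the three branches of $\mathcal{M}(E)$ in~\eqref{appW:emm} correspond precisely to the two competing minimal processes. For $E$ near $M_{\Xi^-}$ the bottleneck is $\Xi^-\to\Lambda^0 K^-$: the cut carries $\bar\ell$ (the winding $K$) and a $\Lambda^0$, with $M_{K^\pm}+M_{\Lambda^0}\ge M_{\Xi^-}$, and equalising $\sqrt{M_{K^\pm}^2-\Delta_{\bar\ell}^2}=\sqrt{M_{\Lambda^0}^2-\Delta_\Lambda^2}$ subject to $\Delta_{\bar\ell}+\Delta_\Lambda=E$ gives $\Delta_{\bar\ell}=(E^2-M_{\Lambda^0}^2+M_{K^\pm}^2)/2E$ and the third branch. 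For intermediate $E$ the bottleneck is $\Xi^-\to N K^- K^-$: a nucleon line of mass $M_p$ on the cut absorbs an amount $M_p$ of the energy and the two kaon loop lines split the remaining $E-M_p$, giving $\Delta_{\bar\ell}=(E-M_p)/2$ and the second branch. For $E\le M_p$ the baryon line on every cut (mass at least $M_p\ge E$) can absorb all of $E$, leaving $\Delta_{\bar\ell}=0$ and every loop line at its full mass, so the bound is simply $M_{K^\pm}$. One checks that $(E-M_p)/2$ and $(E^2-M_{\Lambda^0}^2+M_{K^\pm}^2)/2E$ agree at $E=(M_{\Lambda^0}^2-M_{K^\pm}^2)/M_p$, which fixes the branch points, and that taking the smaller of the two values of $\sqrt{M_{K^\pm}^2-\Delta_{\bar\ell}^2}$ over the range of $E$ reproduces~\eqref{appW:emm}.

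The delicate part, which I expect to be the main obstacle, is the global bookkeeping showing that this greedy assignment is simultaneously feasible on \emph{every} cut: the return path of $\hat C$ may thread many timeslice cuts and deposit several light loop lines on one cut, so one must argue, cut by cut, that the non-loop mass together with the small headroom $M_\ell-\sqrt{M_\ell^2-\mathcal{M}(E)^2}$ of the heavy loop lines always suffices to absorb $E$ down to the allowed level. The point is that whenever extra kaon lines appear on a cut, strangeness conservation (only $|S|=2$ flows out of $\xi_0$ inside $\mathcal{G}_{s,0}$) and baryon-number conservation force extra companion mass on that same cut, so that the bound $M_{\Xi^-}$ is comfortably exceeded, while $M_\ell-\sqrt{M_\ell^2-\mathcal{M}(E)^2}$ is decreasing in $M_\ell$ and hence negligible for heavy lines. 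Finally, the dependence of $\mathcal{E}_{\mathcal{G},\tau}$ on $\vec p$ and $\vec n$ enters only through the masses $E_\ell\ge m_\ell$ of the cut propagators and through $|W(\hat C,\vec n)|\ge 1$, both of which only help, so the worst case is $\vec p=\vec 0$ with unit winding, and the bound $\mathcal{M}(E)$ follows.
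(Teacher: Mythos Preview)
Your overall strategy coincides with the paper's: restrict the sum in~\eqref{appW:eps3} to the strange loop $\tilde C$ in $\mathcal{H}$, use the triangle inequality to extract $|W(\tilde C,\vec n)|\ge 1$, and then choose the $\Delta_\ell$'s cut by cut so that $\min_{\ell\in\tilde C}\sqrt{M_\ell^2-\Delta_\ell^2}\ge\mathcal{M}(E)$. The identification of the two competing configurations ($\Lambda^0 K$ and $N K K$) with the two non-trivial branches of~\eqref{appW:emm}, and the matching at $E=(M_{\Lambda^0}^2-M_{K^\pm}^2)/M_p$, are also correct.

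Where your argument diverges from the paper is in how the cut-by-cut choice is organised, and this is precisely the point you flag as ``the delicate part''. You try a greedy filling (non-loop lines first, then heavy loop lines, then light ones) and then worry about the global feasibility when many kaon loop lines land on the same cut. The paper avoids this bookkeeping entirely by a parity argument that you do not have. On each cut $L_k(\mathcal{H})$ with $k\ge\tau(\xi_0)$, the number of loop lines in $\tilde C_k$ is even (the loop must close). Split according to whether the number of loop lines with $(-1)^B=1$ is even or odd. If it is even, then the number of baryonic loop lines is also even, and since the total cut carries $(-1)^B=-1$ there must exist a baryon line $\ell_0\in L_k(\mathcal{H})\setminus\tilde C$ outside the loop, with $M_{\ell_0}\ge M_p$; one sets $\Delta_{\ell_0}=\min(E,M_p)$ and splits any remainder equally between two loop lines. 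If it is odd, then there is at least one baryonic loop line $\ell_1\in\tilde C_k$, necessarily strange, with $M_{\ell_1}\ge M_{\Lambda^0}$; one sets $\Delta_{\ell_1}$ and one further $\Delta_{\ell_2}$ on a second loop line so that the two $\sqrt{M^2-\Delta^2}$ values are equalised. In either case every other $\Delta_\ell$ on that cut is set to zero. So on any cut only two or three $\Delta$'s are nonzero, and the constraint $\sum\Delta_\ell=E$ is satisfied trivially; there is no ``global'' issue at all.

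Two related points. First, the logical structure is not that one process dominates for $E$ near $M_{\Xi^-}$ and the other for intermediate $E$: both types of cut can occur in the same diagram, and one must bound $\min_{\ell\in\tilde C_k}\sqrt{M_\ell^2-\Delta_\ell^2}$ for every $k$. The function $\mathcal{M}(E)$ is the minimum over both cases of the bound so obtained. Second, your singling out of $\bar\ell$ and of ``the cut through $\bar\ell$'' as the bottleneck is unnecessary: the paper's argument treats all cuts on the same footing and never needs to know where the winding sits. Your dichotomy ``a strange baryon, hence in $\hat C$, or a nucleon, hence non-loop'' is also slightly off, since a strange baryon can lie outside the chosen loop; what matters, and what the parity argument delivers, is that on every cut you are guaranteed either an off-loop baryon (mass $\ge M_p$) or an on-loop strange baryon (mass $\ge M_{\Lambda^0}$).
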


%\bigskip

\begin{proof}

We use the construction in the proof of theorem~\ref{appW:exp2}. Since each term in the sum in the r.h.s. of eq.~\eqref{appW:eps3} is positive, a looser lower bound on the function $\mathcal{E}_{\mathcal{G},\tau}(E,\vec{p},\vec{n})$ can be provided by restricting the sum to a subset of $L(\mathcal{H})$.

We choose a strange loop $C$ in $\mathcal{G}_{s,0}$ with $|W(C,\vec{n})| \ge 1$. Such a loop exists thanks to lemma~\ref{appW:lemma:loop}. Since $\mathcal{G}_{s,0}$ is a subgraph of $\mathcal{G}^\tau_-$, $C$ is also a loop in $\mathcal{G}^\tau_-$. We construct the loop $\tilde{C}$ of $\mathcal{H}$ by replacing each mother line in $C$ with the corresponding daughter lines. Since the gauge field of each mother line is equal to the sum of the gauge fields of the corresponding daughter lines, it is clear that
\begin{gather}
|W(\tilde{C},\vec{n})| = |W(C,\vec{n})| \ge 1 \ .
\end{gather}

By using this particular loop in eq.~\eqref{appW:eps3}, we get the following bound for $\mathcal{E}_{\mathcal{G},\tau}(E,\vec{p},\vec{n})$:
\begin{flalign}
\mathcal{E}_{\mathcal{G},\tau}(E,\vec{p},\vec{n}) \ge & \min_{\vec{x}} \sum_{\ell \in \tilde{C}} | \vec{x}(f(\ell)) - \vec{x}(i(\ell)) + \vec{n}_\ell| \sqrt{M_\ell^2 - \Delta_\ell^2}
\ge \nonumber \\
\ge & \mathcal{M}(E,\Delta) \min_{\vec{x}} \sum_{\ell \in \tilde{C}} | \vec{x}(f(\ell)) - \vec{x}(i(\ell)) + \vec{n}_\ell|
\ge \nonumber \\
\ge & \mathcal{M}(E,\Delta) | W(\tilde{C},\vec{n}) | \ge \mathcal{M}(E,\Delta) \ ,
\label{appW:eps4}
\end{flalign}
where we have used the triangular inequality iteratively along the loop, and we have defined
\begin{gather}
\mathcal{M}(E,\Delta) = \min_{\ell \in \tilde{C}} \sqrt{M_\ell^2 - \Delta_\ell^2} \ .
\label{appW:emm0}
\end{gather}
The $\Delta_\ell$'s are arbitrary quantities satisfying the constraints in eq.~\eqref{appW:deltaconstr}. We can use this arbitrariness in order to optimize the above lower bound.

We separate two cases: either $\tilde{C}$ is a single loopline or $\tilde{C}$ has more than one line, none of which being a loopline. In the first case, $\Delta_\ell=0$ for $\ell \in \tilde{C}$ and
\begin{gather}
\mathcal{M}(E,\Delta) = M_\ell \ge M_{K^\pm} \ ,
\end{gather}
where the last inequality comes from the fact that the $K^\pm$ is the lightest strange particle.

We consider now the second possibility, i.e. $\tilde{C}$ has more than one line none of which being a loopline,
\begin{gather}
\tilde{C} \subseteq \bigsqcup_{k=1}^{N-1} L_k(\mathcal{H}) \ .
\end{gather}
We define
\begin{gather}
\tilde{C}_k = \tilde{C} \cap L_k(\mathcal{H}) \ .
\end{gather}
The minimization in eq.~\eqref{appW:emm0} can be performed in two steps, according to
\begin{gather}
\mathcal{M}(E,\Delta) = \min_{\substack{k \text{ s.t.} \\ \tilde{C}_k \neq \varnothing}} \  \min_{\ell \in \tilde{C}_k} \sqrt{M_\ell^2 - \Delta_\ell^2} \ .
\label{appW:emm1}
\end{gather}
We choose a $k$ such that $\tilde{C}_k$ is not empty. If $k<\tau(\xi_0)$ then all $\Delta_\ell$ are equal to zero therefore:
\begin{gather}
\min_{\ell \in \tilde{C}_k} \sqrt{M_\ell^2 - \Delta_\ell^2} 
=
\min_{\ell \in \tilde{C}_k} M_\ell
\ge M_{K^\pm}
\ .
\end{gather}
Let us consider $k\ge \tau(\xi_0)$. Since the loop has to close, $\tilde{C}_k$ contains an even number of lines. We separate two possibilities again: the number of lines in $\tilde{C}_k$ with $(-1)^B = 1$ is either even or odd.

If the number of lines in $\tilde{C}_k$ with $(-1)^B = 1$ is even, since all states propagating between $\xi_0$ and $\xi_1$ must have $(-1)^B=-1$, there must be a line $\ell_0$ in $L_k(\mathcal{H}) - \tilde{C}$ with $(-1)^B=-1$. Since the proton is the lightest particle with $(-1)^B=-1$ then surely $M_{\ell_0} \ge M_p$. We also pick two lines $\ell_1$ and $\ell_2$ in $\tilde{C}_k$ and we observe that $M_{\ell_1},M_{\ell_2} \ge M_{K^\pm}$. We assign
\begin{gather}
\Delta_{\ell_0} = \min \left\{ E , M_p \right\} \ , \nonumber \\
\Delta_{\ell_1} = \Delta_{\ell_2} = \max \left\{ 0 , \frac{E-M_p}{2} \right\} \ , \nonumber \\
\Delta_\ell=0 \ , \quad \text{for } \ell \in L_k(\mathcal{H}) - \{ \ell_0 , \ell_1, \ell_2 \} \ .
\end{gather}
By using the fact that $E < M_{\Xi^-}$ one can easily check that the constraints in eq.~\eqref{appW:deltaconstr} are satisfied. By replacing all masses in the strange loop with $M_{K^\pm}$ we get
\begin{gather}
\min_{\ell \in \tilde{C}_k} \sqrt{M_\ell^2 - \Delta_\ell^2} 
\ge 
\begin{cases}
M_{K^\pm} & \text{if } E \le M_p \\
\left[ M_{K^\pm}^2 - \left( \frac{E-M_p}{2} \right)^2 \right]^{1/2} \quad & \text{if } E \ge M_p
\end{cases}
\ .
\end{gather}

If the number of lines in $\tilde{C}_k$ with $(-1)^B = 1$ is odd, then there is at least a line $\ell_1$ in $\tilde{C}_k$ with $(-1)^B=-1$. Since the $\Lambda^0$ is the lightest strange particle with $(-1)^B=-1$ then surely $M_{\ell_1} \ge M_{\Lambda^0}$. We also pick another line $\ell_2$ in $\tilde{C}_k$. We assign
\begin{gather}
\Delta_{\ell_1} = \min \left\{ E , \frac{E^2 + M_{\Lambda^0}^2 - M_{K^\pm}^2}{2E} \right\} \ , \nonumber \\
\Delta_{\ell_2} = \max \left\{ 0 , \frac{E^2 - M_{\Lambda^0}^2 + M_{K^\pm}^2}{2E} \right\} \ , \nonumber \\
\Delta_\ell=0 \ , \quad \text{for } \ell \in L_k(\mathcal{H}) - \{ \ell_1, \ell_2 \} \ .
\end{gather}
By using the fact that $E < M_{\Xi^-}$ one can easily check that the constraints in eq.~\eqref{appW:deltaconstr} are satisfied. By replacing the mass $M_{\ell_1}$ with $M_{\Lambda^0}$ and all other masses in the strange loop with $M_{K^\pm}$ we get
\begin{gather}
\min_{\ell \in \tilde{C}_k} \sqrt{M_\ell^2 - \Delta_\ell^2} 
\ge 
\begin{cases}
M_{K^\pm} & \text{if } E \le \sqrt{M_{\Lambda^0}^2 - M_{K^\pm}^2} \\
\left[ M_{K^\pm}^2 - \left( \frac{E^2 - M_{\Lambda^0}^2 + M_{K^\pm}^2}{2E} \right)^2 \right]^{1/2} \quad & \text{if } E \ge \sqrt{M_{\Lambda^0}^2 - M_{K^\pm}^2}
\end{cases}
\ .
\end{gather}

By combining all discussed cases, we get that it is always possible to choose the quantities $\Delta_\ell$ in such a way that the constraints in eq.~\eqref{appW:deltaconstr} are satisfied and 
\begin{gather}
\mathcal{M}(E,\Delta) \ge \mathcal{M}(E) \ ,
\end{gather}
where $\mathcal{M}(E)$ is defined in eq.\eqref{appW:emm}. This concludes the proof of the theorem.

\end{proof}

%\bigskip

\paragraph{Large-volume behaviour of the spectral density.} One can easily reproduce the whole construction presented in this appendix around the external vertex $\xi_1$ instead of $\xi_0$. The general structure of the spectral density is:
\begin{flalign}
\rho(E;L) = & \frac{1}{\pi} \sum_{[(\mathcal{G},\vec{n})]} \sum_{\tau}
\left\{ \prod_{\ell \in L^\tau_-} \ \int \frac{d^3 p_\ell}{(2\pi)^3 2E_\ell} \right\}
(2\pi)^3 \delta^3 ({\textstyle \sum_{\ell \in L^\tau_-}} \vec{p}_\ell ) \times \nonumber \\
& \times
\left\{ \prod_{\ell \in L^\tau_+} \ \int \frac{d^3 p_\ell}{(2\pi)^3 2E_\ell} \right\}
(2\pi)^3 \delta^3 ({\textstyle \sum_{\ell \in L^\tau_+}} \vec{p}_\ell ) \times \nonumber \\
& \times
K^-_{\mu,\mathcal{G},\tau}(E,\vec{p},\vec{n}) R^0_{\mu\nu,\mathcal{G},\tau}(E,\vec{p},\vec{n}) K^+_{\nu,\mathcal{G},\tau}(E,\vec{p},\vec{n})
\ ,
\label{appW:DR6}
\end{flalign}
where $K^-_{\mu,\mathcal{G},\tau}(E,\vec{p},\vec{n})$ and $K^+_{\nu,\mathcal{G},\tau}(E,\vec{p},\vec{n})$ are Euclidean kernels that include a flavour-violating strange loop connected to $\xi_0$ and $\xi_1$ respectively. The Euclidean kernels satisfy
\begin{gather}
\ln |K^\pm_{\mu,\mathcal{G},\tau}(E,\vec{p},\vec{n})| \le - L \mathcal{M}(E) + \mathcal{O}(\ln L) \ ,
\end{gather}
thanks to theorems~\ref{appW:exp1} and~\ref{appW:exp3}. The desired eq.~\eqref{appW:finalbound} follows by observing that $R^0_{\mathcal{G},\tau}(E,\vec{p},\vec{n})$ has a finite large-$L$ limit and the phase-space integral generates at most powers in the volume.

\section{Corrections to charged-hadron masses in finite volume}
\label{app:massformula}
In this appendix we want to calculate the power-like finite-volume corrections to the masses of stable hadrons due to electromagnetic interactions at order $e^2$.

Stable hadrons are identified by their (finite-volume) flavour numbers. For instance the charged pion has baryon number $B=0 \mod 2$, strangeness $F_s = 0 \mod 2$ and electric charge $Q=1 \mod 2$. Given some flavour sector defining the target stable hadron $h$ and some momentum $\vec{p}$, we denote by $|h(\vec{p}),\sigma\rangle$ the lightest eigenstates of the QCD Hamiltonian $H_0$ in the given flavour sector, with momentum $\vec{p}$ and with energy $E_{h,0}(\vec{p},L)$. We assume that states with zero momentum are lighter than the others, and we refer to their energy as the $\mathcal{O}(e^0)$ finite-volume mass
\begin{gather}
E_{h,0}(\vec{p},L) > E_{h,0}(\vec{0},L) \equiv m_0(L) \ , \quad \text{if } \vec{p} \neq \vec{0} \ .
\end{gather}
The states $| h(\vec{0}),\sigma \rangle$ with $\sigma=1,\dots,d_s$ transform under some (possibly spinorial) representation of the cubic group. Normalization is chosen such that one recovers the relativistic normalization in infinite volume
\begin{gather}
\langle h(\vec{p}),\sigma | h(\vec{p}'),\sigma' \rangle = 2 E_{h,0}(\vec{p},L) L^3 \delta_{\vec{p},\vec{p}'} \delta_{\sigma,\sigma'} \ .
\end{gather}

The mass shift due to electromagnetic interactions is given by the Cottingham formula~\cite{Cottingham:1963zz}, which can be generalized easily to the case of finite volume by replacing the momentum integrals with the appropriate sums, yielding for the mass of the hadron $h$ in finite volume
\begin{gather}
m(L) =  m_0(L) - \frac{e^2}{4 m_0(L)} 
\frac{1}{L^3} \sum_{\vec{k} \in \Pi_-} \int \frac{dk_0}{2\pi} \frac{T_{\mu\mu}(k;L)}{k^2} \ ,
\label{eq:eu-cottingham-bare}
\\
T_{\mu\nu}(k;L) = \int d^4 x \, e^{- \imath k x} \langle h(\vec{0}) | \mathrm{T} \{ j_\mu(x) j_\nu(0) \} | h(\vec{0}) \rangle_c
\ ,
\label{eq:eu-compton}
\end{gather}
where the subscript $c$ stands for
\begin{gather}
\langle \psi | A | \psi' \rangle_c = \langle \psi | A | \psi' \rangle - \langle \psi | \psi' \rangle \, \langle 0 | A | 0 \rangle \ ,
\end{gather}
and $j_\mu(x)$ is the Heisenberg electric-current operator in Euclidean spacetime
\begin{gather}
j_\mu(x) = e^{x_0 H_0} e^{- \imath \vec{x} \vec{P}} j_\mu(0) e^{- x_0 H_0} e^{\imath \vec{x} \vec{P}} \ ,
\label{eq:eu-current}
\end{gather}
evolved with the QCD Hamiltonian $H_0$, and normalized in such a way that the electric charge is
\begin{gather}
Q(x_0) = - \imath \int d^3x \, j_0(x_0,\vec{x}) \ .
\label{eq:eu-charge}
\end{gather}
Notice that the electric current $j_\mu(x)$ is \C{}-odd and therefore it is also anti-periodic in space. It follows that the spatial momentum $\vec{k}$ in eq.~\eqref{eq:eu-compton} must belong to the set $\Pi_-$. In eq.~\eqref{eq:eu-compton} we have also used the shorthand notation
\begin{gather}
\langle h(\vec{0}) | A | h(\vec{0}) \rangle = \frac{1}{d_s} \sum_\sigma \langle h(\vec{0}),\sigma | A | h(\vec{0}),\sigma \rangle
\end{gather}
which is particularly useful as the spin will play no special role in the calculation of this appendix.

Because of the exponentially-raising operator $e^{x_0 H}$ in eq.~\eqref{eq:eu-current}, it is not obvious that the $x_0$-integral in eq.~\eqref{eq:eu-compton} converges. However the states that propagate in between the two currents have the same flavour numbers as the external state and they are odd under charge conjugation. In particular it follows that they cannot have zero momentum and they are therefore strictly heavier than the external state. In this situation the integral is shown to converge and an explicit calculation yields
\begin{flalign}
T_{\mu\nu}(k;L) = & M_{\mu\nu}(k;L) + M_{\nu\mu}(-k;L)
\label{eq:eu-compton-integr-0}
\ ,
\\
M_{\mu\nu}(k;L) = &
\int d^4 x \, \theta(x_0) e^{- \imath k x} \langle h(\vec{0}) | j_\mu(x) j_\nu(0) | h(\vec{0}) \rangle_c
\nonumber \\
= & \langle h(\vec{0}) | j_\mu(0) \frac{ L^3 \delta_{\vec{P},\vec{k}} }{ H_0-m_0(L) + \imath k_0 } j_\nu(0) | h(\vec{0}) \rangle
\nonumber \\
& \qquad -
2 m_0(L) L^3 \, \langle 0 | j_\mu(0) \frac{ L^3 \delta_{\vec{P},\vec{k}} }{ H_0 + \imath k_0 } j_\nu(0) | 0 \rangle
\ .
\label{eq:eu-compton-integr-1}
\end{flalign}
The retarded function $M_{\mu\nu}(k;L)$ is analytical for any complex $k_0$ except for the simple poles along the positive imaginary axis. In order to make contact with the original Minkowskian Cottingham formula~\cite{Cottingham:1963zz}, the reader can easily check that $T_{\mu\nu}(\imath k_0 - \epsilon,\vec{k};\infty)$ is the forward Compton amplitude for the scattering of a virtual photon with quadrimomentum $k$ from the hadron $h$ at rest.

Formula~\eqref{eq:eu-cottingham-bare} contains UV divergences which need to be subtracted. Notice that the electric current $j_\mu(x)$ does not require renormalization. Therefore the purely-QCD expectation value $\langle h(\vec{0}) | \mathrm{T} \{ j_\mu(x) j_\nu(0) \} | h(\vec{0}) \rangle_c$ appearing in eq.~\eqref{eq:eu-compton} is UV-finite for $x \neq 0$. The operator product expansion of $j_\mu(x) j_\nu(0)$ implies that $T_{\mu\nu}(k,L)$ vanishes like $k^{-2}$ (up to logarithms) at large $k$, which makes the integral in eq.~\eqref{eq:eu-cottingham-bare} logarithmically divergent. Following~\cite{Collins:1978hi} we renormalize the Euclidean Cottingham formula by introducing a Pauli-Villard regulator for the photon propagator and by adding appropriate counterterms:
\begin{gather}
m(L) =  m_0(L) +
\nonumber \\
\qquad +
\lim_{\Lambda \to \infty} \left\{
- \frac{e^2}{4 m_0(L)} 
\frac{1}{L^3} \sum_{\vec{k} \in \Pi_-} \int \frac{dk_0}{2\pi} \frac{T_{\mu\mu}(k;L) \Lambda^2}{k^2(k^2+\Lambda^2)} 
+
\langle h(\vec{0}) | C(\Lambda) | h(\vec{0}) \rangle_c
\right\}
\ .
\label{eq:eu-cottingham-ren}
\end{gather}
In this formula we assume that the regulator needed to define $T_{\mu\nu}(k,L)$ has been already removed. The counterterms have the form:
\begin{gather}
C(\Lambda) = c_\theta(\Lambda)  \theta_{\mu\mu}(0) + \sum_f c_f(\Lambda) m_f \bar{\psi}_f \psi_f(0) \ ,
\end{gather}
where $\theta_{\mu\nu}$ is the (Euclidean) energy-momentum tensor. Since \CQCDQED{} is a local theory, the coefficients $c(\Lambda)$ can be chosen to be $L$-independent by choosing renormalization conditions in infinite volume.

We are ready now to manipulate eq.~\eqref{eq:eu-cottingham-ren} in order to extract the power-like finite-volume corrections to the mass.

\medskip

\begin{lemma}
\label{prop:inf}
The QCD quantities appearing in eq.~\eqref{eq:eu-cottingham-ren} have only exponentially-suppressed finite-volume corrections,
\begin{gather}
m_0(L) - m_0(\infty) = \mathcal{O}(e^{-m_\pi L}) \ , \nonumber \\
T_{\mu\mu}(k;L) - T_{\mu\mu}(k;\infty) = \mathcal{O}(e^{-\frac{\sqrt{3}}{2} m_\pi L}) \ , \quad \text{for any real } k \neq 0 \ , \nonumber \\
\langle h(\vec{0}) | C(\Lambda) | h(\vec{0}) \rangle_c - \lim_{L \to \infty} \langle h(\vec{0}) | C(\Lambda) | h(\vec{0}) \rangle_c = \mathcal{O}(e^{-\frac{\sqrt{3}}{2} m_\pi L}) \ .
\end{gather}
\end{lemma}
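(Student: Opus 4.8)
The plan is to treat all three statements within the L\"uscher-type skeleton expansion of~\cite{Luscher:1985dn}, in the \Cstar{}-adapted form already developed in appendix~\ref{app:mixing}. Each of the three objects is purely a QCD quantity: no photon line enters, and the only effect of the boundary conditions is to let hadrons wind around the spatial torus, turning into their charge conjugates, which (by section~\ref{sub:flavour} and appendix~\ref{app:mixing}) is an exponentially suppressed process. Accordingly, one writes each quantity order by order in an effective Lagrangian of hadrons as a sum of Feynman diagrams with \Cstar{}-modified propagators --- those carrying the winding-dependent factors $C^{\mathcal{G}}_\ell(\vec{n}_\ell)$ of eq.~\eqref{appW:propagators} --- reorganises the spatial sums/integrals by the gauge-field/gauge-transformation construction of eqs.~\eqref{appW:DR1}--\eqref{appW:DR3}, and bounds each equivalence class by a heat-kernel/saddle-point estimate of the type of theorems~\ref{appW:exp1}--\ref{appW:exp3}. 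The exponent that results is the minimal ``mass $\times$ length'' of a line forced to wind at least once, and since the lightest hadron is the pion and a single-pion line winding once is of ``type 4'' (so $\langle\vec{n}_\ell\rangle=1$, hence $|\vec{n}_\ell|\ge1$), this exponent is of order $m_\pi L$ in every channel; no lighter scale can appear, because any \Cstar{}-induced flavour or baryon-number violation is itself carried by a winding hadron of mass at least $m_\pi$.

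For $m_0(L)$ this is literally the L\"uscher finite-volume mass-shift analysis. The leading correction comes from the self-energy topology in which a single pion line winds once inside a blob attached to the $h$ propagator; minimising the winding cost over vectors with $\langle\vec{n}_\ell\rangle=1$ gives $|\vec{n}_\ell|=1$, and the saddle-point estimate yields $\ln|m_0(L)-m_0(\infty)|=-m_\pi L+\mathcal{O}(\ln L)$. A winding $\pi^0$ would be a ``type 1'' line with minimal $|\vec{n}_\ell|=\sqrt{2}$ and is therefore subleading. Hence $m_0(L)-m_0(\infty)=\mathcal{O}(e^{-m_\pi L})$.

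For $T_{\mu\mu}(k;L)$ at fixed real $k\neq0$ I would first invoke the observation made just before the statement: the electromagnetic current is \C{}-odd, hence anti-periodic in space, so every state propagating between the two current insertions in eq.~\eqref{eq:eu-compton-integr-1} carries momentum in $\Pi_-$ and is strictly heavier than $|h(\vec{0})\rangle$. This is what makes the $x_0$-integral in eq.~\eqref{eq:eu-compton} convergent, and together with $k$ real and non-zero it keeps the infinite-volume Feynman integrand free of pinch singularities. One then compares the $\Pi_-$ loop sums with the corresponding continuum integrals as in appendix~\ref{app:mixing} and bounds the difference by the cheapest winding configuration available to a \emph{general} diagram with two separated current vertices --- as opposed to the special self-energy topology controlling the mass. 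The \Cstar{}-adapted L\"uscher estimate produces the generic bound $\mathcal{O}(e^{-\frac{\sqrt{3}}{2}m_\pi L})$: the cheapest winding line is again a pion with $|\vec{n}_\ell|\ge1$, but having to share the energy flowing through a timeslice with the other lines there lowers the effective exponent to $\sqrt{m_\pi^2-m_\pi^2/4}\,L=\frac{\sqrt{3}}{2}m_\pi L$ --- exactly the mechanism that produces the $\sqrt{M_{K^\pm}^2-(\cdots)^2}$ structure in theorem~\ref{appW:exp3}. This gives the stated bound for every fixed real $k\neq0$. Finally, $C(\Lambda)$ is a linear combination of the \emph{local} QCD operators $\theta_{\mu\mu}$ and $m_f\bar{\psi}_f\psi_f$ with $L$-independent coefficients (possible precisely because \CQCDQED{} is local), so $\langle h(\vec{0})|C(\Lambda)|h(\vec{0})\rangle_c$ is the connected matrix element of a local operator in a single-hadron state at rest, and the same analysis --- now with one operator insertion rather than a product of two currents --- gives $\mathcal{O}(e^{-\frac{\sqrt{3}}{2}m_\pi L})$ as well.

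The genuine work is not the ``exponential versus power'' dichotomy, which follows immediately from the mass gap of QCD$_{\mathrm{C}}$ and the \Cstar{} winding structure, but pinning down the rates $m_\pi$ and $\frac{\sqrt{3}}{2}m_\pi$. This requires the geometric optimisation of theorems~\ref{appW:exp1}--\ref{appW:exp3}: for each diagram and time-ordering, identify the cheapest winding configuration, check that the flavour/baryon-number violation induced by the boundary conditions never produces a winding line lighter than the pion, and verify that the remaining integrations (loop momenta, Schwinger parameters, vertex positions) contribute at most powers of $L$. One must also check that the $T_{\mu\mu}$ bound survives the $k_0$-integration and the renormalisation procedure of eq.~\eqref{eq:eu-cottingham-ren}, i.e.\ that the Pauli--Villars subtraction and the counterterms --- being built from local, $L$-independent data --- do not reintroduce an $L$-dependence weaker than $e^{-\frac{\sqrt{3}}{2}m_\pi L}$. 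The rest parallels appendix~\ref{app:mixing} essentially verbatim.
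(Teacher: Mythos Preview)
Your approach matches the paper's: both defer to the L\"uscher skeleton-expansion bounds of~\cite{Luscher:1985dn}, and in fact the paper explicitly declines to give a proof, simply citing theorems 2.4--2.6 there and remarking that appendix~\ref{app:mixing} supplies the \Cstar{} adaptation --- so your proposal is already more detailed than the paper's own argument.

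Two small slips in your elaboration are worth flagging. First, a $\pi^0$ is a type-1 line (flavourless, \C{}-even) in the classification of eq.~\eqref{appW:propagators}, and type-1 lines carry $C_\ell(\vec{n}_\ell)=1$ with \emph{no} constraint on $\langle\vec{n}_\ell\rangle$; the minimal nonzero winding is therefore $|\vec{n}_\ell|=1$, not $\sqrt{2}$. This does not damage the conclusion --- it just means the $\pi^0$ contributes at the same $e^{-m_\pi L}$ order as the $\pi^\pm$ rather than being subleading. Second, the paper attributes the split between the $m_\pi L$ rate for the mass and the $\frac{\sqrt{3}}{2}m_\pi L$ rate for the other two quantities directly to L\"uscher's propagator-versus-vertex theorems, noting that the improved mass bound is a QCD-specific refinement of the generic $\frac{\sqrt{3}}{2}m_{\text{gap}}L$ result; your energy-sharing heuristic borrowed from theorem~\ref{appW:exp3} lands on the right number but is not quite the same geometric mechanism (the standard origin is the Fermat-point optimisation for three-point vertex insertions). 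Neither point undermines the strategy.
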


\begin{proof}
A possible proof of this lemma, which we will not give here, can be obtained by using intermediate results and theorems in~\cite{Luscher:1985dn}, under the assumption that the leading finite-volume corrections are described by some arbitrarily-complicated Lagrangian massive field theory with small couplings, which effectively describe the dynamics of hadrons at large distance. All above quantities can be decomposed in terms of dressed propagators and (1PI) proper vertices with possible insertions and with two on-shell external legs. For all these quantities, the general conclusions of theorems 2.4, 2.5 and 2.6 hold, leading to a proof of the lemma. Notice that for a general theory, the finite-volume effects on the masses are $\mathcal{O}(e^{-\frac{\sqrt{3}}{2} m_{gap} L})$ however this is not the case in QCD~\cite{ProgressBook}. Some of the technology of~\cite{Luscher:1985dn} is adapted to the case of \Cstar{}-boundary conditions in app.~\ref{app:mixing}.
\end{proof}

Thanks to lemma~\ref{prop:inf}, we can write for the finite-volume correction to the mass
\begin{gather}
\Delta m(L) \equiv m(L) - m(\infty) 
\nonumber \\
= - \frac{e^2}{4 m_0} 
\lim_{\Lambda \to \infty}
\left\{ \frac{1}{L^3} \sum_{\vec{k} \in \Pi_-} - \int \frac{d^3k}{(2\pi)^3} \right\} \int \frac{dk_0}{2\pi} \frac{T_{\mu\mu}(k) \Lambda^2}{k^2(k^2+\Lambda^2)}  +
\nonumber \\
\qquad 
+ \mathcal{O}(e^{- m_\pi L}) + e^2 \mathcal{O}(e^{- \frac{\sqrt{3}}{2}  m_\pi L}) \ ,
\label{eq:deltam-1}
\end{gather}
where it is understood that we mean $L=\infty$ whenever we drop the $L$ dependence.

We introduce an arbitrary function $\eta(z)$ of a real variable $z$ with the properties: \textit{(a)} $\eta(z)$ is infinitely differentiable for any value of $z$, \textit{(b)} $\eta(z)=\eta(-z)$, \textit{(c)} $\eta(z) = 1$ for $|z| \le M^2/2$ for some arbitrary $M>0$, \textit{(d)} $\eta(z)=0$ for $|z| \ge M^2$. We rewrite the finite-volume correction to the mass as
\begin{gather}
\Delta m(L) = 
- \frac{e^2}{4 m_0} 
\left\{ \frac{1}{L^3} \sum_{\vec{k} \in \Pi_-} - \int \frac{d^3k}{(2\pi)^3} \right\}
\eta(\vec{k}^2) \int \frac{dk_0}{2\pi} \frac{T_{\mu\mu}(k)}{k^2} 
+ R(L) \ ,
\label{eq:deltam-2}
\end{gather}
where the reminder $R(L)$ is
\begin{gather}
R(L) = 
\lim_{\Lambda \to \infty}
\left\{ \frac{1}{L^3} \sum_{\vec{k} \in \Pi_-} - \int \frac{d^3k}{(2\pi)^3} \right\} \mathcal{I}_\Lambda(\vec{k})
+ \mathcal{O}(e^{-m_\pi L}) \ ,
\nonumber \\
\mathcal{I}_\Lambda(\vec{k}^2) = - \frac{e^2}{4 m_0} [1 - \eta(\vec{k}^2)] \int \frac{dk_0}{2\pi} \frac{T_{\mu\mu}(k) \Lambda^2}{k^2(k^2+\Lambda^2)} \ .
\end{gather}

\medskip

\begin{lemma}
\label{prop:smooth}
The infinite-volume Euclidean amplitude $T_{\mu\mu}(k)$ is infinitely differentiable for any $k \in \mathbb{R}^4/\{0\}$.
\end{lemma}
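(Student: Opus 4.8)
The plan is to extract the analytic structure of $T_{\mu\nu}(k)$ from the spectral (Källén–Lehmann) representation of the retarded amplitude $M_{\mu\nu}(k)$ — which, as noted below eq.~\eqref{eq:eu-compton-integr-1}, is the infinite-volume forward Compton amplitude at Euclidean photon momentum. Since $T_{\mu\nu}(k)=M_{\mu\nu}(k)+M_{\nu\mu}(-k)$ by eq.~\eqref{eq:eu-compton-integr-0} and $k\mapsto-k$ maps $\mathbb{R}^4\setminus\{0\}$ onto itself, it suffices to show that $M_{\mu\nu}(k)$ is $C^\infty$ on $\mathbb{R}^4\setminus\{0\}$. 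First I would insert in eq.~\eqref{eq:eu-compton-integr-1} a complete set of eigenstates of $H_0$ of definite momentum and invariant mass. Because $j_\nu(0)$ is flavour–diagonal, every intermediate state carries the flavour numbers of $h$, and because $h$ is the lightest state in that sector a state of momentum $\vec k$ has energy $\geq\sqrt{m_0^2+\vec k^2}$. This gives $M_{\mu\nu}(k)=M^{\mathrm{el}}_{\mu\nu}(k)+M^{\mathrm{cont}}_{\mu\nu}(k)$ with a single–hadron elastic (Born) piece and a multi–hadron piece
\begin{gather}
M^{\mathrm{el}}_{\mu\nu}(k_0,\vec k)=\frac{F_{\mu\nu}(\vec k)}{\sqrt{m_0^2+\vec k^2}-m_0+\imath k_0}\ ,\qquad
M^{\mathrm{cont}}_{\mu\nu}(k_0,\vec k)=\int_{W_0}^\infty dW\ \frac{\rho_{\mu\nu}(W,\vec k)}{\sqrt{W^2+\vec k^2}-m_0+\imath k_0}\ ,
\end{gather}
where $F_{\mu\nu}(\vec k)$ is a bilinear in the on–shell electromagnetic form factors of $h$ (hence smooth in $\vec k$, with $F_{\mu\nu}(\vec 0)\neq0$), $W_0>m_0$ is the lowest invariant mass of a multi–hadron state with the flavour numbers of $h$ (for the nucleon $W_0=m_N+m_\pi$), and $\rho_{\mu\nu}$ is the associated spectral density. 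If the selection rule mentioned before eq.~\eqref{eq:eu-compton-integr-1} forbids the elastic contribution, $M^{\mathrm{el}}$ is simply absent and the argument only becomes simpler.

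Next I would read off the singularities. The elastic term is $C^\infty$ wherever its denominator does not vanish; for real $k$ this denominator vanishes only when $k_0=0$ and $\sqrt{m_0^2+\vec k^2}=m_0$, i.e. precisely at $k=0$. This is exactly why $k=0$ must be excluded from the statement: approaching the origin along $\vec k=0$ one finds $M^{\mathrm{el}}_{\mu\nu}\to F_{\mu\nu}(\vec 0)/(\imath k_0)$, whereas along $k_0=0$ one finds $M^{\mathrm{el}}_{\mu\nu}\to 2m_0F_{\mu\nu}(\vec k)/\vec k^2+\dots$, so $T_{\mu\mu}$ is not even continuous at $k=0$. For the continuum term, the denominator satisfies $|\sqrt{W^2+\vec k^2}-m_0+\imath k_0|\geq W-m_0\geq W_0-m_0>0$ for every real $k$ and every $W\geq W_0$, hence is bounded away from zero uniformly; its $k_0$–derivatives are represented by integrals of the same kind with higher powers of the denominator, and its $\vec k$–derivatives act only on the smooth factors $\rho_{\mu\nu}(W,\vec k)$ and $\sqrt{W^2+\vec k^2}$, the lower limit $W_0$ being $\vec k$–independent. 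The two–body threshold behaviour $\rho_{\mu\nu}(W,\vec k)\sim(W-W_0)^{1/2}\times(\text{smooth})$ is regularised by the substitution $W=W_0+u^2$, after which differentiation under the integral is justified, and the large–$W$ tail is controlled by the operator product expansion ($T_{\mu\nu}=O(k^{-2})$ up to logs), so that it contributes at most a $k$–independent constant — removed by the same UV renormalisation that makes eq.~\eqref{eq:eu-cottingham-ren} finite — plus a smooth remainder. Hence $M^{\mathrm{cont}}_{\mu\nu}\in C^\infty(\mathbb{R}^4)$ and $T_{\mu\nu}\in C^\infty(\mathbb{R}^4\setminus\{0\})$.

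The main obstacle is the analysis near $\vec k\to0$: one must cleanly isolate the single-hadron Born pole and then invoke the spectral gap $W_0>m_0$ to see that the continuum stays regular there. Everything else — the threshold and ultraviolet control of the continuum integral needed to differentiate under the integral sign — is routine; as an alternative route that sidesteps those estimates one can observe that $\langle h(\vec 0)|j_\mu(x)j_\nu(0)|h(\vec 0)\rangle_c$ decays exponentially in the spatial separation (rate set by $W_0$) and, being retarded, in Euclidean time (rate $W_0-m_0$ for the continuum, the elastic pole being treated separately), so that its Fourier–Laplace transform is jointly analytic in a complex polydisc around every real $k\neq0$ and therefore $C^\infty$ there.
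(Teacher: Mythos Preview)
Your proposal is correct in spirit and in its essential strategy --- isolate the single-hadron (Born) pole from the multi-hadron continuum and use the spectral gap $W_0>m_0$ --- but it differs from the paper's route in technique and in what is actually proved.

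The paper works in Minkowski space with the retarded function $W_+(k)=-M_{\mu\mu}(-\imath k_0,\vec{k})$ and makes the same decomposition into single-particle and continuum pieces. However, rather than writing a K\"all\'en--Lehmann spectral integral in the invariant mass $W$, it passes to retarded \emph{commutators} and uses the Jost--Lehmann--Dyson representation. To get at the single-particle residue it introduces modified currents $\bar{J}_\mu=(-\Box+2\imath m_0\partial_0)J_\mu$ designed to annihilate the single-hadron intermediate state, so that the JLD domain of the auxiliary commutator is controlled by the \emph{next} threshold $m_0+\Delta$; an analogous auxiliary commutator handles the multi-particle piece. What this buys is genuine \emph{analyticity} (in a complex neighbourhood, not merely $C^\infty$), obtained from a representation in which the singularity structure is fully encoded in a known integration domain --- and analyticity near $\vec{k}^2=0$ is precisely what is needed downstream for Lemma~\ref{prop:laurent}.

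Your direct spectral approach gets smoothness on $\mathbb{R}^4\setminus\{0\}$, which is all Lemma~\ref{prop:smooth} asks for, and is perfectly adequate for that purpose. The one step you treat lightly is the smoothness of $\rho_{\mu\nu}(W,\vec{k})$ in $\vec{k}$ at fixed $W$: you assert it is smooth, but this is not immediate from the definition --- it follows from Lorentz covariance (one can boost the intermediate state to its rest frame, reducing the $\vec{k}$-dependence to that of smooth invariant form factors), and you should say so. Your closing remark about exponential decay in coordinate space is a legitimate shortcut, and the paper itself notes (just before the JLD analysis) that such effective-theory/L\"uscher-type arguments would give the same conclusions.
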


\begin{proof}
See app.~\ref{app:analyticity}.
\end{proof}

Thanks to lemma~\ref{prop:smooth} and to the fact that the factor $1 - \eta(\vec{k}^2)$ regularizes the singularity in $\vec{k}=0$, $\mathcal{I}_\Lambda(\vec{k}^2)$ is infinitely differentiable in $\vec{k} \in \mathbb{R}^3$. The reminder $R(L)$ is the difference between the integral of a smooth function and its approximation as a Riemann sum, which vanishes in the infinite-volume limit faster than any inverse power of $L$,
\begin{gather}
\lim_{L \to \infty} L^\omega R(L) = 0 \ , \quad \text{for all } \omega >0 \ .
\end{gather}

By plugging the eq.~\eqref{eq:eu-compton-integr-0} into eq.~\eqref{eq:deltam-2}, and by folding the $k_0$ integral we get
\begin{gather}
\Delta m(L) = 
- \frac{e^2}{2 m_0} 
\left\{ \frac{1}{L^3} \sum_{\vec{k} \in \Pi_-} - \int \frac{d^3k}{(2\pi)^3} \right\}
\eta(\vec{k}^2) \int \frac{dk_0}{2\pi} \frac{M_{\mu\mu}(k)}{k^2} 
+ R(L) \ .
\label{eq:deltam-2.5}
\end{gather}
The integrand is holomorphic in the half plane $\Im k_0 \le 0$ except for the single pole in $k_0 = -\imath |\vec{k}|$. Therefore the $k_0$ integral can be calculated as a Cauchy integral by closing the contour at infinity in the lower half plane yielding
\begin{gather}
\Delta m(L) = 
- \frac{e^2}{4 m_0} 
\left\{ \frac{1}{L^3} \sum_{\vec{k} \in \Pi_-} - \int \frac{d^3k}{(2\pi)^3} \right\}
\eta(\vec{k}^2) \frac{M_{\mu\mu}(-\imath |\vec{k}|,\vec{k})}{|\vec{k}|} 
+ R(L) \ .
\label{eq:deltam-3}
\end{gather}
We will see that the power-law finite-volume corrections come from the behaviour of the integrand in eq.~\eqref{eq:deltam-3} around $\vec{k}=\vec{0}$.

\medskip

\begin{lemma}
\label{prop:laurent}
Because of rotational symmetry, the on-shell retarded function $M_{\mu\mu}(-\imath |\vec{k}|,\vec{k})$ is a function of $\vec{k}$ only via $|\vec{k}|$. It can be decomposed as
\begin{gather}
M_{\mu\mu}(-\imath |\vec{k}|,\vec{k})\big|_{|\vec{k}| = \kappa} = \frac{\mathcal{M}_{-1}}{\kappa} + \mathcal{M}(\kappa) \ ,
\label{eq:regular-part}
\end{gather}
where the function $\mathcal{M}(\kappa)$ is analytical for complex values of $\kappa$ in a neighbourhood of zero, and
\begin{gather}
\mathcal{M}_{-1} = -2 m_0 q^2 \ ,
\end{gather}
where $q$ is the electric charge of the hadron $h$.
\end{lemma}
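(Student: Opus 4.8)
The plan is to isolate the one-hadron intermediate state, which is the only state that can produce a singularity as $\kappa \to 0$, and to read off its residue from the normalisation of the electric charge. First I would go back to the spectral representation~\eqref{eq:eu-compton-integr-1}, take the $L\to\infty$ limit, and insert a complete set of states of spatial momentum $\vec k$ between the two currents in the first term. All such states lie in the same flavour sector as $h$ and are odd under charge conjugation; since $h$ is assumed to be the lightest state of its flavour sector, the one-hadron states $\ket{h(\vec k),\sigma}$ are separated from the rest of the spectrum of $H_0$ in that sector by a nonzero gap. I would therefore split $M_{\mu\nu}(k) = M^{(1)}_{\mu\nu}(k) + M^{(\mathrm r)}_{\mu\nu}(k)$, where
\begin{gather}
M^{(1)}_{\mu\nu}(k) = \frac{1}{d_s}\sum_{\sigma,\sigma'} \frac{\langle h(\vec 0),\sigma|j_\mu|h(\vec k),\sigma'\rangle\,\langle h(\vec k),\sigma'|j_\nu|h(\vec 0),\sigma\rangle}{2E_h(\vec k)\,[E_h(\vec k)-m_0+\imath k_0]}\ ,\qquad E_h(\vec k)=\sqrt{m_0^2+\vec k^2}\ ,
\end{gather}
and $M^{(\mathrm r)}_{\mu\nu}$ gathers the multi-particle continuum of the first term together with the entire vacuum term of eq.~\eqref{eq:eu-compton-integr-1}.

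Second I would argue that $M^{(\mathrm r)}_{\mu\nu}$ contributes only to the regular part $\mathcal M(\kappa)$. In $M^{(\mathrm r)}$ every energy denominator stays bounded away from zero for $|\imath k_0|$ small — by the gap in the $h$-sector for the first term, and by the lowest mass in the $C$-odd vacuum channel for the vacuum term — so the corresponding resolvents are holomorphic in $k_0$ near the origin, while the reduced matrix elements depend smoothly on $\vec k$ by the same L\"uscher-type analyticity arguments used to prove Lemma~\ref{prop:smooth} in appendix~\ref{app:analyticity}. Hence $M^{(\mathrm r)}_{\mu\nu}(k)$ is jointly holomorphic in $(k_0,\vec k)$ in a neighbourhood of the origin, and after contracting the indices, averaging over $\sigma$, and going on shell ($k_0=-\imath\kappa$, $|\vec k|=\kappa$), rotational covariance makes it an analytic function of $\kappa$ near $\kappa=0$.

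Third I would extract the pole from $M^{(1)}$. Only $\mu=\nu=0$ is singular: the spatial matrix element $\langle h(\vec k),\sigma'|j_k|h(\vec 0),\sigma\rangle$ is $\mathcal O(|\vec k|)$ (it is carried by $\bar u(p')\gamma_k u(p)$-type structures, which vanish at zero momentum for a hadron at rest), so $M^{(1)}_{kk}$ is already analytic in $\kappa$. On shell $E_h(\vec k)-m_0+\imath k_0 = \kappa + \mathcal O(\kappa^2)$ and $2E_h(\vec k)=2m_0+\mathcal O(\kappa^2)$, while the charge normalisation eq.~\eqref{eq:eu-charge}, $Q=-\imath\int d^3x\,j_0$, together with the anti-Hermiticity of the Euclidean time component of the current, forces
\begin{gather}
\langle h(\vec 0),\sigma|j_0|h(\vec 0),\sigma'\rangle\,\langle h(\vec 0),\sigma'|j_0|h(\vec 0),\sigma\rangle = -4 m_0^2 q^2\,\delta_{\sigma\sigma'}\ .
\end{gather}
Dividing the numerator of $M^{(1)}_{00}$ (which is analytic in $\kappa$ and equals $-4m_0^2q^2$ at $\kappa=0$) by $2E_h(\vec k)[E_h(\vec k)-m_0+\imath k_0] = 2m_0\kappa(1+\mathcal O(\kappa))$ gives $M^{(1)}_{00}(-\imath\kappa,\vec k)\big|_{|\vec k|=\kappa} = -2m_0 q^2/\kappa + (\text{analytic in }\kappa)$, which is precisely eq.~\eqref{eq:regular-part} with $\mathcal M_{-1}=-2m_0 q^2$.

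I expect the main obstacle to be the second step: rigorously establishing that $M^{(\mathrm r)}$ is holomorphic in a full neighbourhood of $(k_0,\vec k)=0$, in particular the smoothness in $\vec k$ of the reduced matrix elements and of the resolvent of $H_0-m_0$ with the one-hadron pole removed. This is exactly the type of statement controlled by the analyticity machinery invoked for Lemma~\ref{prop:smooth}, so I would lean on appendix~\ref{app:analyticity}; the remaining ingredients are bookkeeping and the electric-charge Ward identity.
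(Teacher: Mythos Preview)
Your outline is correct and matches the paper's strategy: isolate the one-hadron intermediate state, prove the remainder is analytic, and fix the residue from the electric-charge normalisation. The paper packages the split slightly differently --- it works in Minkowski with $W_+(k)=-M_{\mu\mu}(-\imath k_0,\vec k)$ and collects all Lorentz indices into a single residue function $Z_{\text{1P}}(\vec k^2)$ rather than treating $\mu=0$ and $\mu=k$ separately --- but this is cosmetic.

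The genuine methodological difference is in the analyticity step. You plan to lean on L\"uscher-type effective-theory arguments; the paper's primary proof is instead non-perturbative, based on the Jost--Lehmann--Dyson representation of suitable retarded commutators. To control $Z_{\text{1P}}$ it uses an operator trick: replacing $J_\mu$ by $\bar J_\mu=(-\Box+2\imath m_0\partial_0)J_\mu$ kills the one-hadron pole in \emph{both} time-orderings of the commutator, and the JLD domain then bounds the denominator away from zero in a complex neighbourhood of $\vec k=0$, giving analyticity of $Z_{\text{1P}}(\vec k^2)$. An analogous JLD argument (with only one insertion of the differential operator) handles $Z_{\text{MP}}$. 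The paper explicitly remarks that the effective-theory route you propose is a valid alternative, so both approaches work; the JLD machinery buys you a first-principles statement with no assumption of a hadronic Lagrangian, at the cost of heavier technology.

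One small caveat on your third step: the claim that ``the numerator of $M^{(1)}_{00}$ is analytic in $\kappa$'' is not bookkeeping --- it is precisely the analyticity of $Z_{\text{1P}}(\vec k^2)$, i.e.\ the analyticity of the form factors in the momentum transfer, and it needs the same machinery as your step two (whether JLD or effective-theory). Treat it on the same footing as the analyticity of $M^{(\mathrm r)}$ rather than as a given.
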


\begin{proof}
See app.~\ref{app:analyticity}.
\end{proof}

We can use now Poisson summation formula in order to express the discrete sum in eq.~\eqref{eq:deltam-3} over spatial momenta in terms of Fourier integrals. Since the momentum $\vec{k}$ belongs to the anti-periodic set $\hat{\Pi}_-$, an extra sign appears in Poisson summation formula
\begin{gather}
\sum_{\vec{k}\in \hat \Pi_-} f(\vec k) = 
\sum_{\vec{n}\in \Z_3} (-1)^{\langle \vec n\rangle} \int \frac{d^3k}{(2\pi)^3}\, 
e^{\imath \vec{n} \vec{k}}\, f(\vec k)\ ,
\label{eq:poissonmaintext}
\end{gather}
where $\langle \vec n\rangle$ has been defined in eq.~\eqref{eq:langlenrangle}. The term $\vec n=\vec 0$ in the previous expression corresponds to the infinite-volume integral. By plugging the definition~\eqref{eq:regular-part} into eq.~\eqref{eq:deltam-3}, and after calculating the angular integral in $\vec{k}$, we get
\begin{flalign}
\Delta m(L) = &
- \frac{e^2 \mathcal{M}_{-1}}{8 m_0 \pi^2 L}
\sum_{\vec{n}\in \Z_3 / \{ 0\}} \frac{(-1)^{\langle \vec n\rangle}}{|\vec{n}|}
\int_0^\infty d\kappa \ 
\eta(\kappa^2) \frac{\sin (\kappa |\vec{n}| L)}{\kappa} - 
\nonumber \\
& \qquad
- \frac{e^2}{8 \pi^2 m_0 L} 
\sum_{\vec{n}\in \Z_3 / \{ 0\}} \frac{(-1)^{\langle \vec n\rangle}}{|\vec{n}|}
\int_0^\infty d\kappa \ 
\eta(\kappa^2) \mathcal{M}(\kappa) \sin (\kappa |\vec{n}| L)
+ R(L) \ .
\label{eq:deltam-4}
\end{flalign}
We exploit the arbitrariness we have in choosing the function $\eta(\kappa^2)$ and assume that it has support in the analyticity domain of $\mathcal{M}(\kappa)$. Thanks to lemma~\ref{prop:laurent} the function $\eta(\kappa^2) \mathcal{M}(\kappa)$ is smooth for any $\kappa>0$, and has all right derivatives in $\kappa=0$. The expansion in powers of $1/L$ can be written in terms of the following generalized zeta function
\begin{gather}
\xi(s) = \sum_{\vec{n}\in \Z_3 / \{ 0\}} \frac{(-1)^{\langle \vec n\rangle}}{|\vec{n}|^s} \ ,
\label{eq:zeta-function}
\end{gather}
which is analytically extended to a meromorphic function in the whole complex plane, and holomorphic for $\Re s >0$.

The first integral in~\eqref{eq:deltam-4} can be understood by defining the function
\begin{gather}
\tilde{\eta}(x) = \int_{-\infty}^\infty \frac{d\kappa}{2\pi} \, \eta(\kappa^2) e^{i\kappa x} \ .
\end{gather}
Since $\eta(k^2)$ is a Schwartz function so is $\tilde{\eta}(x)$, and in particular it decays at infinity faster than any inverse power of $x$. The sum
\begin{gather}
\frac{2}{\pi} \sum_{\vec{n}\in \Z_3 / \{ 0\}}
\frac{(-1)^{\langle \vec n\rangle}}{|\vec{n}|}
\int_0^\infty d\kappa \ 
\eta(\kappa^2) \frac{\sin (\kappa |\vec{n}| L)}{\kappa}
=
\sum_{\vec{n}\in \Z_3 / \{ 0\}} \frac{(-1)^{\langle \vec n\rangle}}{|\vec{n}|}
\int_{-|\vec{n}|L}^{|\vec{n}|L} dx \ \tilde{\eta}(x)
\end{gather}
converges to $\xi(1)$, and the corrections decay faster than any power in $1/L$. The second integral in~\eqref{eq:deltam-4} has a Taylor expansion in $(|\vec{n}|L)^{-1}$ that can be extracted by using iteratively the identity
\begin{gather}
\int_0^\infty d\kappa \  f(\kappa) \sin (x \kappa)
%=
%-\frac{1}{x} \int_0^\infty f(x) \frac{d}{dk} \cos (x k)
=
\frac{1}{x} f(0)
-
\frac{1}{x^2} \int_0^\infty d\kappa \ f''(\kappa) \sin (x \kappa) \ .
\end{gather}
Putting everything together we obtain the desired expansion of the finite-volume corrections to the mass
\begin{flalign}
\Delta m(L) = &
- \frac{e^2 \mathcal{M}_{-1} }{16 m_0\pi L} \xi(1)
- \frac{e^2}{8 \pi^2 m_0} 
\sum_{\ell=0}^\infty
\frac{(-1)^{\ell}}{ L^{2+2\ell} }
\mathcal{M}_{2\ell} \xi(2+2\ell)
+ \dots \ ,
\label{eq:deltam-5}
\end{flalign}
where the dots stand for contributions that decay faster than any power of $1/L$, and $\mathcal{M}_{2\ell}$ is the $(2\ell)$-th derivative of $\mathcal{M}(\kappa)$ in $\kappa=0$. Notice that the on-shell forward Compton amplitude is given by
\begin{gather}
\mathcal{T}(\vec{k}^2) = T_{\mu\mu}(\imath |\vec{k}| - \epsilon,\vec{k}) = M_{\mu\mu}(\imath |\vec{k}|,\vec{k}) + M_{\mu\mu}(-\imath |\vec{k}|,\vec{k}) = \mathcal{M}(|\vec{k}|) + \mathcal{M}(-|\vec{k}|) \ ,
\end{gather}
therefore the coefficients $\mathcal{M}_{2\ell}$ are trivially related to the derivatives
\begin{gather}
\mathcal{M}_{2\ell} = \frac{(2\ell)!}{2(\ell!)} \frac{d^\ell}{d(\kappa^2)^\ell} \mathcal{T}(0) \equiv \frac{(2\ell)!}{2(\ell!)} \mathcal{T}_\ell
\end{gather}
of the on-shell forward Compton amplitude for the scattering of soft photons on the hadron $h$ at rest.

We also notice that the coefficients $\mathcal{M}_{-1}$ and $\mathcal{T}_0$ depend only on the mass and charge of the hadron, and not on its spin or internal structure. For the scattering amplitude we use the classical result~\cite{Low:1954kd,GellMann:1954kc} (also reviewed in section 13.5 of~\cite{WeinbergBook}):
\begin{gather}
\mathcal{T}_0 = \lim_{\vec{k} \to \vec{0}} T_{\mu\mu}(\imath |\vec{k}| - \epsilon,\vec{k}) = -4q^2 \ .
\end{gather}

We conclude this appendix by providing a representation of the zeta function $\xi(s)$ defined in eq.~\eqref{eq:zeta-function} which is useful for numerical calculation. We use the identity
\begin{gather}
\frac{1}{|\vec n|^{s/2}}
=
\frac{1}{\Gamma(s/2)}
\int_0^{u_\star} du\,
u^{\frac{s}{2}-1}
e^{-u \vec n^2} 
+
\frac{1}{\Gamma(s/2)}
\int_{u_\star}^\infty du\,
u^{\frac{s}{2}-1}
e^{-u \vec n^2} \ ,
\end{gather}
we plug it into eq.~\eqref{eq:zeta-function} and we use the Poisson summation formula
\begin{gather}
\sum_{\vec{n}\in \Z_3}
e^{-u \vec n^2 +\imath\pi\langle \vec n\rangle} 
=
\left(\frac{\pi}{u}\right)^{\frac{3}{2}}\,
\sum_{\vec k\in \hat \Pi_-} e^{-\frac{\vec k^2}{4u}} \ ,
\end{gather}
only in the integral over $u \in [0,u^\star]$. At this point all integrals can be calculated explicitly in terms of the upper incomplete gamma functions:
\begin{align}
\xi(s) = 
\frac{1}{\Gamma(s/2)}\Bigg\{
-\frac{2 u_\star^{s/2}}{s}
+
\frac{\pi^{3/2}}{2^{s-3}}
\sum_{\vec{k}\in \hat \Pi_-}
(\vec k^2)^{\frac{s-3}{2}}\,&\Gamma\left(\frac{3-s}{2},\frac{\vec k^2}{4u_\star}\right)
+ \nonumber \\
&+
\sum_{\vec{n}\neq 0} \frac{(-1)^{\langle \vec n\rangle}}{|\vec n|^{s/2}}\, 
\Gamma\left(s/2, u_\star \vec n^2 \right)
\Bigg\} \ .
\label{eq:zeta-function-final}
\end{align}
The upper incomplete gamma function $\Gamma(\tau,z)$ is defined for all complex values of $\tau$ except non-positive integers, and it decays exponentially as $|z| \to \infty$. Therefore the infinite sums in the previous formula are rapidly convergent. Also this representation is valid for all values of $s$ needed in the mass formula. The splitting variable $u_\star>0$ is completely arbitrary and can be used to check the result of numerical calculation.

\subsection{Analyticity properties}
\label{app:analyticity}

\textit{In this subsection we work in Minkowski spacetime with metric $g=\text{diag}(1,-1,-1,-1)$.} We also set $L=\infty$. We introduce the Minkowskian electric current:
\begin{gather}
J_\mu(x) = e^{\imath(x_0 H_0-\vec{x} \vec{P})} J_\mu(0) e^{-\imath(x_0 H_0-\vec{x} \vec{P})} \ ,
\end{gather}
which is related to the Euclidean one introduced in eq.~\eqref{eq:eu-compton} via
\begin{gather}
J^0(0) = -\imath j_0(0) \ , \quad J^k(0) = j_k(0) \ .
\end{gather}
While in finite volume, because of \Cstar{}-boundary conditions, eigenstates of the momentum are also eigenstates of the charge-conjugation operator, this is not necessarily true in infinite volume. We perform a change of basis which does not affect the quantities we are interested in, and we choose to work with simultaneous eigenstates of energy, momentum and electric charge.

We consider the retarded two-point function in the forward limit
\begin{gather}
W_+(k) = \imath  \, \lim_{\vec{p} \to \vec{0}} \int d^4 x \, e^{\imath k  x} \theta(x_0) \langle h(\vec{p}) | J_\mu(x) J^\mu(0) | h(\vec{0}) \rangle_c
\label{eq:Wplus-1}
\ ,
\end{gather}
which is related to the function $M_{\mu\mu}(k)$ introduced in eq.~\eqref{eq:eu-compton-integr-1} in infinite volume through a Wick rotation
\begin{gather}
M_{\mu\mu}(k_0,\vec{k}) = - W_+(-\imath k_0,\vec{k}) \ .
\end{gather}
The subtraction of the disconnected part in eq.~\eqref{eq:Wplus-1} can be expanded:
\begin{gather}
\int d^4 x \, e^{\imath k  x} \theta(x_0) \langle h(\vec{p}) | J_\mu(x) J^\mu(0) | h(\vec{0}) \rangle_c \nonumber \\
= \int d^4 x \, e^{\imath k  x} \theta(x_0) \bigg\{ \langle h(\vec{p}) | J_\mu(x) J^\mu(0) | h(\vec{0}) \rangle - 2 E(\vec{p}) (2\pi)^3 \delta^3(\vec{p}) \langle 0 | J_\mu(x) J^\mu(0) | 0 \rangle \bigg\}
\ .
\end{gather}
Notice that in this formula we cannot just take $\vec{p}=\vec{0}$ because the disconnected contribution gives a geometrical divergence proportional to $\delta^3(\vec{0})$. Therefore the limit in eq.~\eqref{eq:Wplus-1} is essential in order to define properly the subtraction. However notice that for any $\vec{p}\neq \vec{0}$ the delta function vanishes exactly and it does not contribute to the limit ($\lim_{\vec{p}\to\vec{0}} \delta^3(\vec{p}) = 0$), allowing us to write equivalently
\begin{gather}
W_+(k) = \imath  \, \lim_{\vec{p} \to \vec{0}} \int d^4 x \, e^{\imath k  x} \theta(x_0) \langle h(\vec{p}) | J_\mu(x) J^\mu(0) | h(\vec{0}) \rangle
\label{eq:Wplus-1.5}
\ ,
\end{gather}
It is possible to prove that this limit is finite, which we will assume in the remaining of this appendix.\footnote{
By means of the LSZ reduction formula, the function $C_+(k)$ defined in~\eqref{eq:Cplus} can be expressed as linear combinations of the \textit{reduced Green's functions} defined in eq. (16.52), chapter 16 of~\cite{BogoBook}. On the other hand, as we will notice later on, $W_+(k)$ is uniquely determined by $C_+(k)$. The finiteness of the limit $\vec{p}\to\vec{0}$ in eq.~\eqref{eq:Cplus} and consequently in eq.~\eqref{eq:Wplus-1.5} derives from the analyticity properties of the \textit{reduced Green's functions} stated in Theorem 16.8, chapter 16 of~\cite{BogoBook}. The reader should notice that the connected part and the limit are systematically dropped in the classical literature, e.g.~\cite{Cottingham:1963zz}.
}

By calculating the coordinate integral in eq.~\eqref{eq:Wplus-1.5} one gets
\begin{gather}
W_+(k) = \lim_{\vec{p} \to \vec{0}} \langle h(\vec{p}) | J_\mu(0) \frac{(2\pi)^3 \delta^3(\vec{P}-\vec{p}-\vec{k})}{H_0- E(\vec{p}) - k_0 - \imath \epsilon} J^\mu(0) | h(\vec{0}) \rangle \ ,
\label{eq:Wplus-2}
\end{gather}
where we have introduced the energy of the external states
\begin{gather}
E(\vec{p}) = \sqrt{m_0^2 + \vec{p}^2} \ .
\end{gather}
Notice that, since the external state corresponds to a stable hadron, the states propagating in between the two currents in eq.~\eqref{eq:Wplus-2} have energy not smaller than $m_0$. Therefore the retarded function $W_+(k)$ has poles only for non-negative values of $k_0$.

It is useful to separate the single particle component from the continuous part of the spectrum:
\begin{gather}
W_+(k) = \frac{Z_{\text{1P}}(\vec{k}^2)}{E(\vec{k}) -m_0 - k_0 + \imath \epsilon} + Z_{\text{MP}}(k_0,\vec{k}^2) \ , \\
Z_{\text{1P}}(\vec{k}^2) = \frac{1}{d_s} \sum_{\mu,\sigma,\sigma'} \frac{g^{\mu\mu}}{2E(\vec{k})} | \langle h(\vec{0}),\sigma | J_\mu(0) | h(\vec{k}) , \sigma' \rangle|^2 \ .
\end{gather}

We want to study the analyticity properties of $Z_{\text{1P}}(\vec{k}^2)$ and $Z_{\text{MP}}(k_0,\vec{k}^2)$ in the spatial momentum $\vec{k}$ (the analyticity properties in $k_0$ are obvious from the spectral decomposition~\eqref{eq:Wplus-2}), which we summarize here:
\begin{enumerate}[noitemsep,topsep=-.7em]
\item $Z_{\text{1P}}(\vec{k}^2)$ is analytical for any real value of $\vec{k}$, and can be analytically continued to a complex neighbourhood of $\vec{k}^2 = 0$;
\item The Euclidean function $Z_{\text{MP}}(-\imath k_0,\vec{k}^2)$ is analytical for any real value of $\vec{k}$;
\item The on-shell function $Z_{\text{MP}}(-|\vec{k}|,\vec{k}^2)$ is analytical for any real value of $\vec{k}$;
\item The on-shell function $Z_{\text{MP}}(|\vec{k}|,\vec{k}^2)$ is analytical for real values of $\vec{k}$ in a neighbourhood of $\vec{k}=\vec{0}$, and can be analytically continued to a complex neighbourhood of $|\vec{k}|=0$.
\end{enumerate}
From these properties it follows that:
\begin{enumerate}[noitemsep,topsep=-.7em]
\item The off-shell Euclidean Compton amplitude
\begin{gather}
T(k_0,\vec{k}) = M_{\mu\mu}(k_0,\vec{k}) + M_{\mu\mu}(-k_0,-\vec{k})
\nonumber \\=
- \frac{2 [E(\vec{k}) -m_0] Z_{\text{1P}}(\vec{k}^2)}{[E(\vec{k}) -m_0]^2 + k_0^2} - Z_{\text{MP}}(\imath k_0,\vec{k}^2) - Z_{\text{MP}}(-\imath k_0,\vec{k}^2)
\end{gather}
is analytical for any real value of $\vec{k}\neq \vec{0}$ (lemma \ref{prop:smooth}).

\item The on-shell quantity
\begin{gather}
M_{\mu\mu}(-i|\vec{k}|,\vec{k}) = - \frac{Z_{\text{1P}}(\vec{k}^2)}{E(\vec{k}) -m_0 + |\vec{k}|} - Z_{\text{MP}}(-|\vec{k}|,\vec{k}^2) \ ,
\end{gather}
as a function of $|\vec{k}|$, admits a meromorphic extension to a complex neighbourhood of $|\vec{k}|=0$. In particular it admits a Laurent series in $|\vec{k}|=0$, the first term being:
\begin{gather}
M_{\mu\mu}(-i|\vec{k}|,\vec{k}) = -\frac{Z_{\text{1P}}(0)}{|\vec{k}|} + \mathcal{O}(|\vec{k}|^0) \ .
\end{gather}
We will show that, eq.~\eqref{eq:first-universal-term}, $Z_{\text{1P}}(0) = 2m_0q^2$ (lemma \ref{prop:laurent}).

\item The on-shell Compton amplitude
\begin{gather}
T(\imath |\vec{k}| -\epsilon,\vec{k}) =
- \frac{2 [E(\vec{k}) -m_0] Z_{\text{1P}}(\vec{k}^2)}{[E(\vec{k}) -m_0]^2 - \vec{k}^2 - \imath \epsilon} - Z_{\text{MP}}(|\vec{k}|,\vec{k}^2) - Z_{\text{MP}}(-|\vec{k}|,\vec{k}^2)
\end{gather}
is an analytic function of $\vec{k}^2$ in a complex neighbourhood of $\vec{k}^2=0$. This follows from the analyticity properties discussed above, from the fact that the odd powers in $|\vec{k}|$ generated by the expansion of $Z_{\text{MP}}(\pm|\vec{k}|,\vec{k}^2)$ cancel out, and from the fact that
\begin{gather}
\lim_{\vec{k} \to \vec{0}} \frac{2 [E(\vec{k}) -m_0]}{[E(\vec{k}) -m_0]^2 - \vec{k}^2 - \imath \epsilon} = - \frac{1}{m_0} \ .
\end{gather}

\end{enumerate}

The full analyticity properties of $W_+(k)$, and consequently of $Z_{\text{1P}}(\vec{k}^2)$ and $Z_{\text{MP}}(k_0,\vec{k}^2)$, can be derived by the analyticity properties of four-point reduced Green's functions discussed in chapter 16 of~\cite{BogoBook}. However we provide here a hopefully more digestible proof of the particular properties we are interested in, based on the Jost-Lehmann-Dyson representation of the expectation values of certain retarded commutators. We also point out that the same analyticity properties we are interested in can also be obtained by assuming an effective theory describing hadrons and by using results and methods discussed in section 2.4 of ref.~\cite{Luscher:1985dn} and in appendix~\ref{app:mixing}.

\paragraph{Analysis of $Z_{\text{1P}}(\vec{k}^2)$.}

We notice first that $Z_{\text{1P}}(\vec{k}^2)$ can be extracted also from the retarded commutator
\begin{flalign}
C_+(k) = & \imath \, \lim_{\vec{p} \to \vec{0}} \int d^4 x \, e^{\imath k  x} \theta(x_0) \langle h(\vec{p}) | \, [ J_\mu(x) , J^\mu(0) ] \, | h(\vec{0}) \rangle \nonumber \\
= & W_+(k) + W_-(k) \ ,
\label{eq:Cplus}
\end{flalign}
where $W_-(k)$ is a functions with poles only for negative value of $k_0$. Therefore the following reduction formula holds
\begin{gather}
\lim_{k_0 \to E(\vec{k}) - m_0} [E(\vec{k}) -m_0 - k_0] C_+(k) \nonumber \\
\qquad\qquad
= \lim_{k_0 \to E(\vec{k}) - m_0} [E(\vec{k}) -m_0 - k_0] W_+(k) = Z_{\text{1P}}(\vec{k}^2) \ .
\label{eq:reduction-1}
\end{gather}
Then we extract the single-hadron pole from both orderings of the retarded commutator by means of the following trick. We introduce the auxiliary retarded commutator
\begin{gather}
\tilde{C}_+(k) = \imath \, \lim_{\vec{p} \to \vec{0}} \int d^4 x \, e^{\imath k  x} \theta(x_0) \langle h(\vec{p}) | [ \bar{J}_\mu(x) , \bar{J}^\mu(0) ] | h(\vec{0}) \rangle \ ,
\label{eq:Ctildeplus}
\\
\bar{J}_\mu(x) = (-\Box+ 2\imath m_0 \partial_0) J^\mu(x) \ .
\end{gather}
The relation between this retarded commutator and the original one is obtained through integration by parts of the differential operator $(-\Box+ 2\imath m_0 \partial_0)$
\begin{gather}
(k^2 + 2 m_0 k_0) (k^2 - 2 m_0 k_0) C_+(k) = \tilde{C}_+(k) + \tilde{P}(k) \ .
\label{eq:separation-tilde}
\end{gather}
The boundary term has the form
\begin{gather}
\tilde{P}(k) = \imath \, \lim_{\vec{p} \to \vec{0}} \int d^4 x \, \delta(x_0) e^{\imath k  x} \, \langle h(\vec{p}) | \, [ \mathcal{D} J_\mu(x) , J^\mu(0) ] \, | h(\vec{0}) \rangle
\end{gather}
where $\mathcal{D}$ is some local differential operator. The integrand of $\tilde{P}(k)$ involves only commutators of local operators at equal time, which are linear combinations of delta functions and their derivatives. Therefore $\tilde{P}(k)$ is a polynomial in the quadrimomentum $k$. In terms of the auxiliary retarded commutator, the reduction formala reads
\begin{gather}
Z_{\text{1P}}(\vec{k}^2) =  \frac{1}{8 m_0 E(\vec{k}) [E(\vec{k})-m_0] } \lim_{k_0 \to E(\vec{k}) - m_0} [ \tilde{C}_+(k) + \tilde{P}(k) ] \ .
\label{eq:reduction-2}
\end{gather}

The analyticity properties of the modified retarded commutator can be exposed by means of the Jost-Lehmann-Dyson (JLD) representation~\cite{JLpaper,Dyson:1997gw},
\begin{gather}
\tilde{C}_+(k) =
\int_{\tilde{\mathcal{S}}} \frac{d^4u d\lambda^2 \ \tilde{\rho}(u,\lambda^2)}{(k-u)^2 - \lambda^2 + \imath (k_0-u_0) \epsilon}
\ ,
\label{eq:V:JLD-tilde}
\end{gather}
where the JLD spectral function $\tilde{\rho}(u,\lambda^2)$ is uniquely determined by the retarded commutator. The integration domain $\tilde{\mathcal{S}}$ encodes all known information about the spectrum, and can be represented as the set of 5-tuples $(u,\lambda^2)$ such that
\begin{gather}
\begin{dcases}
\sqrt{(\vec{u}-\vec{k})^2+\lambda^2} + u_0 \ge \sqrt{M_1^2 + \vec{k}^2} - m \\
\sqrt{(\vec{u}-\vec{k})^2+\lambda^2} - u_0 \ge \sqrt{M_2^2 + \vec{k}^2} - m
\end{dcases}
\ ,\label{eq:V:JLD-S-tilde}
\end{gather}
for any value of the momentum $\vec{k}$. The masses $M_1$ and $M_2$ are determined in the following way. Consider the commutator
\begin{gather}
\lim_{\vec{p} \to \vec{0}} \int d^4 x \, e^{\imath k  x} \langle h(\vec{p}) | [ \bar{J}_\mu(x) , \bar{J}^\mu(0) ] | h(\vec{0}) \rangle \nonumber \\
=
\lim_{\vec{p} \to \vec{0}} \langle h(\vec{p}) | \bar{J}_\mu(0) (2\pi)^4 [ \delta(P - p - k) - \delta(P - p + k) ]_{p=(E(\vec{p}),\vec{p})} \bar{J}^\mu(0) ] | h(\vec{0}) \rangle \ ,
\end{gather}
with $P=(H_0,\vec{P})$. The two delta functions come from the two different orderings of the currents in the commutator. $M_1$ and $M_2$ are the masses of the lightest states propagating in between the two currents in the first and second ordering respectively. Had we considered the original current $J_\mu(x)$, the lightest state would have been the hadron $h$ itself. However it is easy to check that the insertion of the operators $(-\Box+ 2\imath m_0 \partial_0)$ kills the contribution of the single-hadron states in the above commutator, therefore
\begin{gather}
M_1 = M_2 = m+\Delta
\end{gather}
where $\Delta > 0$ is some mass gap (if no bound states exist $\Delta=2 m_\pi$). 

The relevant limit for the reduction formula~\eqref{eq:reduction-2} is
\begin{flalign}
f(\vec{k}^2) = & \lim_{k_0 \to E(\vec{k}) - m_0} [ \tilde{C}_+(k) + \tilde{P}(k) ] \nonumber \\
= & P(E(\vec{k}),\vec{k}) +
\int_{\tilde{\mathcal{S}}} \frac{d^4u d\lambda^2 \ \tilde{\rho}(u,\lambda^2)}{[E(\vec{k})- m_0-u_0]^2 - (\vec{k}-\vec{u})^2 - \lambda^2} \ .
\label{eq:V:ef}
\end{flalign}
The denominator vanishes only if
\begin{gather}
u_0 \pm \sqrt{(\vec{k}-\vec{u})^2 + \lambda^2} = E(\vec{k}) - m_0 \ ,
\end{gather}
which is satisfied for no real value of $\vec{k}$ if $(u,\lambda^2)$ is in the domain $\tilde{\mathcal{S}}$.

As pointed out in~\cite{Dyson:1997gw}, if $(u,\lambda^2)$ is in the domain $\tilde{\mathcal{S}}$ then necessarily
\begin{gather}
|u_0| + |\vec{u}| \le m_0 \ .
\end{gather}
Thanks to this, it is easy to show that two positive constants $\kappa$ and $\alpha$ exist such that the denominator in eq.~\eqref{eq:V:ef} is limited from below by
\begin{gather}
| [E(\vec{k})- m_0-u_0]^2 - (\vec{k}-\vec{u})^2 - \lambda^2 | \ge | u_0^2 - \vec{u}^2 - \lambda^2 | - \alpha |\vec{k}| \ge \Delta^2 - \alpha |\vec{k}|
\end{gather}
for any $(u,\lambda^2) \in \tilde{\mathcal{S}}$ and for any complex $\vec{k}$ such that $|\vec{k}|<\kappa$. In the last step we have used eqs.~\eqref{eq:V:JLD-S-tilde} for $\vec{k}=\vec{0}$. From the above bound it is clear that if $\kappa$ is small enough, then the denominator never vanishes. Therefore $f(\vec{k}^2)$ can be continued by analyticity to a complex neighbourhood of $\vec{k}^2 = 0$.

From eq.~\eqref{eq:reduction-2} it might seem that $Z_{\text{1P}}(\vec{k}^2)$ has a singularity in $\vec{k} \to \vec{0}$. However this limit is fixed by symmetries:
\begin{gather}
\lim_{\vec{k} \to \vec{0}} Z_{\text{1P}}(\vec{k}^2) = \lim_{\vec{k} \to \vec{0}}  \frac{1}{d_s} \sum_{\mu,\sigma,\sigma'} \frac{g^{\mu\mu}}{2E(\vec{k})} | \langle h(\vec{0}),\sigma | J_\mu(0) | h(\vec{k}) , \sigma' \rangle|^2 = 2m_0 q^2 \ ,
\label{eq:first-universal-term}
\end{gather}
where $q$ is the electric charge of the hadron $h$. This relation implies that $f(0) = 0$ and $Z_{\text{1P}}(\vec{k}^2)$ is an analytic function for any real value of $\vec{k}^2$, and for complex values of $\vec{k}^2$ in a neighbourhood of zero.

\paragraph{Analysis of $Z_{\text{MP}}(k_0,\vec{k}^2)$.}

$Z_{\text{MP}}(k_0,\vec{k}^2)$ is obtained by selecting all poles in $W_+(k)$, or equivalently in $C_+(k)$, with $\Re k_0 > E(\vec{k})-m_0$. In this case we find more convenient to write $C_+(k)$ in terms of the auxiliary retarded commutator
\begin{gather}
\hat{C}_+(k) = \imath \, \lim_{\vec{p} \to \vec{0}} \sum_\sigma \int d^4 x \, e^{\imath k  x} \theta(x_0) \langle h(\vec{p}) | [ \bar{J}_\mu(x) , J^\mu(0) ] | h(\vec{0}) \rangle \ .
\label{eq:Chatplus}
\end{gather}
In complete analogy to eq.~\eqref{eq:separation-tilde}, the original retarded commutator can be written in terms of the auxiliary one as
\begin{gather}
(k^2 + 2 m_0 k_0) C_+(k) = \hat{C}_+(k) + \hat{P}(k) \ ,
\label{eq:separation-hat}
\end{gather}
where $\hat{P}(k)$ is a polynomial in the quadrimomentum $k$. The above equation can be inverted by noticing that all poles of $C_+(x)$ have negative imaginary part. We introduce a JLD representation for the retarded commutator $\hat{C}_+(k)$ and we get
\begin{gather}
C_+(k) = \frac{1}{k^2 + 2 m_0 k_0 + \imath (k_0 + m_0) \epsilon} \left\{ \hat{P}(k) + \int_{\hat{\mathcal{S}}} \frac{d^4u d\lambda^2 \ \hat{\rho}(u,\lambda^2)}{(k-u)^2 - \lambda^2 + \imath (k_0-u_0) \epsilon} \right\}
\ .
\label{eq:V:JLD-hat}
\end{gather}
The integration domain $\hat{\mathcal{S}}$ is the set of 5-tuples $(u,\lambda^2)$ such that
\begin{gather}
\begin{dcases}
\sqrt{(\vec{u}-\vec{k})^2+\lambda^2} + u_0 \ge \sqrt{(m+\Delta)^2 + \vec{k}^2} - m \\
\sqrt{(\vec{u}-\vec{k})^2+\lambda^2} - u_0 \ge \sqrt{m^2 + \vec{k}^2} - m
\end{dcases}
\ ,\label{eq:V:JLD-S-hat}
\end{gather}
for any value of the momentum $\vec{k}$. The denominator outside of the integral in eq.~\eqref{eq:V:JLD-hat} has poles for $\Re k_0 \le E(\vec{k})-m_0$ which do not contribute to $Z_{\text{MP}}(k_0,\vec{k}^2)$. The integrand can be decomposed in partial fractions
\begin{gather}
\frac{1}{(k-u)^2 - \lambda^2 + \imath (k_0-u_0) \epsilon} =
\frac{1}{2X} \left( \frac{1}{k_0 - u_0 - X + \imath \epsilon} - \frac{1}{k_0 - u_0 + X + \imath \epsilon} \right) \ ,
\\
X = \sqrt{(\vec{k}-\vec{u})^2 + \lambda^2} \ ,
\end{gather}
and only the first one contributes with a pole to $Z_{\text{MP}}(k_0,\vec{k}^2)$. By calculating the residue at this pole we get
\begin{gather}
%k^2 + 2 m_0 k_0 = (k_0 + m_0)^2 - E^2(\vec{k}) = [k_0 + m_0 + E(\vec{k})] [k_0 + m_0 - E(\vec{k})]
%\\
Z_{\text{MP}}(k_0,\vec{k}^2)
=
\int_{\hat{\mathcal{S}}} d^4u d\lambda^2 \ 
\frac{\hat{\rho}(u,\lambda^2)}{2 X [(u_0 + X + m_0)^2 - E(\vec{k})^2]} \ 
\frac{1}{k_0 - u_0 - X + \imath \epsilon} \ .
\end{gather}
Using the definition of the domain $\hat{\mathcal{S}}$ it is straightforward to check that the denominator $2 X [(u_0 + X + m_0)^2 - E(\vec{k})^2]$ never vanishes for any real value of $\vec{k}$ and for any value of $(u,\lambda^2) \in \hat{\mathcal{S}}$. We derive some particular properties.

The Wick-rotated function
\begin{gather}
Z_{\text{MP}}(-\imath k_0,\vec{k}^2)
=
\int_{\hat{\mathcal{S}}} d^4u d\lambda^2 \ 
\frac{\hat{\rho}(u,\lambda^2)}{2 X [(u_0 + X + m_0)^2 - E(\vec{k})^2]} \ 
\frac{1}{-\imath k_0 - u_0 - X} \ .
\end{gather}
is analytical for any real value of $k$, since the two denominators never vanish (as $u_0+X>0$).

The on-shell function
\begin{gather}
Z_{\text{MP}}(-|\vec{k}|,\vec{k}^2)
=
\int_{\hat{\mathcal{S}}} d^4u d\lambda^2 \ 
\frac{\hat{\rho}(u,\lambda^2)}{2 X [(u_0 + X + m_0)^2 - E(\vec{k})^2]} \ 
\frac{1}{-|\vec{k}| - u_0 - X} \ .
\end{gather}
is analytical for any real value of $\vec{k}$, since the two denominators never vanish (as $|\vec{k}| + u_0 + X \ge u_0+X>0$). Moreover, as a function of $|\vec{k}|$, $Z_{\text{MP}}(-|\vec{k}|,\vec{k}^2)$ can be analytically continued to a complex neighbourhood of $|\vec{k}|=0$.

The on-shell function
\begin{gather}
Z_{\text{MP}}(|\vec{k}|,\vec{k}^2)
=
\int_{\hat{\mathcal{S}}} d^4u d\lambda^2 \ 
\frac{\hat{\rho}(u,\lambda^2)}{2 X [(u_0 + X + m_0)^2 - E(\vec{k})^2]} \ 
\frac{1}{|\vec{k}| - u_0 - X + \imath \epsilon} \ .
\end{gather}
is analytical for real values of $\vec{k}$ such that
\begin{gather}
|\vec{k}| < \Delta \ ,
\end{gather}
as in this range the denominator $|\vec{k}| - u_0 - X$ can be shown not to vanish for any value of $(u,\lambda^2) \in \hat{\mathcal{S}}$. As a function of $|\vec{k}|$, $Z_{\text{MP}}(|\vec{k}|,\vec{k}^2)$ can be analytically continued to a complex neighbourhood of $|\vec{k}|=0$.

\section{Classical vacua of compact \CQED{}}
\label{app:compact}
We consider an abelian gauge field on a lattice with \Cstar{} boundary conditions along the directions included in the set $\mathcal{C}$
\begin{gather}
U(x+\hat L_\mu,\rho) = \begin{cases}
U(x,\rho) & \text{if } \mu \not\in \mathcal{C} \\
U(x,\rho)^* \quad & \text{if } \mu \in \mathcal{C}
\end{cases} \ ,
\end{gather}
where the coordinates are integer numbers in the range
\begin{gather}
0 \le x_\mu \le L_\mu-1 \ . \label{appA:domain}
\end{gather}
We assume direction $\mu=3$ \Cstar{}-periodic, and direction $\mu=0$ periodic.

We want to characterise all gauge-field configurations corresponding to absolute minima of the Wilson action. In terms of the plaquette $P(x,\mu,\nu)$ the minimum condition reads
\begin{gather}
P(x,\mu,\nu) = 1 \ .
\label{appA:minimum_condition}
\end{gather}
We can always gauge-transform  to axial gauge along a given direction $\mu$, i.e. to a gauge in which all the link variables $U(x,\mu)$ are equal to one except the ones on the hyperplane $\pi_\mu$ defined by the equation
\begin{gather}
\pi_\mu \ : \ x_\mu=L_\mu-1 \ .
\end{gather}
Because of condition~\eqref{appA:minimum_condition} it is easy to show that we can gauge-transform to simultaneous axial gauge for all directions. We will refer to those link variables that are different from unity as \textit{active} link variables (see figure~\ref{fig:app2_lattice}).

\begin{figure}[tb]
\centering

\begin{tikzpicture}

\draw[thick,->] (-.5,-.7)++(1,1) -- ++(0,1.5) node [anchor=south,rotate=90] {$\mu=0$};
\draw[thick,->] (-.7,-.5)++(1,1) -- ++(1.5,0) node [anchor=north] {$\mu=3$};

\fill[red!20] (6,6) rectangle ++(1,1);
\fill[blue!20] (6,2) rectangle ++(1,1);

\draw[black!40,step=1] (0,0)++(1,1) grid (6,6);

\foreach \x in {1,2,...,6}
   \draw[thick,postaction={decorate},decoration={markings,mark=at position .65 with {\arrow[scale=1.4]{latex}}}] (\x,6) -> (\x,7);
\foreach \x in {1,2,...,6}
   \draw[thick,postaction={decorate},decoration={markings,mark=at position .65 with {\arrow[scale=1.4]{latex}}}] (6,\x) -> (7,\x);

\draw[thick,dotted,postaction={decorate},decoration={markings,mark=at position .65 with {\arrow[scale=1.4]{latex}}}] (7,6) -> (7,7);
\draw[thick,dotted,postaction={decorate},decoration={markings,mark=at position .65 with {\arrow[scale=1.4]{latex}}}] (6,7) -> (7,7);

\path (-0.2,6)++(1,0) node [anchor=east] {$\pi_0 \to $};
\path (6,-0.2)++(0,1) node [anchor=east,rotate=90] {$\pi_3 \to $};

\path (6.5,6) node[anchor=north] {$W_3$};
\path (6,6.5) node[anchor=east] {$W_0$};
\path (6.5,7) node[anchor=south] {$W_3$};
\path (7,6.5) node[anchor=west] {$W_0^*$};

\end{tikzpicture}

\caption{\label{fig:app2_lattice} A two-dimensional representation of the problem discussed in this appendix. Direction $\mu=0$ is periodic, while direction $\mu=3$ is \Cstar{}-periodic. In simultaneous axial gauge, the only links that are different from unity are the ones represented with a thick line and an arrow (\textit{active} link variables). The condition that the blue plaquette be equal to one implies that the two active link variables in the plaquette are equal. Nontrivial constraints come from minimum condition for the red plaquette at the intersection of the $\pi_0$ and $\pi_3$ hyperplanes.
}
\end{figure}
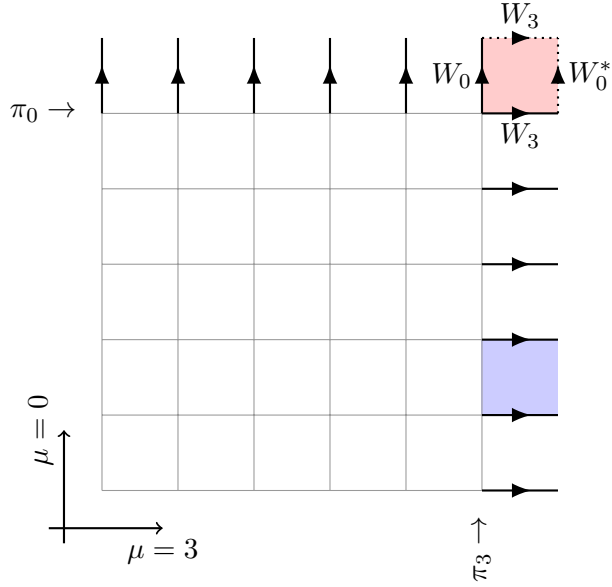

Plaquettes at the intersection of two distinct $\pi$-planes involve four active link variables, all other plaquettes on the $\pi$-planes involve two parallel link variables. Constraint \eqref{appA:minimum_condition} on the latter ones, together with the fact that unity is left unchanged by the boundary conditions, implies
\begin{gather}
U(x,\mu) = U(x+\hat{\nu},\mu) \ , \text{ for any } x \in \pi_\mu, \ \nu \neq \mu \ .
\label{appA:active-links-1}
\end{gather}
Using this equation recursively we get that, given some direction $\mu$, all active link variables along $\mu$ are equal to each other. We will define
\begin{gather}
W_\mu = U(x,\mu) \ , \text{ for any } x \in \pi_\mu \ .
\end{gather}

We use now the minimum condition~\eqref{appA:minimum_condition} for plaquettes at the intersection of two distinct $\pi$-planes. Let us consider first the plaquette in some point $x \in \pi_\mu \cap \pi_3$ where $\mu$ is a periodic direction (see figure~\ref{fig:app2_lattice}):
\begin{gather}
1 = P(x,\mu,3) = W_\mu W_3 W_\mu W_3^{-1} \ ,
\end{gather}
which implies
\begin{gather}
W_\mu = \pm 1 \ , \text{ if } \mu \not\in \mathcal{C} \ .
\end{gather}
If $\mu \neq 3$ is a \Cstar{} direction we get instead
\begin{gather}
1 = P(x,\mu,3) = W_\mu W_3^{-1} W_\mu W_3^{-1} \ ,
\end{gather}
which implies
\begin{gather}
W_\mu = \pm W_3 \ , \text{ if } \mu \in \mathcal{C} \ .
\end{gather}

Finally we show that $W_3$ can be set to $1$ with a gauge transformation. Let $w$ be a complex number such that $W_3 = w^{-2}$ and we define the gauge transformation
\begin{gather}
\Lambda(x) = w \ , \text{ for } 0 \le x_\mu \le L_\mu-1 \ ,
\end{gather}
and extended outside the above domain by means of the boundary conditions
\begin{gather}
\Lambda(x+\hat L_\mu) = \begin{cases}
\Lambda(x) & \text{if } \mu \not\in \mathcal{C} \\
\Lambda(x)^* \quad & \text{if } \mu \in \mathcal{C}
\end{cases} \ .
\end{gather}
First notice that this gauge transformation preserves the gauge-field boundary conditions and the axial gauge. All active link variables along periodic directions are left unchanged under this gauge transformation. If $\mu$ is a \Cstar{}-direction, the active link variable along $\mu$ transforms like
\begin{gather}
W_\mu \to w W_\mu w = W_3^{-1} W_\mu = \pm 1 \ ,
\end{gather}
with the particular case of
\begin{gather}
W_3 \to w W_3 w = W_3^{-1} W_3 = 1 \ .
\end{gather}
This concludes the proof of part 1 of the following proposition.

\begin{proposition}
\label{appA:prop1}
Let $U(x,\mu)$ be a gauge configuration that minimizes the Wilson action.
\begin{enumerate}[noitemsep,topsep=-.7em]
\item A vector $z$ satisfying the conditions
\begin{gather}
\label{appA:prop1_eq4}
z_3=1 \ , \qquad z_\mu^2=1 \ .
\end{gather}
exists such that $U(x,\mu)$ is gauge-equivalent to the gauge configuration
\begin{gather}
\label{appA:prop1_eq3}
\bar{U}_z(x,\mu) = \begin{cases}
z_\mu & \text{if } x_\mu=L_\mu-1 \\
1 & \text{otherwise} \\
\end{cases} \ .
\end{gather}
\item The vector $z$ is unique.
\end{enumerate}
\end{proposition}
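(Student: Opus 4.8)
The plan is to prove part~2 (part~1 has just been established), and the approach is to read off each component of $z$ from a genuinely gauge-(co)variant observable, the Polyakov loop. So suppose $z$ and $z'$ both satisfy eq.~\eqref{appA:prop1_eq4} and that $\bar{U}_z$ and $\bar{U}_{z'}$ are gauge-equivalent, i.e. $\bar{U}_{z'}(x,\mu)=\Lambda(x)\bar{U}_z(x,\mu)\Lambda(x+\hat\mu)^{-1}$ for some $\Lambda(x)\in\U(1)$ obeying the same boundary conditions as the gauge field; the goal is to show $z=z'$. First I would introduce the Polyakov loop $P_\mu(x)=\prod_{j=0}^{L_\mu-1}U(x+j\hat\mu,\mu)$ and evaluate it on the explicit configuration~\eqref{appA:prop1_eq3}: exactly one factor in the product lies on the hyperplane $\pi_\mu$ and equals $z_\mu$, all the others being unity, so $P_\mu(x)=z_\mu$ for every $x$.

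Next I would work out how $P_\mu$ transforms under a gauge transformation: since the gauge group is abelian the product telescopes and $P_\mu(x)\mapsto\Lambda(x)\,\Lambda(x+\hat{L}_\mu)^{-1}\,P_\mu(x)$. For a periodic direction, $\Lambda(x+\hat{L}_\mu)=\Lambda(x)$, the loop is gauge-invariant, and comparing $\bar{U}_z$ with $\bar{U}_{z'}$ immediately gives $z'_\mu=z_\mu$. For a \Cstar-periodic direction, $\Lambda(x+\hat{L}_\mu)=\Lambda(x)^*=\Lambda(x)^{-1}$, hence $P_\mu(x)\mapsto\Lambda(x)^2 P_\mu(x)$, so the loop by itself is no longer gauge-invariant; this is the one point in the argument that needs care.

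The way around it is the normalization condition $z_3=z'_3=1$. Along direction $3$ every link of $\bar{U}_z$ in eq.~\eqref{appA:prop1_eq3} is unity, so $P_3(x)=1$ identically, and likewise for $\bar{U}_{z'}$; the transformation law $P_3\mapsto\Lambda(x)^2 P_3$ then forces $\Lambda(x)^2=1$ for all $x$. Feeding this back into the transformation law for a generic \Cstar-periodic direction $\mu$ gives $z'_\mu=\Lambda(x)^2 z_\mu=z_\mu$. Together with the periodic case this yields $z'=z$, which proves uniqueness. The main (mild) obstacle is thus recognizing that single-direction Polyakov loops along \Cstar{} directions transform covariantly rather than invariantly, and that the fixed value $z_3=1$ is exactly the input needed to pin down the residual factor $\Lambda(x)^2$; everything else is a short telescoping computation.
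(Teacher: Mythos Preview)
Your argument is correct and uses the same basic invariant as the paper (Polyakov/Wilson lines), but the packaging differs slightly. The paper does not pass through the intermediate statement $\Lambda(x)^2=1$; instead it observes directly that for a \Cstar{} direction $\mu$ the \emph{L-shaped} combination $W(\mu)W(3)^{-1}$, with $W(\mu)=\prod_{s=0}^{L_\mu-1}U(s\hat\mu,\mu)$, is itself gauge-invariant, and then reads off $z_\mu=W(\mu)W(3)^{-1}$ (while $z_\mu=W(\mu)$ for periodic $\mu$). Your route and the paper's are equivalent: the paper's ratio is exactly what you would get by dividing your $P_\mu(0)$ by $P_3(0)$ to cancel the covariance factor $\Lambda(0)^2$, whereas you instead use $P_3(x)=1$ on both configurations to force $\Lambda(x)^2=1$ pointwise and then compare the loops directly. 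The paper's version is marginally cleaner in that it names a single gauge-invariant observable per direction; your version has the small bonus of establishing $\Lambda(x)^2=1$ everywhere, which makes transparent that the residual gauge freedom relating any two $\bar U_z$ with the same $z$ is at most the global $\mathbb{Z}_2$.
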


Uniqueness is proven by noticing that the vector $z$ is therefore uniquely determined by the original gauge configuration $U(x,\mu)$
\begin{gather}
z_\mu = \begin{cases}
W(\mu) & \text{if } \mu \not\in \mathcal{C} \\
W(\mu)W(3)^{-1} & \text{if } \mu \in \mathcal{C}
\end{cases} \ ,
\end{gather}
where we have introduced the Wilson lines
\begin{gather}
W(\mu) = \prod_{s=0}^{L_\mu-1} U(s\hat{\mu},\mu) \ .
\end{gather}
It is easy to show that $W(\mu)$ is gauge invariant if and only if $\mu$ is a periodic direction, while the L-shaped parallel transport $W(\mu)W(3)^{-1}$ is gauge invariant if and only if $\mu$ is a \Cstar{}-direction.

\section{Anatomy of the sign problem}
\label{app:sign}
Integration of the fermion fields in a periodic setup yields the
determinant of the Dirac operator. This result relies on the fact that
the Grassman variables $\psi(x)$ and $\bar{\psi}(x)$ are independent,
which is not true in the case of \Cstar{} boundary
conditions. A possible way to get an
explicit expression for the fermionic path integral is to use the
change of variable 
\begin{gather}
\psi_{\pm}(x) = \frac{\psi(x) \pm C^{-1} \bar{\psi}^T(x)}{\sqrt{2}} \ ,
\end{gather}
and to define the new two-component field
\begin{gather}
\eta(x) = \begin{pmatrix}
\psi_+(x) \\ -\imath \psi_-(x) 
\end{pmatrix} \ .
\end{gather}
It is straightforward to verify that \Cstar{} boundary conditions for the field $\psi(x)$ are equivalent to
\begin{gather}
\eta(x+\hat{L}_k) = K \eta(x)
\ , 
\qquad
K = \begin{pmatrix}
1 & 0 \\
0 & -1
\end{pmatrix}
\ ,
\end{gather}
where the matrix $K$ acts on the two components of $\eta$. We will refer to these boundary conditions as $K$ boundary conditions.

By using the identity for the Wilson-Dirac operator (valid for a general non-abelian gauge theory)
\begin{gather}
C^{-1} D[V]^T C = D[V^*] \ ,
\end{gather}
and a few lines of algebra, one can write the fermionic action in
terms of the new fields
\begin{gather}
S_F = \bar{\psi} D[V] \psi = - \frac{1}{2} \eta^T C D[\mathcal{J}(V)] \eta \ ,
\label{eq:sign:action}
\end{gather}
where $D_{\mathcal{J}} \equiv D[\mathcal{J}(V)]$ is the Wilson-Dirac operator calculated with the gauge field $\mathcal{J}(V)$ defined as
\begin{gather}
\mathcal{J}(V) = 1_2 \otimes \Re V + J \otimes \Im V \ ,
\qquad
J = \begin{pmatrix}
0 & -1 \\
1 & 0
\end{pmatrix}
\ .
\end{gather}
The matrices $1_2$ and $J$ act on the two components of $\eta$. Notice that $\mathcal{J}(V)$ defines a representation of the gauge group, unitarily equivalent to the representation defined by $V$. Integration of the fermionic action in the form obtained in eq.~\eqref{eq:sign:action} yields
\begin{gather}
\int_{\text{\Cstar{} b.c.s}} \mathcal{D} \bar{\psi} \mathcal{D} \psi \ e^{- \bar{\psi} D[V] \psi} = \int_{\text{K b.c.s}} \mathcal{D} \eta \ e^{\frac{1}{2} \eta^T C D_{\mathcal{J}} \eta} =
\text{Pf}_K \, C D_{\mathcal{J}} \ ,
\label{eq:sign:pfaffian}
\end{gather}
where the subscript $K$ reminds that the derivative appearing in the
Dirac operator are defined on the space of fields satisfying $K$
boundary conditions, and $C D_{\mathcal{J}}$ is an antisymmetric
complex matrix. 
In eq.~(\ref{eq:sign:pfaffian}) $\text{Pf}_K \, C D_{\mathcal{J}}$ is the Pfaffian of $C D_{\mathcal{J}}$ that, by using the algebraic identities  
\begin{gather}
( \text{Pf}_K \, C D_{\mathcal{J}} )^2 = \text{Det}_K \, C D_{\mathcal{J}} = \text{Det}_K \, D_{\mathcal{J}} \ ,
\label{eq:squarepfaff}
\end{gather}
can be related to the determinant of $D_{\mathcal{J}}$. Algorithms for the lattice simulation of theories involving Pfaffians have been discussed in the context of \Cstar{} boundary conditions 
or the closely-related $\mathrm G$-parity boundary conditions and also in the context of lattice super-symmetric models (see~\cite{Carmona:2000ds,Kelly:2013ana,Kelly:2014usa,Campos:1999du} for a list of references on this subject). 

We shall now discuss if a sign problem is associated to
$\text{Pf}_K\, C D_{\mathcal{J}}$. By using eq.~\eqref{eq:squarepfaff} and the
$\gamma_5$-hermiticity of the Dirac operator, one concludes easily
that the squared Pfaffian is real.  We want to show now that a
stronger result holds: the Pfaffian itself is real. Let us consider
the Pfaffian of the auxiliary operator $C ( D_{\mathcal{J}}-s )$ for a
generic complex number $s$. This Pfaffian is a polynomial in the
matrix elements and in particular in $s$,  
\begin{gather}
\text{Pf}_K \, C ( D_{\mathcal{J}}-s ) = \prod_\alpha (s - \lambda_\alpha)^{m_\alpha} \ ,
\end{gather}
where the $\lambda_\alpha$'s are distinct roots. The overall normalization is determined by the value of the Pfaffian in the $s \to \infty$ limit. By using the relation between the Pfaffian and the determinant we calculate the characteristic polynomial of $D_{\mathcal{J}}$
\begin{gather}
\text{Det}_K \, ( D_{\mathcal{J}}-s ) = [ \text{Pf}_K \, C ( D_{\mathcal{J}}-s ) ]^2 = \prod_\alpha (s - \lambda_\alpha)^{2 m_\alpha} \ .
\end{gather}
The $\lambda_\alpha$'s are the roots of the characteristic polynomial of $D_{\mathcal{J}}$, i.e. they are the eigenvalues of $D_{\mathcal{J}}$. Notice that the algebraic multiplicity of $\lambda_\alpha$ is $2m_\alpha$. Because of $\gamma_5$-hermiticity either the eigenvalues of $D_{\mathcal{J}}$ are real or they appear in pairs of complex conjugates. Since all multiplicities are even, the determinant is positive if $s$ is real, and consequently the Pfaffian is real. For $s=0$ one gets
\begin{gather}
\text{Pf}_K \, C D_{\mathcal{J}} = \prod_{\alpha | \Im \lambda_\alpha = 0} \lambda_\alpha^{m_\alpha} \ \prod_{\alpha | \Im \lambda_\alpha > 0} | \lambda_\alpha^{m_\alpha} |^2 \ .
\label{eq:sign:eigenvalues}
\end{gather}

Once established that the fermionic Pfaffian~\eqref{eq:sign:pfaffian} is real, we need to wonder about its sign. From eq.~\eqref{eq:sign:eigenvalues}, clearly the Pfaffian is negative only if the Dirac operator $D_{\mathcal{J}}$ has some negative eigenvalues, which can happen with Wilson fermions. However, in the continuum limit, the real part of the eigenvalues of the Dirac operator is always positive (and equal to $m$) therefore the Pfaffian is positive. At finite lattice spacing the fermionic Pfaffian~\eqref{eq:sign:pfaffian} has a mild sign problem that is completely analogous to the single-flavour case with periodic boundary conditions.

\bibliographystyle{JHEP}
\bibliography{biblio}

\providecommand{\href}[2]{#2}\begingroup\raggedright\begin{thebibliography}{10}

\bibitem{Duncan:1996xy}
A.~Duncan, E.~Eichten, and H.~Thacker, {\it {Electromagnetic splittings and
  light quark masses in lattice QCD}},  {\em Phys.Rev.Lett.} {\bf 76} (1996)
  3894--3897, [\href{http://arxiv.org/abs/hep-lat/9602005}{{\tt
  hep-lat/9602005}}].

\bibitem{Borsanyi:2014jba}
S.~Borsanyi, S.~Durr, Z.~Fodor, C.~Hoelbling, S.~Katz, et~al., {\it {Ab initio
  calculation of the neutron-proton mass difference}},  {\em Science} {\bf 347}
  (2015) 1452--1455, [\href{http://arxiv.org/abs/1406.4088}{{\tt
  arXiv:1406.4088}}].

\bibitem{deDivitiis:2013xla}
{\bf RM123} Collaboration, G.~de~Divitiis et~al., {\it {Leading isospin
  breaking effects on the lattice}},  {\em Phys.Rev.} {\bf D87} (2013), no.~11
  114505, [\href{http://arxiv.org/abs/1303.4896}{{\tt arXiv:1303.4896}}].

\bibitem{Basak:2014vca}
{\bf MILC} Collaboration, S.~Basak et~al., {\it {Finite-volume effects and the
  electromagnetic contributions to kaon and pion masses}},  {\em PoS} {\bf
  LATTICE2014} (2014) 116, [\href{http://arxiv.org/abs/1409.7139}{{\tt
  arXiv:1409.7139}}].

\bibitem{Ishikawa:2012ix}
T.~Ishikawa, T.~Blum, M.~Hayakawa, T.~Izubuchi, C.~Jung, et~al., {\it {Full
  QED+QCD low-energy constants through reweighting}},  {\em Phys.Rev.Lett.}
  {\bf 109} (2012) 072002, [\href{http://arxiv.org/abs/1202.6018}{{\tt
  arXiv:1202.6018}}].

\bibitem{Aoki:2012st}
S.~Aoki, K.~Ishikawa, N.~Ishizuka, K.~Kanaya, Y.~Kuramashi, et~al., {\it {1+1+1
  flavor QCD + QED simulation at the physical point}},  {\em Phys.Rev.} {\bf
  D86} (2012) 034507, [\href{http://arxiv.org/abs/1205.2961}{{\tt
  arXiv:1205.2961}}].

\bibitem{Blum:2010ym}
T.~Blum, R.~Zhou, T.~Doi, M.~Hayakawa, T.~Izubuchi, et~al., {\it
  {Electromagnetic mass splittings of the low lying hadrons and quark masses
  from 2+1 flavor lattice QCD+QED}},  {\em Phys.Rev.} {\bf D82} (2010) 094508,
  [\href{http://arxiv.org/abs/1006.1311}{{\tt arXiv:1006.1311}}].

\bibitem{Tantalo:2013maa}
N.~Tantalo, {\it {Isospin Breaking Effects on the Lattice}},  {\em PoS} {\bf
  LATTICE2013} (2014) 007, [\href{http://arxiv.org/abs/1311.2797}{{\tt
  arXiv:1311.2797}}].

\bibitem{Portelli:2015wna}
A.~Portelli, {\it {Inclusion of isospin breaking effects in lattice
  simulations}},  {\em PoS} {\bf LATTICE2014} (2015) 013,
  [\href{http://arxiv.org/abs/1505.07057}{{\tt arXiv:1505.07057}}].

\bibitem{Carrasco:2015xwa}
N.~Carrasco, V.~Lubicz, G.~Martinelli, C.~Sachrajda, N.~Tantalo, et~al., {\it
  {QED Corrections to Hadronic Processes in Lattice QCD}},  {\em Phys.Rev.}
  {\bf D91} (2015), no.~7 074506, [\href{http://arxiv.org/abs/1502.00257}{{\tt
  arXiv:1502.00257}}].

\bibitem{Endres:2015gda}
M.~G. Endres, A.~Shindler, B.~C. Tiburzi, and A.~Walker-Loud, {\it {Massive
  photons: an infrared regularization scheme for lattice QCD+QED}},
  \href{http://arxiv.org/abs/1507.08916}{{\tt arXiv:1507.08916}}.

\bibitem{Lehner:2015bga}
C.~Lehner and T.~Izubuchi, {\it {Towards the large volume limit - A method for
  lattice QCD + QED simulations}},  {\em PoS} {\bf LATTICE2014} (2015) 164,
  [\href{http://arxiv.org/abs/1503.04395}{{\tt arXiv:1503.04395}}].

\bibitem{lehnerlattice2015}
C.~Lehner, T.~Izubuchi, and L.~Jin, ``{Improving the volume-dependence of
  lattice QCD+QED simulations}.'' 33rd International Symposium on Lattice Field
  Theory, 2015.

\bibitem{Hayakawa:2008an}
M.~Hayakawa and S.~Uno, {\it {QED in finite volume and finite size scaling
  effect on electromagnetic properties of hadrons}},  {\em Prog.Theor.Phys.}
  {\bf 120} (2008) 413--441, [\href{http://arxiv.org/abs/0804.2044}{{\tt
  arXiv:0804.2044}}].

\bibitem{Davoudi:2014qua}
Z.~Davoudi and M.~J. Savage, {\it {Finite-Volume Electromagnetic Corrections to
  the Masses of Mesons, Baryons and Nuclei}},  {\em Phys.Rev.} {\bf D90}
  (2014), no.~5 054503, [\href{http://arxiv.org/abs/1402.6741}{{\tt
  arXiv:1402.6741}}].

\bibitem{Fodor:2015pna}
Z.~Fodor, C.~Hoelbling, S.~Katz, L.~Lellouch, A.~Portelli, et~al., {\it
  {Quantum electrodynamics in finite volume and nonrelativistic effective field
  theories}},  \href{http://arxiv.org/abs/1502.06921}{{\tt arXiv:1502.06921}}.

\bibitem{Polley:1993bn}
L.~Polley, {\it {Boundaries for SU(3)(C) x U(1)-el lattice gauge theory with a
  chemical potential}},  {\em Z. Phys.} {\bf C59} (1993) 105--108.

\bibitem{Wiese:1991ku}
U.~Wiese, {\it {C periodic and G periodic QCD at finite temperature}},  {\em
  Nucl.Phys.} {\bf B375} (1992) 45--66.

\bibitem{Kronfeld:1990qu}
A.~S. Kronfeld and U.~Wiese, {\it {SU(N) gauge theories with C periodic
  boundary conditions. 1. Topological structure}},  {\em Nucl.Phys.} {\bf B357}
  (1991) 521--533.

\bibitem{Kronfeld:1992ae}
A.~S. Kronfeld and U.~Wiese, {\it {SU(N) gauge theories with C periodic
  boundary conditions. 2. Small volume dynamics}},  {\em Nucl.Phys.} {\bf B401}
  (1993) 190--205, [\href{http://arxiv.org/abs/hep-lat/9210008}{{\tt
  hep-lat/9210008}}].

\bibitem{Lee:2015rua}
J.-W. Lee and B.~C. Tiburzi, {\it {On Finite Volume Corrections to the
  Electromagnetic Mass of Composite Particles}},
  \href{http://arxiv.org/abs/1508.04165}{{\tt arXiv:1508.04165}}.

\bibitem{Luscher:1985dn}
M.~Luscher, {\it {Volume Dependence of the Energy Spectrum in Massive Quantum
  Field Theories. 1. Stable Particle States}},  {\em Commun.Math.Phys.} {\bf
  104} (1986) 177.

\bibitem{Dirac:1955uv}
P.~A. Dirac, {\it {Gauge invariant formulation of quantum electrodynamics}},
  {\em Can.J.Phys.} {\bf 33} (1955) 650.

\bibitem{HaagBook}
R.~Haag, {\em Local Quantum Physics: Fields, Particles, Algebras}.
\newblock Springer, 2nd rev. and enlarged~ed., August, 1996.

\bibitem{StrocchiBook}
F.~Strocchi, {\em An Introduction to the Non-Perturbative Foundations of
  Quantum Field Theory}.
\newblock Oxford University Press, 1st~ed., March, 2013.

\bibitem{Polley:1990tf}
L.~Polley and U.~Wiese, {\it {Monopole condensate and monopole mass in U(1)
  lattice gauge theory}},  {\em Nucl.Phys.} {\bf B356} (1991) 629--654.

\bibitem{Cottingham:1963zz}
W.~Cottingham, {\it {The neutron proton mass difference and electron scattering
  experiments}},  {\em Annals Phys.} {\bf 25} (1963) 424--432.

\bibitem{Collins:1978hi}
J.~C. Collins, {\it {Renormalization of the Cottingham Formula}},  {\em
  Nucl.Phys.} {\bf B149} (1979) 90.

\bibitem{ProgressBook}
M.~{L\"uscher}, {\it On a relation between finite size effects and elastic
  scattering processes},  in {\em Progress in Gauge Field Theory} (G.~'t~Hooft,
  A.~Jaffe, G.~Lehmann, P.~Mitter, and I.~Singer, eds.).
\newblock Springer US, 1st~ed., 1984.

\bibitem{Low:1954kd}
F.~Low, {\it {Scattering of light of very low frequency by systems of spin
  1/2}},  {\em Phys.Rev.} {\bf 96} (1954) 1428--1432.

\bibitem{GellMann:1954kc}
M.~Gell-Mann and M.~Goldberger, {\it {Scattering of low-energy photons by
  particles of spin 1/2}},  {\em Phys.Rev.} {\bf 96} (1954) 1433--1438.

\bibitem{WeinbergBook}
S.~Weinberg, {\em The Quantum Theory of Fields}, vol.~1.
\newblock Cambridge University Press, 1995.

\bibitem{BogoBook}
N.~N. Bogolyubov, A.~A. Logunov, A.~I. Oksak, and I.~T. Todorov, {\em General
  principles of quantum field theory}.
\newblock Kluwer Academic Publishers, 1st~ed., 1990.

\bibitem{JLpaper}
R.~Jost and H.~Lehmann, {\it Integral-darstellung kausaler kommutatoren},  {\em
  Il Nuovo Cimento} {\bf 5} (1957) 1598--1610.

\bibitem{Dyson:1997gw}
F.~Dyson, {\it {Integral representations of causal commutators}},  {\em
  Phys.Rev.} {\bf 110} (1958) 1460--1464.

\bibitem{Carmona:2000ds}
J.~M. Carmona, M.~D'Elia, A.~Di~Giacomo, and B.~Lucini, {\it {Implementation of
  C* boundary conditions in the hybrid Monte Carlo algorithm}},  {\em
  Int.J.Mod.Phys.} {\bf C11} (2000) 637--654,
  [\href{http://arxiv.org/abs/hep-lat/0003002}{{\tt hep-lat/0003002}}].

\bibitem{Kelly:2013ana}
C.~Kelly, {\it {Progress Towards an ab initio, Standard Model Calculation of
  Direct CP-Violation in K-decays}},
  \href{http://arxiv.org/abs/1310.0434}{{\tt arXiv:1310.0434}}.

\bibitem{Kelly:2014usa}
{\bf RBC, UKQCD} Collaboration, C.~Kelly, T.~Blum, N.~Christ, A.~Lytle, and
  C.~Sachrajda, {\it {Progress Towards an ab initio, Standard Model Calculation
  of Direct CP-Violation in K-decays}},  {\em PoS} {\bf LATTICE2013} (2014)
  401.

\bibitem{Campos:1999du}
{\bf DESY-Munster} Collaboration, I.~Campos et~al., {\it {Monte Carlo
  simulation of SU(2) Yang-Mills theory with light gluinos}},  {\em
  Eur.Phys.J.} {\bf C11} (1999) 507--527,
  [\href{http://arxiv.org/abs/hep-lat/9903014}{{\tt hep-lat/9903014}}].

\end{thebibliography}\endgroup

\end{document}